\documentclass[journal,10pt]{IEEEtran}

\IEEEquantizetextheight{c}
\IEEEsettextwidth{14mm}{14mm}
\IEEEsetsidemargin{c}{0mm}

\usepackage[utf8]{inputenc}
\usepackage[T1]{fontenc}
\usepackage{url}
\usepackage{ifthen}
\usepackage[cmex10]{amsmath} 
\usepackage{amssymb}
\usepackage{algorithm}
\usepackage{algpseudocode}
\usepackage{graphicx}
\usepackage{color}
\usepackage{xcolor}
\usepackage{cite}
\usepackage{subcaption}
\usepackage{comment}
\usepackage{mathtools}
\usepackage{tabulary}
\usepackage[nohyperlinks]{acronym}
\usepackage{upgreek}
\usepackage{placeins}
\usepackage[super]{nth}
\usepackage{hyperref}
\usepackage{array}
\usepackage{soul}
\usepackage{dsfont}
\usepackage{cancel}
\usepackage{multirow}
\usepackage{bm}
\usepackage{bbm}
\usepackage{amsthm}
\usepackage[acronym]{glossaries}
\usepackage{balance}
\usepackage{mathrsfs} 

\glsdisablehyper

\newtheorem{proposition}{\bf Proposition}
\newtheorem{lemma}{\bf Lemma}

\newtheorem{definition}{\bf Definition}
\newtheorem{remark}{Remark}

\newcommand{\tightdots}{%
  \mathinner{\ldotp\mkern-2mu\ldotp\mkern-2mu\ldotp}%
}

\newacronym{RIS}{RIS}{Reconfigurable Intelligent Surface}
\newacronym{NL}{NL}{NonLinear}
\newacronym{SIM}{SIM}{Stacked Intelligent Metasurfaces}
\newacronym{DNN}{DNN}{Deep artificial Neural Network}
\newacronym{NN}{NN}{neural network}
\newacronym{EI}{EI}{Edge Inference}
\newacronym{EM}{EM}{ElectroMagnetic}
\newacronym{TOC}{GOC}{Goal-Oriented Communications}
\newacronym{TX}{TX}{Transmitter}
\newacronym{RX}{RX}{Receiver}
\newacronym{UE}{UE}{User Equipment}
\newacronym{BS}{BS}{Base Station}
\newacronym{LoS}{LoS}{Line-of-Sight}
\newacronym{NLoS}{NLoS}{Non Line-of-Sight}
\newacronym{CNN}{CNN}{Convolutional Neural Network}
\newacronym{FFNN}{FFNN}{Feed-Forward Neural Network}
\newacronym{ReLU}{ReLU}{Rectified Linear Unit}
\newacronym{JSCC}{JSCC}{Joint Source Channel Coding}
\newacronym{E2E}{E2E}{End-to-End}
\newacronym{D2D}{D2D}{Device-to-Device}
\newacronym{DeepSC}{DeepSC}{Deep Semantic Communications}
\newacronym{SISO}{SISO}{Single Input Single Output}
\newacronym{MISO}{MISO}{Multiple Input Single Output}
\newacronym{MIMO}{MIMO}{Multiple-Input Multiple-Output}
\newacronym{OAC}{OAC}{Over-the-Air Computation}
\newacronym{MSE}{MSE}{Mean Squared Error}
\newacronym{SOTA}{SotA}{State of the Art}
\newacronym{SGD}{SGD}{Stochastic Gradient Descent}
\newacronym{IoT}{IoT}{Internet of Things}
\newacronym{AE}{AE}{Auto-Encoder}
\newacronym{PDF}{PDF}{Probability Density Function}
\newacronym{AWGN}{AWGN}{Additive White Gaussian Noise}
\newacronym{CSI}{CSI}{Channel State Information}
\newacronym{CFR}{CFR}{Channel Frequency Response}
\newacronym{ISAC}{ISAC}{Integrated Sensing and Communications}
\newacronym{iid}{i.i.d.}{independent and identically distributed}
\newacronym{MINN}{MINN}{Metasurfaces-Integrated Neural Network}
\newacronym{MS}{MS}{MetaSurface}
\newacronym{CE}{CE}{Cross Entropy}
\newacronym{HRIS}{HRIS}{Hybrid RIS}
\newacronym{RF}{RF}{Radio-Frequency}
\newacronym{VAE}{VAE}{Variational Auto-Encoder}
\newacronym{PSK}{PSK}{Phase Shift Keying}
\newacronym{SVD}{SVD}{Singular Value Decomposition}
\newacronym{MLP}{MLP}{Multi Layer Perceptron}
\newacronym{SNR}{SNR}{Signal-to-Noise Ratio}
\newacronym{MI}{MI}{Mutual Information}
\newacronym{MEC}{MEC}{Multi-access Edge Computing}
\newacronym{WMMSE}{WMMSE}{Weighted Minimum Mean Square Error}
\newacronym{ANN}{ANN}{Artificial Neural Network}
\newacronym{SLFN}{SLFN}{Single hidden Layer Feedforward Network}
\newacronym{ELM}{ELM}{Extreme Learning Machine}
\newacronym{LTI}{LTI}{Linear Time Invariant}
\newacronym{OTA}{OTA}{Over-The-Air}
\newacronym{AM}{AM}{Amplitude Modulation}
\newacronym{LS}{LS}{Least-Squares}
\newacronym{DC}{DC}{Direct Current}
\newacronym{RFC}{RFC}{Radio-Frequency Chain}
\newacronym{ML}{ML}{Machine Learning}
\newacronym{XL}{XL}{eXtremely Large}
\newacronym{PGD}{PGD}{Projected Gradient Descent}
\newacronym{WBCD}{WBCD}{Wisconsin Breast Cancer Dataset}
\newacronym{DMA}{DMA}{Dynamic Metasurface Antenna}
\newacronym{CMS}{CMS}{Cascaded MetaSurfaces}
\newacronym{GO}{GO}{Genetic Optimization}
\newacronym{DFT}{DFT}{Discrete Fourier Transform}
\newacronym{UPA}{UPA}{Uniform Planar Array}
\newacronym{ZF}{ZF}{Zero Forcing}

\title{Extreme Learning Machines in the Wave Domain\\ via XL MIMO with Stacked Intelligent Metasurfaces}
\title{XL MIMO Extreme Learning Machines: Edge Inference with Stacked Heterogeneous Metasurfaces}
\title{XL MIMO Extreme Learning Machines in the Wave Domain: Inference with Stacked Metasurfaces}
\title{XL MIMO Extreme Learning Machines:\\ Over-the-Air Inference with Stacked Metasurfaces}
\title{XL MIMO Extreme Learning Machines:\\ Wave-Domain Inference with Stacked Metasurfaces}

\author{Kyriakos Stylianopoulos,~\IEEEmembership{Graduate~Student~Member,~IEEE}, Mattia Fabiani,~\IEEEmembership{Graduate~Student~Member,~IEEE},\\ Davide Dardari,~\IEEEmembership{Fellow,~IEEE}, and George C. Alexandropoulos,~\IEEEmembership{Senior~Member,~IEEE}\vspace{-0.2cm} 
\thanks{A preliminary version of this paper has been presented at IEEE SPAWC, Athens, Greece, September 2026~\cite{Our_SPAWC_paper}.}
\thanks{K. Stylianopoulos and G. C. Alexandropoulos are with the Department of Informatics and Telecommunications, National and Kapodistrian University of Athens, 16122 Athens, Greece (e-mails: \{kstylianop, alexandg\}@di.uoa.gr).}
\thanks{M. Fabiani and D. Dardari are with the Dipartimento di Ingegneria dell’Energia Elettrica e dell’Informazione ``Guglielmo Marconi'' (DEI), WiLAB-CNIT, University of Bologna, Cesena Campus, Cesena (FC), Italy, (e-mails: \{mattia.fabiani5, davide.dardari\}@unibo.it).}
\thanks{This work has been supported by the SNS JU projects 6G-DISAC and TIMES under the EU's Horizon Europe research and innovation program under grant agreement numbers 101139130 and 101096307, respectively.
}}
  
\begin{document}
\maketitle

\begin{abstract}
The recently envisioned goal-oriented communications paradigm requires machine learning inference to be performed directly on wirelessly transferred data, by splitting neural network computations between the transmitter, the Receiver (RX), and the wireless channel.
Such approaches aim to offload large portions of the computation onto the channel itself, which is often controlled by reconfigurable \glspl{MS}. Typical MS-parametrized systems, however, enable only limited computation offloading in the wave domain, usually corresponding to a single neural network layer, due to adverse channel fading effects and the almost linear operation of the channel and conventional MS models. In this paper, we demonstrate that an eXtremely Large (XL) Multiple-Input Multiple-Output (MIMO) system equipped with Stacked Intelligent MSs (SIM) at the RX side, together with a single antenna and its accompanying radio-frequency chain, may act as a two-layer neural network operating exclusively over the air. The first (uncontrollable) network layer corresponds to the transformations offered by wireless fading, while the second (controllable) layer is implemented via SIM operations. In particular, only the first, front-end \gls{MS} layer needs to exhibit nonlinear response to serve as an activation function.
Leveraging the Extreme Learning Machine (ELM) framework, the proposed XL MIMO system architecture is shown to be a universal approximator, and the optimal linear-SIM-based combining is derived in closed form, allowing for fast training even and partial signal observability.
Numerical evaluations across diverse datasets showcase that the proposed XL-MIMO-ELM framework achieves classification accuracy comparable to idealized digital models, thereby, proving the viability of low complexity, wave-domain neural computations.

\end{abstract}

\begin{IEEEkeywords}
Over-the-air inference, extreme learning machines, nonlinear processing, XL MIMO, SIM, wave-domain computing.
\end{IEEEkeywords}

\section{Introduction}

Future wireless networks are expected to support increasingly demanding local processing of sensory observations under stringent communication and computational constraints.
Conventional architectures, which separate the acquisition, transmission, and processing of data, become inefficient in such settings, as the \gls{TX} typically extracts and encodes features without direct knowledge of the downstream inference task, while the \gls{RX} first reconstructs the transmitted data before performing the desired inference.
This separation can lead to unnecessary data transmission, increased processing complexity, and redundant computations at both ends of the communication link.
To address these limitations, future wireless systems will leverage \gls{EI}, where the \gls{TX} and \gls{RX} are jointly trained as an end-to-end \gls{ML} model tailored to a target inference task~\cite{GO_review2}.
Instead of transmitting a representation designed for faithful data reconstruction, \gls{EI} enables the \gls{TX} to extract task-relevant features while simultaneously encoding them for transmission, allowing the \gls{RX} to directly infer the target values without reconstructing the original input~\cite{GJZ24_SIM_TOC, Stylianopoulos_GO}.
By aligning communication and inference within a single end-to-end architecture, \gls{EI} reduces both the amount of information that needs to be conveyed over the wireless channel and the computational resources required for downstream processing.

To further alleviate the computational costs associated with end-to-end wireless intelligence, the emerging \gls{OTA} computing paradigm exploits the intrinsic superposition property of wireless propagation to perform computations directly in the wave domain~\cite{OTA_review}. To this end, rather than receiving individual signals and processing them digitally, multiple transmitted waveforms can naturally combine over the wireless channel, thereby implementing part of the desired computation through their physical propagation.
This property has motivated a growing body of work on wireless physical \glspl{DNN}, which exploit waveform superposition together with transceiver processing for multi-antenna systems to realize computational transformations entirely \gls{OTA}~\cite{reus2023airfc,yang2023over}.

In parallel, recent advancements in metamaterial sciences have resulted in MetaSurface (MS) structures capable to perform \gls{DNN}-equivalent operations by shaping the electromagnetic signals being diffracted through them~\cite{XYN18_Diffractive_DNN,Momeni2022_wave_DL_computing}, giving rise to the concept of wave-domain signal processing~\cite{9475155,DarTorPasDec:J26}.
Subsequent approaches integrated \gls{MS} structures~\cite{AlexandropoulosRIS} into wireless systems to perform \gls{OTA} computations enabled by wave-domain signal processing, thus establishing the propagation environment as a trainable hidden module within an end-to-end \gls{ML} model~\cite{Stylianopoulos_GO,GJZ24_SIM_TOC,Styl:25_Mag}.
Hardware platforms based on \glspl{MS} have been proposed to emulate \gls{DNN} layers for \gls{OTA} inference, with their physical parameters trained jointly through backpropagation~\cite{Stylianopoulos_GO,GJZ24_SIM_TOC}, or to approximate digitally trained \gls{DNN} weight matrices directly in the wave domain~\cite{Gunduz_Layer_Approximation,Pandolfo_SIM_semantic_alignment}.
Relevant architectures have exploited programmable multipath propagation and frequency-selective channels to emulate one- and two-dimensional convolutional operations~\cite{sanchez2022airnn,zhang2024radio,hua2026cnns}. However, despite this progress, the computational transformations realized by many existing physical \gls{MS} architectures remain predominantly linear.
Consequently, the computational savings obtained from moving linear operations to the physical domain may be insufficient to compensate for the additional overhead associated with channel acquisition, analog hardware configuration, and overall system control.

Introducing \gls{NL} functionality into the wireless medium is therefore essential for realizing deeper and more expressive physical neural architectures, as well as for providing stronger theoretical guarantees, such as universal approximation~\cite{Cyb:89}) and faster training procedures. In this regard, several recent approaches have explored this direction by constructing more elaborate propagation and hardware configurations, under the wider umbrella of analog computing.
For example, neural depth has been introduced through iterative transmissions over distributed reconfigurable channels~\cite{liu2025over}, while inter-element coupling under rich-scattering conditions has been very recently exploited to approximate \gls{NL} functionality without explicitly introducing conventional activation layers~\cite{hammami2026expressivity}.
Other architectures employ multi-hop relay networks in which \gls{NL} power amplifiers provide the activation functions required to build deeper neural computations~\cite{bergel2024non,hua2026multi}.
However, these approaches generally rely on dedicated multi-stage physical topologies, repeated signal transmissions, or specialized \gls{NL} hardware.
These observations motivate physical architectures that can introduce \gls{NL} representation capabilities, while retaining the low complexity and rapid reconfigurability of wave-domain computations.

More closely related to this paper, a wave-domain computing implementation for \gls{XL} \gls{MIMO} systems has been recently proposed in~\cite{Stylianopoulos_MIMO_ELM}, the architecture of which operates according to the \gls{ELM} framework~\cite{HuaZhuSie:06}, enabling the hidden layer of the \gls{SLFN} to be executed entirely \gls{OTA}.
The activation function and the weights of the output layer have been assumed to be realized through analog combining hardware, exploiting the nonlinearities introduced by power amplifiers operating in their saturation region~\cite{Rapp_model}. 
Since only the output-layer parameters need to be trained, the architecture enables rapid training and efficient reconfiguration when the wireless channel changes, while preserving the universal function approximation capability of \glspl{ELM}.
Despite providing a way for \glspl{SLFN} \gls{OTA} realization in traditional wireless systems, the original formulation is subject to practical limitations and scalability constraints, particularly due to the real-valued signal representation and the hardware complexity associated with the inclusion of multiple \gls{NL} analog components.
These limitations motivate the development of more practical and scalable physical solutions capable of retaining the computational advantages of the \gls{ELM}-based \gls{EI} approach, while better matching the constraints of wireless hardware components.

In this paper, capitalizing on the complex-domain \gls{ELM} framework~\cite{HuaLiCheSie:08} and building upon recent \gls{NL} metamaterial advancements~\cite{Styl:25_Mag,FabTorDar:25,Omr:25}, we introduce a novel \gls{XL}-\gls{MIMO}-\gls{ELM} system architecture incorporating an \gls{RX} structure comprising heterogeneous \gls{SIM}~\cite{AXN23_SIM} followed by a single antenna and its associated~\gls{RFC}. 
The first, front-end \gls{MS} layer at the \gls{RX} interfacing with the \gls{MIMO} channel is composed of metamaterials exhibiting identical fixed \gls{NL} responses and implements the \gls{ELM} activation function, while the subsequent linear \gls{MS} layers, forming a conventional linear \gls{SIM}~\cite{SIM_survey}, realize trainable wave-domain combining, effectively approximating the digital \gls{ELM} output weights. 
The proposed \gls{XL}-\gls{MIMO}-\gls{ELM} framework enables fast (gradient-free) training and minimizes digital processing at the \gls{RX} side, while significantly reducing the hardware complexity with respect to the system architecture in~\cite{Stylianopoulos_MIMO_ELM}.
In particular, the contributions of this paper are summarized as follows:
\begin{itemize}
    \item We introduce a novel \gls{XL} \gls{MIMO} system architecture incorporating a \gls{SIM} with heterogeneous \gls{NL}/linear \gls{MS} layers at the \gls{RX} side, highlighting the required analog components for activation-function-like \gls{NL} front-end operation.
    \item We present a novel mathematical formulation showcasing that the proposed \gls{XL} \gls{MIMO} system operates as an \gls{ELM} on the complex domain, thereby positioning our \gls{XL}-\gls{MIMO}-\gls{ELM} framework as an \gls{EI} model realized exclusively \gls{OTA} via dedicated wave-domain signal processing.
    \item We describe the training procedure for our \gls{XL}-\gls{MIMO}-\gls{ELM}, and design a reduced overhead alternative tailored to \glspl{RX} equipped with a single antenna and respective \gls{RFC}.
    \item A theoretical proof of the universal approximation capabilities of our \gls{XL}-\gls{MIMO}-\gls{ELM} system is presented, following mathematical steps similar to conventional neural networks. 
    \item We present a closed-form approach for realizing single-antenna analog combining entirely in the wave domain via configuration of the last linear \gls{MS} layer of the \gls{RX} \gls{SIM}.
\end{itemize}
Overall, this paper presents the first fully analog neural network accommodating: \textit{i)} an analog multi-antenna \gls{TX}; \textit{ii)} Ricean \gls{MIMO} channel conditions; and \textit{iii)} \gls{MS}-based activation and combining realized in the wave domain, while simultaneously providing closed-form training and theoretical guarantees. The proposed \gls{XL}-\gls{MIMO}-\gls{ELM} architecture operates with almost zero latency, while also benefiting from low power consumption and reduced hardware complexity. The presented numerical results demonstrate the efficiency of the proposed  framework in numerous binary classification examples, approaching the performance of fully digital models at the \gls{XL} \gls{MIMO} regime.

This paper extends our recent conference version~\cite{Our_SPAWC_paper} in the following aspects: \textit{i)}~the \gls{SIM} propagation is modeled through physically grounded Rayleigh--Sommerfeld coefficients~\cite{AXN23_SIM} instead of pseudo-random matrices, and the \gls{RX} architecture (front-end \gls{NL} \gls{MS} followed by a linear \gls{SIM} and single-antenna reception) is described in detail; \textit{ii)}~the universal approximation property, only sketched in~\cite{Our_SPAWC_paper}, is fully proven (Proposition~\ref{prop:universal-approximation}) and complemented by an analysis of the effect of reception noise (Proposition~\ref{prop:noise-effects}); \textit{iii)}~the element-by-element measurement of the activation matrix is generalized to a sketching-based procedure allowing amount of measurements suited to a single RFC, with exact recovery and monotonicity guarantees (Propositions~\ref{prop:equal-weights-for-Nr} and~\ref{prop:monotonicity}); \textit{iv)}~the iterative linear \gls{SIM} training of~\cite{Our_SPAWC_paper} is replaced by a closed-form, gradient-free configuration of its last \gls{MS} layer that implements the trained combining weights in the wave domain (Propositions~\ref{prop:closed-form-MS} and~\ref{prop:sim-full-rank}); and \textit{v)}~the numerical evaluation is extended to the impact of the sketching ratio, to reception noise during both training and inference, and to Ricean and power level analyses over additional dataset and parameter combinations.

The remainder of this paper is organized as follows:
Section~\ref{sec:system-model} provides the complete \gls{XL}-\gls{MIMO} system model.
Section~\ref{sec:methodology} presents the formulation of the system as an \gls{ELM} and provides theoretical guarantees.
The complete training and inference procedures are detailed in Section~\ref{sec:training-deployment}, while a numerical evaluation is included in Section~\ref{sec:results}.
Finally, Section~\ref{sec:conclusion} concludes the paper.

\textbf{Notation:} Vectors, matrices, and sets are expressed in lowercase bold, uppercase bold, and uppercase calligraphic typefaces, respectively.
$\mathbf{X}^\top$, $\mathbf{X}^{\rm H}$, and $\mathbf{X}^\dagger$ denote the transpose, conjugate transpose, and pseudo-inverse of $\mathbf{X}$.
The notation $[\mathbf{a}, \dots, \mathbf{z}]$ indicates a matrix (or vector, depending on the dimensionality of $\mathbf{a}, \dots, \mathbf{z}$) where $\mathbf{a}, \dots, \mathbf{z}$ are stacked horizontally. Likewise, the notation $[\mathbf{a}; \dots; \mathbf{z}]$ implies vertical stacking. 
$[\mathbf{x}]_i$ is used to denote the $i$th element of $\mathbf{x}$.
$|\mathcal{X}|$ represents the cardinality of the set $\mathcal{X}$, $||\mathbf{X}||_{\rm F}$ denotes the Frobenius norm of $\mathbf{X}$, ${\rm diag(\mathbf{x})}$ creates a square matrix with the elements of $\mathbf{x}$ placed along its main diagonal.
$\{\cdot\}$ expresses a set or collection,
${\rm sign}(x)$ returns the sign of $x$,
while ${\rm Tr}(\mathbf{X})$ and ${\rm det}(\mathbf{X})$ express the trace and the determinant of $\mathbf{X}$.
Finally, $\mathbf{X} \sim \mathcal{CN}(\mathbf{0}, \sigma^2\mathbf{I})$ signifies that $\mathbf{X}$ follows the complex standard Gaussian distribution with variance $\sigma^2$, $\jmath\triangleq\sqrt{-1}$ is the imaginary unit, and $z \in \mathbb{C}$ which can be expressed as $z = \mathfrak{Re}(z) + \jmath \mathfrak{Im}(z) = |z| \exp(\jmath~ {\rm arg}\{z\})$.

\section{XL MIMO System Model} \label{sec:system-model}

Consider a narrowband \gls{XL} \gls{MIMO} system with an $N_{\rm t}$-antenna \gls{TX} and an \gls{RX} comprising a \gls{SIM}, with a front-end \gls{MS} of $N_{\rm r}$ diffractive elements, followed by a single antenna and its accompanying \gls{RFC}. The methodology presented in this paper also holds for multiple synchronized \glspl{TX} (e.g., sensor devices relaying raw measurements~\cite{6G-DISAC-Architecture}) comprising a virtual sparse antenna; however, we leave such systemic elaborations for future work for the sake of focus.
Instead of performing conventional wireless communications, the system is trained end-to-end to act as an \gls{OTA} function approximator~\cite{Stylianopoulos_MIMO_ELM}.
In particular, given a prior dataset $\mathcal{D} \triangleq \{(\mathbf{x}^{(i)}, z^{(i)})\}_{i=1}^D$ of $D$ input-target pairs, the \gls{XL} \gls{MIMO} system is designed to approximate the $\mathbf{x} \to z$ mapping so that the \gls{TX} observes the input data $\mathbf{x}$ (not necessarily belonging to $\mathcal{D}$) and the \gls{RX} infers the (unobserved) target feature $z$ of $\mathbf{x}$. As shown in Fig.~\ref{fig:system-model}, which will be further elaborated on during the discussion of the methodology, the proposed \gls{XL} \gls{MIMO} system is thus intended to perform \gls{EI}, with all computational processing performed exclusively in the wave domain, i.e., \gls{OTA}. 
\begin{figure}
    \centering
    \includegraphics[width=1.0\linewidth]{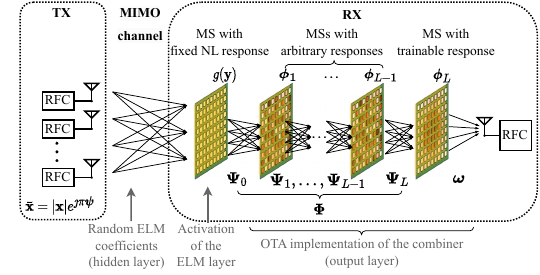}
    \caption{The proposed \gls{XL} \gls{MIMO} system architecture for implementing the devised \gls{XL}-\gls{MIMO}-\gls{ELM} framework using linear and \gls{NL} \glspl{MS}. The responses of the channel and \glspl{MS} are used as components of the \gls{ELM} algorithm to realize \gls{OTA} inference.}
    \label{fig:system-model}
\end{figure}

\subsection{Signal and Channel Models}
We assume that $\mathbf{x}^{(i)} \in (0,1)^{N_{\rm t}}$ and, without loss of generality, $z^{(i)} \in [-1,+1]$, that is, the dimension of the data observations is equal to the number of \gls{TX} antennas\footnote{It is noted that the \gls{TX} may incorporate a trainable feature extraction module to reduce the dimensionality of $\mathbf{x}$~\cite{Stylianopoulos_GO, GJZ24_SIM_TOC}, and thus the number of TX antennas. However, in this paper, we do not consider such multi-stage processes to the purpose of providing a clear \gls{OTA} computing framework for \gls{EI}, and to simplify the overall \gls{XL} \gls{MIMO} system by requiring a single \gls{RX} \gls{RFC}. Additional digital pre- or post-processing components may be added as is typical in \gls{EI} systems, towards heterogeneous architectures~\cite{Stylianopoulos_GO}; this is left for future investigation.} and the dataset is used for real-valued estimation. For reasons that will be clarified later, the data symbols are subject to \gls{AM}; accordingly, each element of the transmitted signal vector $\bar{\mathbf{x}} \in \mathbb{C}^{N_{\rm t} \times 1}$ is defined $\forall i=1,\dots,N_{\rm t}$ as follows:
\begin{equation}\label{eq:transmit-signal}
    [\bar{\mathbf{x}}]_i \triangleq [\mathbf{x}]_i \exp(\jmath \pi  [\bm \psi]_i),
\end{equation}
where the $N_{\rm t}$ elements of $\bm \psi$ may be chosen arbitrarily.
The baseband representation of the impinging signal at the front-end \gls{MS} layer of the \gls{SIM} at the \gls{RX}, which is composed of $N_{\rm r}$ metamaterial elements, can be mathematically expressed as:
\begin{equation}\label{eq:received-signal}
    \mathbf{y} \triangleq \mathbf{H} \bar{\mathbf{x}} \in \mathbb{C}^{N_{\rm r} \times 1},
\end{equation}
where $\mathbf{H} \in \mathbb{C}^{N_{\rm r} \times N_{\rm t}}$ represents the \gls{XL} \gls{MIMO} channel gain matrix.
For the subsequent theoretical analysis, we consider the case where $\mathbf{H}$ remains quasi-static for the duration of the training process and follows the Ricean fading distribution~\cite{6184250}:
\begin{equation}\label{eq:ricean}
   \mathbf{H} = \sqrt{P_L} \left( \sqrt{\frac{K}{1+K}} \mathbf{H}_{\rm LoS} + \sqrt{\frac{1}{1+K}} \mathbf{H}_{\rm NLoS} \right).
\end{equation}
where $P_L$ represents the pathloss. Assuming TX and RX \glspl{UPA} with $N_{\rm t}=N_{{\rm t},x}N_{{\rm t},y}$ and $N_{\rm r}=N_{{\rm r},x}N_{{\rm r},y}$ elements, respectively, the rank-one \gls{LoS} component is given by:
\begin{align}
    \mathbf{H}_{\rm LoS}
    \triangleq
    \mathbf{a}_{\rm r}\!\left(\theta^{\rm AoA},\phi^{\rm AoA}\right)
    \mathbf{a}_{\rm t}^{\rm H}\!\left(\theta^{\rm AoD},\phi^{\rm AoD}\right),
\end{align}
where $(\theta^{\rm AoD},\phi^{\rm AoD})$ and $(\theta^{\rm AoA},\phi^{\rm AoA})$ denote the elevation and azimuth angles of departure and arrival, respectively. Defining the string $q\in\{{\rm t},{\rm r}\}$, the unit-norm steering vector of the corresponding \gls{UPA} is defined as follows:
\begin{align}
    \big[\mathbf{a}_{q}(\theta,\phi)\big]_{m+nN_{q,x}+1}
    \triangleq
    \frac{1}{\sqrt{N_q}}
    \exp\!\Big(&
        \jmath\frac{2\pi}{\lambda}
        \big[
            m d_{q,x}\sin\theta\cos\phi
        \notag\\[-1mm]
            &+n d_{q,y}\sin\theta\sin\phi
        \big]
    \Big).
\end{align}
for $m=0,\ldots,N_{q,x}-1$ and $n=0,\ldots,N_{q,y}-1$, where $d_{q,x}$ and $d_{q,y}$ are the inter-element spacings along the two array axes and $\lambda$ is the carrier wavelength.
Additionally, in~\eqref{eq:ricean}, $\mathbf{H}_{\rm NLoS} \sim \mathcal{CN}(\mathbf{0}, 1/\sqrt{N_{\rm t} N_{\rm r}}\mathbf{I})$, and $K$ represents the Ricean factor that controls the dominance of either component.

\begin{figure}
    \centering
    \includegraphics[width=0.7\linewidth]{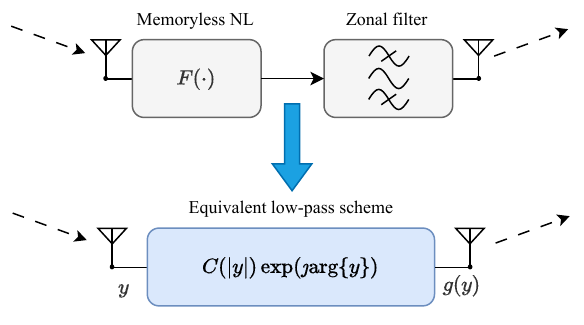}
    \caption{Element-wise bandpass memoryless NL transformation (top) and its corresponding equivalent low-pass scheme (bottom).}
    \label{fig:NL}
\end{figure}

\subsection{The Proposed RX Architecture}\label{sec:NL_SIM}
As shown in Fig.~\ref{fig:system-model}, the proposed RX architecture comprises a \gls{SIM} followed by a single RX antenna and its accompanying \gls{RFC}. The SIM further consists of $L+1$ diffractive \gls{MS} layers indexed with $l=0,\dots,L$: the front-end \gls{MS} with $l=0$ implements a NL transformation on the transmitted signal, while the remaining $L$ \glspl{MS} with $l=1,\ldots,L$ form a typical linear \gls{SIM}~\cite{AXN23_SIM}. In the sequel, we describe each of these components.

\subsubsection{Front-End Nonlinear Metasurface}


Each of the $N_{\rm r}$ elements of the front-end \gls{MS} applies a memoryless \gls{NL} transformation followed by a zonal filter, as shown in Fig.~\ref{fig:NL}, aiming to remove higher-order harmonics; this filtering task can be accomplished, for instance, directly by the antenna/MS element. Denoting with $F(\cdot)$ the bandpass response of the generic \gls{NL} transformation, the fundamental component produced by each \gls{NL} element of this \gls{MS}, under a narrowband excitation, is given by the \gls{AM}/\gls{AM} characteristic. It can be obtained by extracting the first cosine Fourier coefficient, and it is expressed as~\cite{Maas2003}:
\begin{equation} \label{eq:C}
C(v)\triangleq\frac{2}{\pi} \int_{0}^{\pi} F(v\cos(\phi)) \cos(\phi) \text{d}\phi \, .
\end{equation}
The resulting equivalent low-pass mapping of the whole \gls{NL} \gls{MS} can be thus derived as $g(\mathbf{y}) \triangleq C(|\mathbf{y}|) \exp(\jmath {\rm arg}\{\mathbf{y}\})$. 
It is worth noticing that every memoryless bandpass \gls{NL} transformation preserves the phase of the input, while transforming its envelope through the \gls{AM}/\gls{AM} characteristic. 

In this paper, we consider an element-wise thresholding \gls{NL} transformation characterized overall by the positive bias vector $\mathbf{b} \in \mathbb{R}^{N_{\rm r} \times 1}_+$, whose entries are drawn from an appropriate distribution; for convenience, it may be centered around the expected receive signal power.
As suggested in~\cite{FabTorDar:25,Abr:26}, in the simplest passive implementation, such a thresholding can be approximated by exploiting the nonlinearity of an anti-parallel diode pair connected in shunt across the transmission line. The different biases can be obtained during the fabrication process by varying the diode's polarization or its characteristic, hence, ensuring low complexity. Higher values of biases can be obtained by resorting to active \gls{NL} circuits as done in \cite{Nin:25,Abb:26}.   
From \eqref{eq:C}, the transform yields the following piecewise mapping for each element $j=1,\dots,N_{\rm r}$ of the front-end \gls{NL} MS layer~\cite{FabTorDar:25}:
\begin{equation}
    C(|[\mathbf{y}]_j|) = \begin{cases} 
    \mathbf{0}, & |[\mathbf{y}]_j| \le [\mathbf{b}]_j \\
    \begin{aligned}
    \frac{1}{\pi} \left( |[\mathbf{y}]_j| \arccos\left(\frac{[\mathbf{b}]_j}{|[\mathbf{y}]_j|}\right)\right. & \\
    \left.- [\mathbf{b}]_j \sqrt{1 - \left(\frac{[\mathbf{b}]_j}{|[\mathbf{y}]_j|}\right)^2}\right) & ,
    \end{aligned} & |[\mathbf{y}]_j| > [\mathbf{b}]_j
    \end{cases}.
    \label{eq: C[|y|]}
\end{equation}

It is noted that, while the latter expression captures the exact \gls{AM}/\gls{AM} characteristic of the \gls{NL} elements, the transcendental terms are computationally demanding for practical optimization, due to their non-differentiability at each  $[\mathbf{b}]_j$. To this end, by approximating the transition for $|[\mathbf{y}]_j| > [\mathbf{b}]_j$ as a continuous quasi-linear slope, the response of the \gls{NL} \gls{MS} layer can be expressed as the well known softplus activation function $\forall j$~\cite{Wiemann2024_softplus}:
\begin{equation}\label{eq:softplus-activation}
    g([\mathbf{y}]_j) \simeq 0.2
    \log\left( 1+e^{5 (|[\mathbf{y}]_j| -[\mathbf{b}]_j)  }\right)
    \exp\left(\jmath {\rm arg}\{[\mathbf{y}]_j\}\right).
\end{equation}

\subsubsection{Stacked Linear Metasurfaces}

The $L$ linear \gls{MS} layers following the \gls{NL} \gls{MS} implementing \eqref{eq:softplus-activation}'s activation function constitute of response-tunable elements which can collectively process $g(\mathbf{y})$ in the wave domain. 
The element-to-element propagation between the $l$-th and ($l+1$)-th \gls{MS} layers ($l=0,\dots,L$) 
is governed by geometric optics due to their dense placement~\cite{AXN23_SIM}.
Given elements $n$ and $n'$ ($n,n'=1,\dots,N_{\rm r}$) with distance $d_{n,n'}$ and area $S_M$ from layers $l$  and $l-1$ ($l=1,\dots,L$) of distance $d_M$, each propagation matrix $\mathbf{\Psi}_l\! \in \! \mathbb{C}^{N_{\rm r} \times N_{\rm r}}$ can be expressed as:
\begin{align}\label{eq:rayleigh-somerfeld}
    [\mathbf{\Psi}_l]_{n,n'} &\triangleq \frac{d_M S_M}{d_{n,n'}^2} 
    \Big( \frac{1}{2\pi d_{n,n'}} - \frac{\jmath}{\lambda} \Big) 
    \exp\left(\frac{\jmath 2\pi}{\lambda} d_{n,n'}\right).
\end{align}
The linear responses of the elements of each $l$-th \gls{MS} layer ($l=1,\dots,L$) are modeled as $[\boldsymbol{\phi}_l]_n \triangleq |\alpha^l_n|\exp(-\jmath \vartheta^l_n)$, where $|\alpha^l_n|\leq 1$ is the controllable amplitude and $\vartheta^l_n$ the controllable phase.
By defining $\mathbf{\Phi}_l \triangleq {\rm diag}(\boldsymbol{\phi}_l)$ for $l=1,\dots,L$, the linear \gls{SIM} response can be mathematically expressed via the matrix:
\begin{equation}\label{eq:sim-propagation}
 \mathbf{\Phi} \triangleq \prod_{l=L}^{1} \mathbf{\Phi}_l \mathbf{\Psi}_l   \in \mathbb{C}^{N_{\rm r} \times N_{\rm r}},
\end{equation}
where $\mathbf{\Psi}_1$ is the propagation between the output of the front-end \gls{NL} \gls{MS} and the input of the first \gls{MS} layer of the linear \gls{SIM}.

\subsubsection{Single-Antenna Reception}

For the considered scalar estimation \gls{EI} problem in this paper, the \gls{RX} needs only a single antenna associated with an \gls{RFC} for digital conversion.
For simplicity, we position the \gls{RFC} along the line passing through the centers of all $L+1$ \gls{MS} layers, having a distance of $d_M$ from the last $L$-th layer, identical to the inter-element distances.
Due to its close proximity, this propagation defined as the row vector $\boldsymbol{\omega} \in \mathbb{C}^{1 \times N_{\rm r}}$ between the $L$-th linear \gls{MS} layer (see Fig.~\ref{fig:system-model}) and the \gls{RFC} can be modeled, similar to~\eqref{eq:rayleigh-somerfeld}, as follows:
\begin{align}\label{eq:rayleigh-somerfeld-vector}
    [\boldsymbol{\omega}]_{n} &\triangleq \frac{d_M S_M}{\tilde{d}_{n}^2} 
    \Big( \frac{1}{2\pi \tilde{d}_{n}} - \frac{\jmath}{\lambda} \Big) 
    \exp\left(\frac{\jmath 2\pi}{\lambda} \tilde{d}_{n}\right) \in \mathbb{C},
\end{align}
where $\tilde{d}_{n}$ denotes the distance between the $n$-th element of the last $L$-th \gls{MS} layer and the single antenna.

Putting all above together, the received signal at the output of the reception \gls{RFC}, i.e., in baseband, can be expressed as:
\begin{align}
    \hat{z} \triangleq \boldsymbol{\omega} \mathbf{\Phi} g(\mathbf{y}) + \tilde{n} \overset{(a)}{=} \mathbf{w}^{\rm H} g(\mathbf{y}) + \tilde{n}, \label{eq:output}
\end{align}
where $\tilde{n} \in \mathcal{CN}(0, \sigma_{\tilde{n}}^2)$ denotes the \gls{AWGN} introduced at the \gls{RX}, and (a) uses the definition $\mathbf{w} \triangleq (\boldsymbol{\omega} \mathbf{\Phi})^{\rm H} \in \mathbb{C}^{N_{\rm r} \times 1}$ for the combining operation in the wave domain (i.e., implemented via the $L$-layer linear SIM), whose response can be controlled to solve computational \gls{EI} tasks, as will be described in the next section.

\section{The Proposed XL-MIMO-ELM Framework}\label{sec:methodology}
In this section, we show that our \gls{XL} \gls{MIMO} system architecture in Fig.~\ref{fig:system-model} can be regarded as a form of an \gls{ELM}. To this end, let us briefly describe the \gls{ELM} model to be used for inference tasks on the data set $\mathcal{D}$. In particular, an \gls{ELM} for scalar-value estimation is defined as an \gls{SLFN} with $\mathtt{N}$ neurons in its hidden layer, which is tasked to perform the following simple operation on the input $\mathbf{x}$. This is mathematically expressed as~\cite{HuaZhuSie:06}:
\begin{equation}\label{eq:elm-theory}
    \hat{\mathtt{z}} \triangleq \sum_{j=1}^{\mathtt{N}} [\mathtt{w}]_j \mathtt{g}(\mathtt{h}_j(\mathbf{x}; \boldsymbol{\theta}_j)) = \mathtt{w}^{\rm H} \mathtt{g}(\mathtt{H}(\mathbf{x}; \boldsymbol{\theta})),   
\end{equation}
where $\mathtt{w}$ is its trainable weight vector of the output layer, $\mathtt{g}(\cdot)$ is an activation function, and $\mathtt{H}(\mathbf{x}; \boldsymbol{\theta}) \triangleq [\mathtt{h}_1(\mathbf{x}; \boldsymbol{\theta}_1), \dots, \mathtt{h}_{\mathtt{N}}(\mathbf{x}; \boldsymbol{\theta}_{\mathtt{N}})]^\top$ includes the kernel functions of the $\mathtt{N}$ neurons of the hidden layer, each parametrized by $\boldsymbol{\theta}_j$; we use the definition $\boldsymbol{\theta} \triangleq \{ \boldsymbol{\theta}_j \}_{j=1}^{\mathtt{N}}$.
Such kernel functions may belong to a wide family of functions, and typically, even linear functions are an appropriate choice.
The important aspect of in~\glspl{ELM} is that $\boldsymbol{\theta}_j$'s are completely random and need not be optimized.
The associated theoretical results~\cite{HuaZhuSie:06} demonstrate that, if $\mathtt{N}$ is arbitrarily large (yet finite), then simply optimizing  $\mathtt{w}$ through an analytical \gls{LS} solution is sufficient to make $\hat{\mathtt{z}}^{(i)}$ approximate $z^{(i)}$ with arbitrary precision $\forall \mathbf{x}^{(i)}$ in $\mathcal{D}$.
In that respect, the \gls{ELM} framework combines random parameters of the hidden layers, which acts as a random feature extraction mechanism, with trainable weights at the output layer, that act as a weighted selection of useful features.
Their aforementioned universal approximation capability can thus be interpreted to mean that if the number of the random feature projections is sufficiently large, each data point has a unique representation in that random space.

\subsection{XL MIMO as an ELM}\label{sec:elm-definition}
\subsubsection{The Idealistic Noise-Free Case}\label{sec:noise_free}
Inspecting the transformations in the previous expressions~\eqref{eq:received-signal} and~\eqref{eq:softplus-activation}, it can be easily concluded that they implement an ELM's random hidden layer with random coefficients
${\bm \theta} \!=\! [ \mathfrak{Re}([\mathbf{H}]_{1,:}),\!\tightdots\!, \mathfrak{Re}([\mathbf{H}]_{N_{\rm r},:}), \mathfrak{Im}([\mathbf{H}]_{1,:}),\!\tightdots\!, \mathfrak{Im}([\mathbf{H}]_{N_{\rm r},:}),\! \mathbf{b}^\top ]^\top$
$\in \mathbb{R}^{2 N_{\rm r} N_{\rm t} + N_{\rm r}}$, and the combining weights $\mathbf{w}$ of~\eqref{eq:output} play the role of ELM's trainable weights $\mathtt{w}$ in~\eqref{eq:elm-theory} of the output layer.
Motivated by this operational equivalence, we leverage the mathematical framework for \gls{ML} inference in~\cite{HuaZhuSie:06, LiHuaSarSun:05, Stylianopoulos_MIMO_ELM}, which accounts for random parameters alongside $\mathtt{w}$, enabling both the training procedure of the ELM framework and its theoretical guarantees to be rigorously described.

To find the optimal weight vector for a given dataset $\mathcal{D}$, let us first define the \gls{ELM} activation matrix $\mathbf{G} \in \mathbb{C}^{D \times N_{\rm r}}$ as the transpose of the activated signals at the aforedescribed front-end \gls{NL} \gls{MS} layer at the RX side of the proposed \gls{XL} \gls{MIMO} system:
\begin{equation}\label{eq:hidden-layer-outputs}
    \mathbf{G} \triangleq [g^{\rm H}(\mathbf{y}^{(1)}) ; \dots ; g^{\rm H}(\mathbf{y}^{(D)})].
\end{equation}
The output of the system for the whole dataset can now be compactly written, following~\eqref{eq:output} and ignoring the \gls{AWGN} term (implying operation at the high-\gls{SNR} regime for the moment), as follows:
\begin{align}
  \hat{\mathbf z} &\triangleq [ \mathbf{w}^{\rm H}\mathbf g(\mathbf y^{(1)});\, \dots \mathbf{w}^{\rm H} \mathbf g(\mathbf y^{(D)}) ] \nonumber \\
  &= [ \mathbf g^{\rm H}(\mathbf y^{(1)})\mathbf{w}; \,  \dots ;\, \mathbf g^{\rm H}(\mathbf{ y}^{(D)}) \mathbf{w} ]^{\ast} = (\mathbf{G}\mathbf{w})^{\ast} \in \mathbb{C}^{D \times 1}.\label{eq:z_hat}
\end{align}
We further denote the vector of target values for the whole dataset as $\mathbf{z} \triangleq [z^1 , \dots , z^D]^\top \in \mathbb{R}^{D \times 1}$. Since the purpose of the inference system is to approximate $\mathbf{z}$, which is a real vector for our considered estimation problems, we assume that $\hat{\mathbf{z}}$ also resides in the real domain (a formal proof of exact approximation will be presented next), therefore $\hat{\mathbf{z}} \cong \hat{\mathbf{z}}^{\ast} = \mathbf{G}\mathbf{w}$, which simplifies the notation. Subsequently, the response vector $\mathbf{w}$ of the $L$-layer linear SIM can be optimized to minimize the \gls{LS} error between the target and output values, yielding the following standard \gls{ELM} formulation for this system design parameter:
\begin{equation}\label{eq:LS-objective}
    \mathbf{w}_{\rm LS} \triangleq \arg \min_{\mathbf{w}} \| \mathbf{z} -  \hat{\mathbf{z}} \|_{2}^{2} \stackrel{\eqref{eq:z_hat}}{=} \arg \min_{\mathbf{w}} \| \mathbf{z} -  \mathbf{G}\mathbf{w} \|_{2}^{2}.
\end{equation}
This LS problem has the following closed-form solution:
\begin{align} 
    \mathbf{w}_{\rm LS} 
    &\triangleq  \left(\mathbf{G}^{\rm H} \mathbf{G}\right)^{-1}\mathbf{G}^{\rm H} \mathbf{z} = \mathbf{G}^\dagger \mathbf{z} \in \mathbb{C}^{N_{\rm r} \times 1}, \label{eq:weights-opt-no-reg}
\end{align}
therefore, the output of the trained \gls{XL}-\gls{MIMO}-\gls{ELM} over the training set $\mathcal{D}$ may be expressed as:
\begin{align} \label{eq:trained-elm-output}
    \hat{\mathbf{z}} = \mathbf{G} \mathbf{G}^\dagger \mathbf{z},
\end{align}
Similar to standard theoretical results from \gls{ELM} and \gls{SLFN} theory, the following proposition expresses the universal approximation property of the proposed \gls{XL}-\gls{MIMO}-\gls{ELM} system framework, guaranteeing zero residual error over arbitrary data sets; therefore, $\hat{\mathbf{z}}$ becomes real-valued exactly.

\begin{proposition}[Universal Approximation]\label{prop:universal-approximation}
Consider a dataset $\mathcal{D} \triangleq \{(\mathbf{x}^{(i)}, z^{(i)})\}_{i=1}^D$ and the \gls{XL}-\gls{MIMO}-\gls{ELM} system defined by~\eqref{eq:trained-elm-output} with its preceding equations.  
In the high-\gls{SNR} regime (i.e., when $\tilde{n} \to 0$ in~\eqref{eq:output}), assuming 
$\mathbf{H}$ follows the Ricean channel model in~\eqref{eq:ricean} and the positive bias vector $\mathbf{b}$ in~\eqref{eq: C[|y|]} is drawn from a continuous distribution with positive support, there exists a weight vector 
$\mathbf{w}_{\rm LS}$ of~\eqref{eq:weights-opt-no-reg} such that 
$\|\mathbf{z} - \hat{\mathbf{z}}\|_2^2 =\|\mathbf{z} - \mathbf{G} \mathbf{w}_{\rm LS}\|_2^2 = 0$ for $N_{\rm r} = D$ with probability $1$. 
\end{proposition}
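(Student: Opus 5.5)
The plan is to reduce the statement to showing that the square activation matrix $\mathbf{G}\in\mathbb{C}^{D\times D}$ (with $N_{\rm r}=D$) is nonsingular with probability $1$. Once this holds, $\mathbf{G}^\dagger=\mathbf{G}^{-1}$ and \eqref{eq:trained-elm-output} gives $\hat{\mathbf{z}}=\mathbf{G}\mathbf{G}^{-1}\mathbf{z}=\mathbf{z}$. So the residual is zero and $\hat{\mathbf{z}}$ is exactly real. I will assume, as is implicit in any interpolation result, that the inputs $\mathbf{x}^{(i)}$ are pairwise distinct. Following the classical ELM argument of~\cite{HuaZhuSie:06}, the columns of $\mathbf{G}$ are indexed by RX elements $j$. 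Column $j$ is $\mathbf{c}_j\triangleq[g^\ast([\mathbf{y}^{(1)}]_j);\dots;g^\ast([\mathbf{y}^{(D)}]_j)]$, which depends only on the $j$th row $\mathbf{h}_j^\top$ of $\mathbf{H}$ and on $[\mathbf{b}]_j$. Under \eqref{eq:ricean}, the pairs $(\mathbf{h}_j,[\mathbf{b}]_j)$ are mutually independent across $j$ with absolutely continuous laws. The LoS term only shifts the mean, and the $\mathcal{CN}$ NLoS part supplies a density for $K<\infty$. It therefore suffices to prove the following: for any fixed proper subspace $\mathcal{S}\subset\mathbb{C}^D$ of dimension $D-1$, $\Pr(\mathbf{c}_j\in\mathcal{S})=0$. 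Conditioning on $\mathbf{c}_1,\dots,\mathbf{c}_{j-1}$ and inducting on $j$ then gives linear independence of all $D$ columns almost surely.

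To bound $\Pr(\mathbf{c}_j\in\mathcal{S})$, let $\bm\alpha\neq\mathbf{0}$ be a normal vector of $\mathcal{S}$. The event is $f(\mathbf{h}_j,b_j)\triangleq\sum_{i=1}^D\alpha_i^\ast\, g^\ast(\mathbf{h}_j^\top\bar{\mathbf{x}}^{(i)};b_j)=0$. The map $(\mathbf{h},b)\mapsto f$ is real-analytic wherever all $\mathbf{h}^\top\bar{\mathbf{x}}^{(i)}\neq0$, using the softplus model \eqref{eq:softplus-activation}, and that excluded set is itself null. The zero set of a real-analytic function that is not identically zero on a connected open set has Lebesgue measure zero. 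Since the law of $(\mathbf{h}_j,b_j)$ is absolutely continuous, the probability then vanishes. The whole proof therefore reduces to showing $f\not\equiv0$.

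The main obstacle is this non-degeneracy step. I would choose $\mathbf{h}$ along a direction $\mathbf{u}$ such that the scalars $s_i=\mathbf{u}^\top\bar{\mathbf{x}}^{(i)}$ have pairwise distinct moduli. Such a direction exists generically because the $\bar{\mathbf{x}}^{(i)}$ are distinct. The step needing care is the case where two inputs differ only by a phase factor; since $\mathbf{x}\in(0,1)^{N_{\rm t}}$ and the common $\bm\psi$ is fixed, this forces the inputs to be proportional, and a suitable generic $\mathbf{u}$ still separates their moduli. I would then set $\mathbf{h}=t\mathbf{u}$ and order the inputs so that $|s_1|>\dots>|s_D|$.

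I would first try letting the bias sit just below $t|s_1|$ and sending $t\to\infty$. In the hard-threshold limit of \eqref{eq: C[|y|]}, only neuron input $i=1$ is active, so $f\to\alpha_1^\ast\cdot(\text{nonzero})$ whenever $\alpha_1\neq0$. For the softplus model, I would instead use the asymptotics in $b$: as $b$ varies, the terms $\log(1+e^{5(t|s_i|-b)})$ have distinct exponential rates $e^{5t|s_i|}$ and are therefore linearly independent as functions of $(t,b)$. This shows that $f\equiv0$ forces every $\alpha_i=0$, a contradiction. If $\alpha_1=0$, the same argument applies to the largest index with $\alpha_i\neq0$, after peeling off the terms that vanish. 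This establishes $f\not\equiv0$, hence full rank of $\mathbf{G}$ with probability $1$, and the proposition follows.
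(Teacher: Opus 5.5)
Your proof is correct, but it is organized differently from the paper's.

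The paper treats $\det(\mathbf{G})$ as one real-analytic function of the whole parameter vector $\boldsymbol\theta=(\mathbf{H},\mathbf{b})$. It argues that the domain is open and connected, namely the positive orthant in $\mathbf{b}$ minus the real-codimension-2 sets where some $[\mathbf{H}\bar{\mathbf{x}}]_j=0$. It then certifies $\det(\mathbf{G})\not\equiv0$ by restricting to real $\mathbf{H}$ with $\boldsymbol\psi=\mathbf{0}$ and citing the real-valued ELM interpolation theorem of Huang et al. You instead use the independence of the rows of $\mathbf{H}_{\rm NLoS}$ to reduce, by column-wise induction, to a single neuron. You then prove $\boldsymbol\alpha^{\rm H}\mathbf{c}(\mathbf{h},b)\not\equiv0$ directly from softplus asymptotics along a ray on which the moduli $|s_i|$ are distinct.

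Your route is self-contained and works in $\mathbb{R}^{2N_{\rm t}+1}$ rather than the full parameter space. It also makes explicit three things the paper leaves inside its citation. First, the inputs must be pairwise distinct. Second, complex AM phases cannot merge distinct positive inputs. Third, the real restriction $0.2\log(1+e^{5(|t|-b)})\,{\rm sign}(t)$ is not literally of Huang's form $g(\mathbf{a}^\top\mathbf{x}+b)$ with smooth $g$, so the paper's appeal needs a small patch, e.g., restricting to entrywise-positive real $\mathbf{h}$.

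The paper's route, in turn, needs only joint absolute continuity of $\boldsymbol\theta$. You assert mutual independence of the biases across $j$, which the statement does not give. What your induction actually needs is that the conditional law of $(\mathbf{h}_j,[\mathbf{b}]_j)$ given the earlier columns is absolutely continuous. That already holds if $\mathbf{b}$ has a joint density, so say this instead.

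Three tightenings are worth making:
\begin{itemize}
\item State that the per-column domain $\{\mathbf{h}:\mathbf{h}^\top\bar{\mathbf{x}}^{(i)}\neq0\ \forall i\}\times(0,\infty)$ is connected, since it removes finitely many complex hyperplanes of real codimension 2. Also choose $\mathbf{u}$ with all $s_i\neq0$, so the ray lies in this domain.
\item Make the asymptotic step precise: multiply by $e^{5b}$ and let $b\to\infty$. This gives $\sum_i\alpha_i^\ast e^{-\jmath\arg s_i}e^{5t|s_i|}\equiv0$ for $t>0$. Linear independence of exponentials with distinct rates then kills all coefficients at once, so the peeling step is unnecessary.
\item Drop the hard-threshold heuristic entirely. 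Under the exact characteristic \eqref{eq: C[|y|]}, a column of $\mathbf{G}$ is identically zero with positive probability, namely whenever $[\mathbf{b}]_j$ exceeds every $|[\mathbf{y}^{(i)}]_j|$. So the analytic softplus model is genuinely what makes the statement true.
\end{itemize}
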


\begin{proof}
    The proof of this proposition is given in Appendix~\ref{appendix:universal-approximation-proof}.
\end{proof}

\subsubsection{Effect of the Reception Noise}

While the previous analysis and the universal approximation guarantees of Proposition~\ref{prop:universal-approximation} assume a noise-free regime to establish baseline performance, practical \gls{RX} observations are inevitably subject to the \gls{AWGN} $\tilde{n}$ in~\eqref{eq:output}.
Hence, we now quantify its impact on the predictions of the trained \gls{XL}-\gls{MIMO}-\gls{ELM}, comparing to the noise-free case.
Aggregating the scalar noise components $\tilde{n}$ across the entire dataset of size $D$, we define the noise vector $\tilde{\mathbf{n}} \sim \mathcal{CN}(\mathbf{0}, \sigma_{\tilde{n}}^2 \mathbf{I}_D)$. Consequently, the estimated target vector $\tilde{\mathbf{z}} \in \mathbb{C}^{D \times 1}$ can be modeled as the ideal target $\hat{\mathbf{z}}$ corrupted by this aggregated noise:
\begin{equation}\label{eq:noise-at-output}
    \tilde{{\mathbf{z}}} = \hat{\mathbf{z}} + \tilde{\mathbf{n}}.
\end{equation}

\begin{proposition}[Prediction error under RX noise]\label{prop:noise-effects}
Consider an ideal noise-free \gls{ELM} satisfying $\hat{\mathbf{z}} = \mathbf{G}\mathbf{w}_{\rm LS}$, and the \gls{RX} model in \eqref{eq:output} with independent \gls{AWGN} among elements of variance $\sigma_{\tilde{n}}^2$. The mean squared residual error per data point between the noise-imputed predictions $\tilde{\mathbf{z}}$ of the \gls{ELM} of~\eqref{eq:output} and the predictions $\hat{\mathbf{z}}$ of the noise-free \gls{ELM} is 
$\mathbb{E}\!\left\{(\|\tilde{\mathbf{z}}-\hat{\mathbf{z}}\|_2^2)/D\right\}=\sigma_{\tilde{n}}^2$.
\end{proposition}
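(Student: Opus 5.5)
The plan is a direct computation from the additive noise model in~\eqref{eq:noise-at-output}. Since $\tilde{\mathbf{z}} = \hat{\mathbf{z}} + \tilde{\mathbf{n}}$, the deterministic prediction $\hat{\mathbf{z}} = \mathbf{G}\mathbf{w}_{\rm LS}$ cancels exactly in the difference, leaving $\tilde{\mathbf{z}}-\hat{\mathbf{z}} = \tilde{\mathbf{n}}$. Here $\mathbf{G}$ and $\mathbf{w}_{\rm LS}$ are fixed once $\mathbf{H}$, $\mathbf{b}$, and the training set are given, since the channel is quasi-static and the weights were trained noise-free. The claim therefore reduces to evaluating $\mathbb{E}\{\|\tilde{\mathbf{n}}\|_2^2\}/D$. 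I will state explicitly that the noise enters only after the wave-domain combining, as in~\eqref{eq:output}. Consequently it is not filtered by $\mathbf{w}$ and does not interact with the nonlinearity $g(\cdot)$, which is why the answer does not depend on $\mathbf{w}_{\rm LS}$.

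Next I will expand the squared norm as $\|\tilde{\mathbf{n}}\|_2^2 = \sum_{i=1}^D |[\tilde{\mathbf{n}}]_i|^2$. By linearity of expectation, this gives $\mathbb{E}\{\|\tilde{\mathbf{n}}\|_2^2\} = \sum_{i=1}^D \mathbb{E}\{|[\tilde{\mathbf{n}}]_i|^2\}$. Each $[\tilde{\mathbf{n}}]_i \sim \mathcal{CN}(0,\sigma_{\tilde{n}}^2)$ is zero mean, so its second moment equals its variance, $\sigma_{\tilde{n}}^2$. Equivalently, $\mathbb{E}\{\|\tilde{\mathbf{n}}\|_2^2\} = {\rm Tr}(\mathbb{E}\{\tilde{\mathbf{n}}\tilde{\mathbf{n}}^{\rm H}\}) = {\rm Tr}(\sigma_{\tilde{n}}^2\mathbf{I}_D) = D\sigma_{\tilde{n}}^2$. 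Dividing by $D$ gives the stated result.

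There is no substantial obstacle. The one point needing care is the interpretation of the expectation: it is taken over the reception noise only, conditional on $\mathbf{H}$, $\mathbf{b}$, and the data. I will also note that the independence across data points assumed in the statement is not needed for the mean. Only the marginal variances enter; independence would matter only for higher moments or concentration. Finally, I will remark that, because the targets are real, one could project onto the real part. That would halve the error to $\sigma_{\tilde{n}}^2/2$, but the proposition concerns the complex-valued output, so I will keep the full $\sigma_{\tilde{n}}^2$.
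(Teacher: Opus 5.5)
Your proposal is correct and follows the paper's own proof: substitute $\tilde{\mathbf{z}}=\hat{\mathbf{z}}+\tilde{\mathbf{n}}$ and evaluate $\mathbb{E}\{\|\tilde{\mathbf{n}}\|_2^2\}={\rm Tr}(\mathbb{E}\{\tilde{\mathbf{n}}\tilde{\mathbf{n}}^{\rm H}\})=D\sigma_{\tilde{n}}^2$. Your side remarks are also right: the mean needs only the marginal variances, and taking the real part would give $\sigma_{\tilde{n}}^2/2$.
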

\begin{proof}
    It can be easily computed that $\mathbb{E}\!\left\{\|\tilde{\mathbf{z}}-\hat{\mathbf{z}}\|_2^2\right\} \overset{\eqref{eq:noise-at-output}}{=}\mathbb{E}\!\left\{\|\tilde{
    \mathbf{n}} \|_2^2\right\}={\rm Tr}\left( \mathbb{E}\!\left\{\tilde{
    \mathbf{n}} \tilde{
    \mathbf{n}}^{\rm H}\right\}\right)=D\sigma_{\tilde{n}}^2$.
\end{proof}

Having quantified the expected system behavior in terms of the \gls{MSE} as a function of the reception noise, we consider only noise-free cases for the rest of the analytical description of this paper for ease of presentation. However, the effects of noise are quantified numerically in Section~\ref{sec:results}.
Despite Proposition~\ref{prop:noise-effects} showing a linear relationship of \gls{MSE} and $\sigma_{\tilde{n}}^2$, it is noted that \gls{MSE} is not necessarily the inference target metric of interest.
For example, for classification problems, as those considered in Section~\ref{sec:results}, the accuracy metric is less susceptible to the influence of noise, since the noise level must be large enough to change the decision threshold between the classes.

\subsection{Formulation of the Activation Matrix $\mathbf{G}$}
To obtain the \gls{LS} solution of~\eqref{eq:weights-opt-no-reg}, knowledge of $\mathbf{G}$ is needed, which contains the activated received signals for all antennas (i.e., the output of the front-end \gls{NL} \gls{MS}) and all data points.
Assuming \gls{CSI} availability at the \gls{RX}, $\mathbf{G}$ may be constructed digitally.
It is, however, possible to obtain the required observations by consecutive data measurements, bypassing the need for a dedicated \gls{CSI} estimation process~\cite{Alexandropoulos2026ReceivingRIS}.
In that regard, we propose a procedure of a low overhead, both in terms of communication resources, as well as in computational complexity, for deriving the \gls{LS} solution~\eqref{eq:weights-opt-no-reg}.
We assume that the \gls{AM} signal corresponding to each data point of $\mathcal{D}$ is transmitted for a potentially enlarged symbol duration and $M \ll N_{\rm r}$ measurements of it are collected in the following way. An arbitrary weight vector $\mathbf{a}_m \in \mathbb{C}^{N_{\rm r} \times 1}$, with $m=1,\dots,M$, is selected as the reception combiner each time.
In particular, during the $m$-th measurement time frame of the $i$-th data point ($i=1,\dots,D$), the output of the \gls{XL}-\gls{MIMO}-\gls{ELM} system at the \gls{RX} \gls{RFC} can be described as follows:
\begin{equation}\label{eq:measurement-output-single-datapoint}
    \hat{z}^{(i)}_m \triangleq \mathbf{a}_m^{\rm H} g(\mathbf{y}^{(i)}) + \tilde{n}_{i,m}, 
\end{equation}
where $\mathbf{a}_m$ plays the role of the wave-domain combiner $\mathbf{w}$ in~\eqref{eq:output}.
For convenience, let us define the measurement matrix as:
\begin{equation}
    \mathbf{A} \triangleq [\mathbf{a}^{\rm H}_1; \dots; \mathbf{a}^{\rm H}_M] \in \mathbb{C}^{M \times N_{\rm r}}.
\end{equation}

The latter process is repeated for all $D$ data points.
Owing to advancements in reconfigurable \gls{MS} design~\cite{Li2017Electromagnetic,Qian2025Nanosecond}, it is feasible to achieve \gls{MS} response configuration switching in the order of nanoseconds under suitable unit cell and control hardware.
Therefore, it is reasonable to assume that all $M$ measurements frames for each data point take place within standard symbol duration periods in the order of few microseconds, and subsequently, only $D$ transmissions are required overall. By applying the reception of~\eqref{eq:output} while replacing $\mathbf{w}$ with $\mathbf{a}_m$ and disregarding the noise term, the collected outputs for the complete data set can be compactly inserted in the following matrix:

\begin{align}\label{eq:G-sketch}
    \tilde{\mathbf{G}} \triangleq &
     \begin{bmatrix}
        g^{\rm H}(\mathbf{y}^{(1)}) \mathbf{a}_1 & \dots  & g^{\rm H}(\mathbf{y}^{(1)}) \mathbf{a}_M \\
        \vdots                                & \ddots & \vdots                                \\
        g^{\rm H}(\mathbf{y}^{(D)}) \mathbf{a}_1 & \dots  & g^{\rm H}(\mathbf{y}^{(D)}) \mathbf{a}_M
    \end{bmatrix} =
    \mathbf{G} \mathbf{A}^{\rm H} \in \mathbb{C}^{D \times M}.
\end{align}
We therefore proceed to use $\tilde{\mathbf{G}}$ instead of $\mathbf{G}$ in~\eqref{eq:LS-objective}, 
obtaining, following \eqref{eq:weights-opt-no-reg}'s \gls{LS} solution, the intermediate weight vector $\boldsymbol{\alpha}_{\rm sketch} \in \mathbb{C}^{M \times 1}$ in the reduced $M$-dimensional subspace:
\begin{align}
    \boldsymbol{\alpha}_{\rm sketch} &=
    \tilde{\mathbf{G}}^\dagger \mathbf{z} \in \mathbb{C}^{M \times 1}. \label{eq:alpha-opt-sketched-no-reg}
\end{align}
Then, to synthesize the physical $N_{\rm r}$-dimensional reception combining weight vector required by our system for single-shot inference, we project $\boldsymbol{\alpha}_{\rm sketch}$ back via the measurement codebook:
\begin{equation}
    \mathbf{w}_{\rm sketch} \triangleq \mathbf{A}^{\rm H} \boldsymbol{\alpha}_{\rm sketch} = \mathbf{A}^{\rm H} \tilde{\mathbf{G}}^\dagger \mathbf{z} \in \mathbb{C}^{N_{\rm r} \times 1}. \label{eq:weights-opt-sketched}
\end{equation}
We use the term ``sketching operator'' for $\mathbf{A}$, borrowing the framework from randomized linear algebra, and therefore, $\tilde{\mathbf{G}}$ and $\mathbf{w}_{\rm sketch}$ will be referred to as the sketched observation matrix and weight vector, respectively.
Subsequently, the output of the proposed \gls{XL}-\gls{MIMO}-\gls{ELM} system once the sketching weights are determined can be expressed, similar to \eqref{eq:trained-elm-output}, as:
\begin{equation}
\hat{\mathbf{z}}_{\rm sketch} 
\triangleq
\mathbf{G} \mathbf{w}_{\rm sketch}
\overset{\eqref{eq:weights-opt-sketched}}{=}
\mathbf{G}  \mathbf{A}^{\rm H} \tilde{\mathbf{G}}^\dagger \mathbf{z}
\overset{\eqref{eq:G-sketch}}{=} 
\tilde{\mathbf{G}} \tilde{\mathbf{G}}^\dagger \mathbf{z}.\label{eq:sketch-output}
\end{equation}

The motivation for the given form for $\mathbf{w}_{\rm sketch}$ in~\eqref{eq:weights-opt-sketched} is that, when the full measurement dimension $M=N_{\rm r}$ is used, the resulting combining vector coincides with the original \gls{LS} solution from~\eqref{eq:weights-opt-no-reg}, and therefore, the sketched version of the \gls{XL}-\gls{MIMO}-\gls{ELM} leads to the same predictions as if the sketching operation were not in place. This is formalized below.

\begin{proposition}[Exact \gls{LS} weight recovery for $M=N_r$]\label{prop:equal-weights-for-Nr}
Let $M=N_{\rm r}$ and let $\mathbf{A}\in\mathbb{C}^{N_{\rm r}\times N_{\rm r}}$ be full rank. Then, the sketching weight vector in~\eqref{eq:weights-opt-sketched} satisfies $\mathbf{w}_{\mathrm{sketch}} = \mathbf{w}_{\mathrm{LS}}$.
\end{proposition}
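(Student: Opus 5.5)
The plan is a direct algebraic computation. I work under the same standing assumption that makes $\mathbf{w}_{\rm LS}$ in~\eqref{eq:weights-opt-no-reg} well defined, namely that $\mathbf{G}^{\rm H}\mathbf{G}$ is invertible, i.e., $\mathbf{G}\in\mathbb{C}^{D\times N_{\rm r}}$ has full column rank. For $N_{\rm r}=D$ this holds with probability $1$ by Proposition~\ref{prop:universal-approximation}, and more generally it is the regime $D\ge N_{\rm r}$ implicit in writing $(\mathbf{G}^{\rm H}\mathbf{G})^{-1}$.

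The key steps are as follows.

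\textbf{Step 1 (rank of the sketch).} Since $M=N_{\rm r}$ and $\mathbf{A}$ is full rank, $\mathbf{A}^{\rm H}$ is square and invertible. Hence $\tilde{\mathbf{G}}=\mathbf{G}\mathbf{A}^{\rm H}$ in~\eqref{eq:G-sketch} also has full column rank $N_{\rm r}$. Its pseudo-inverse therefore takes the left-inverse form
\begin{equation*}
\tilde{\mathbf{G}}^\dagger=\big(\tilde{\mathbf{G}}^{\rm H}\tilde{\mathbf{G}}\big)^{-1}\tilde{\mathbf{G}}^{\rm H}.
\end{equation*}

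\textbf{Step 2 (factor the Gram matrix).} Substituting $\tilde{\mathbf{G}}=\mathbf{G}\mathbf{A}^{\rm H}$ gives
\begin{equation*}
\tilde{\mathbf{G}}^{\rm H}\tilde{\mathbf{G}}=\mathbf{A}\,\mathbf{G}^{\rm H}\mathbf{G}\,\mathbf{A}^{\rm H}.
\end{equation*}
This is a product of three invertible $N_{\rm r}\times N_{\rm r}$ matrices, so its inverse is $\mathbf{A}^{-{\rm H}}(\mathbf{G}^{\rm H}\mathbf{G})^{-1}\mathbf{A}^{-1}$. Therefore
\begin{equation*}
\tilde{\mathbf{G}}^\dagger=\mathbf{A}^{-{\rm H}}(\mathbf{G}^{\rm H}\mathbf{G})^{-1}\mathbf{A}^{-1}\mathbf{A}\mathbf{G}^{\rm H}=\mathbf{A}^{-{\rm H}}\mathbf{G}^\dagger.
\end{equation*}

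\textbf{Step 3 (conclude).} Inserting this into~\eqref{eq:weights-opt-sketched} yields
\begin{equation*}
\mathbf{w}_{\rm sketch}=\mathbf{A}^{\rm H}\mathbf{A}^{-{\rm H}}\mathbf{G}^\dagger\mathbf{z}=\mathbf{G}^\dagger\mathbf{z}=\mathbf{w}_{\rm LS}.
\end{equation*}
As a corollary, via~\eqref{eq:sketch-output}, the sketched and unsketched predictions coincide.

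The main obstacle is the rank caveat. The identity $(\mathbf{G}\mathbf{B})^\dagger=\mathbf{B}^{-1}\mathbf{G}^\dagger$ for invertible $\mathbf{B}$ is false in general for rank-deficient $\mathbf{G}$. In that case $\mathbf{w}_{\rm sketch}$ still minimizes the LS objective, because $\mathbf{A}^{\rm H}$ is a bijection of $\mathbb{C}^{N_{\rm r}}$. However, $\tilde{\mathbf{G}}^\dagger$ selects the minimum-norm $\boldsymbol{\alpha}$ rather than the minimum-norm $\mathbf{w}$, so the two vectors need not coincide unless $\mathbf{A}$ is (a scaled) unitary. I would therefore state the full-column-rank hypothesis explicitly, justifying it via Proposition~\ref{prop:universal-approximation} and the definition~\eqref{eq:weights-opt-no-reg}. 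I would then remark that, without it, the two solutions are still equally optimal, and the fitted outputs $\mathbf{G}\mathbf{w}$ coincide, since both equal the projection of $\mathbf{z}$ onto the column space of $\mathbf{G}$.
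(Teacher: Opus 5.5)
Your proof is correct and follows the paper's route. Both arguments use full column rank of $\mathbf{G}$ (justified via Proposition~\ref{prop:universal-approximation}) and invertibility of $\mathbf{A}$ to get $\tilde{\mathbf{G}}^\dagger=\mathbf{A}^{-{\rm H}}\mathbf{G}^\dagger$, and then cancel $\mathbf{A}^{\rm H}$; the paper invokes the reverse-order law and writes this as $\mathbf{A}(\mathbf{A}^{\rm H}\mathbf{A})^{-1}\mathbf{G}^\dagger$, whereas you verify the identity directly through the Gram matrix, and your remark on the rank-deficient case is a sound caveat that the paper leaves implicit.
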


\begin{proof}
Since $\mathbf{G}$ has full column rank (under the assumption of rich scattering conditions, discussed in Proposition~\ref{prop:universal-approximation}) and $\mathbf{A}$ is invertible, the pseudoinverse expression may be simplified as:
\begin{equation}
\widetilde{\mathbf{G}}^\dagger
=
\left(\mathbf{G}\mathbf{A}^{\rm H}\right)^\dagger
=
\left(\mathbf{A}^{\rm H}\right)^\dagger
\mathbf{G}^\dagger
=
\mathbf{A}\left(\mathbf{A}^{\rm H}\mathbf{A}\right)^{-1}
\mathbf{G}^\dagger.
\label{eq:sketch_ls_reverse_order}
\end{equation}
Substituting~\eqref{eq:sketch_ls_reverse_order} into~\eqref{eq:weights-opt-sketched} yields:
\begin{align}\label{eq:sketch_ls_weight_derivation}
\mathbf{w}_{\mathrm{sketch}}
=
\mathbf{A}^{\rm H}\mathbf{A}
\left(\mathbf{A}^{\rm H}\mathbf{A}\right)^{-1}
\mathbf{G}^\dagger\mathbf{z} =
\mathbf{G}^\dagger\mathbf{z}
\overset{\eqref{eq:weights-opt-no-reg}}{=}
\mathbf{w}_{\mathrm{LS}},
\end{align}
which completes the proof.
\end{proof}

\textit{Corollary.} Since $\mathbf{w}_{\rm LS} = \mathbf{w}_{\rm sketch}$ for $M = N_{\rm r}$, it follows from~\eqref{eq:trained-elm-output} and~\eqref{eq:sketch-output} that the output of the sketched and the original versions of the proposed \gls{XL}-\gls{MIMO}-\gls{ELM} system are exactly the same, i.e., $\hat{z} = \hat{z}_{\rm sketch}$.

\begin{proposition}[Monotonicity of prediction error]\label{prop:monotonicity}
For any nested sketch sizes $M_1 \le M_2$, with corresponding sketch outputs $\hat{\mathbf{z}}_{\rm sketch}(M_1)$ and $\hat{\mathbf{z}}_{\rm sketch}(M_2)$ following~\eqref{eq:sketch-output}, the sketching residual prediction error is monotonically non-increasing, i.e.:
\begin{equation}
    \left\| \hat{\mathbf{z}} - \hat{\mathbf{z}}_{\rm sketch}(M_2) \right\|_2 \le \left\| \hat{\mathbf{z}} - \hat{\mathbf{z}}_{\rm sketch}(M_1) \right\|_2
\end{equation}
\end{proposition}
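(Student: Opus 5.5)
We will read ``nested'' as follows: the sketching operator for $M_2$ contains the rows of the one for $M_1$. That is, $\mathbf{A}_{M_2}=[\mathbf{A}_{M_1};\mathbf{B}]$ for some $\mathbf{B}\in\mathbb{C}^{(M_2-M_1)\times N_{\rm r}}$. Then \eqref{eq:G-sketch} gives $\tilde{\mathbf{G}}(M_2)=\mathbf{G}\mathbf{A}_{M_2}^{\rm H}=[\tilde{\mathbf{G}}(M_1),\,\mathbf{G}\mathbf{B}^{\rm H}]$. Consequently the column space $\mathcal{S}_1\triangleq\mathrm{range}(\tilde{\mathbf{G}}(M_1))$ is contained in $\mathcal{S}_2\triangleq\mathrm{range}(\tilde{\mathbf{G}}(M_2))$. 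Both are contained in $\mathcal{S}\triangleq\mathrm{range}(\mathbf{G})$, since every sketched column is a linear combination of the columns of $\mathbf{G}$.

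The proof then uses orthogonal projectors. By \eqref{eq:sketch-output}, $\hat{\mathbf{z}}_{\rm sketch}(M_k)=\mathbf{P}_k\mathbf{z}$, where $\mathbf{P}_k\triangleq\tilde{\mathbf{G}}(M_k)\tilde{\mathbf{G}}^\dagger(M_k)$ is the orthogonal projector onto $\mathcal{S}_k$. This is a standard Moore--Penrose property and needs no rank assumption. Likewise, by \eqref{eq:trained-elm-output}, $\hat{\mathbf{z}}=\mathbf{P}\mathbf{z}$ with $\mathbf{P}\triangleq\mathbf{G}\mathbf{G}^\dagger$ the projector onto $\mathcal{S}$. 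We use the pseudoinverse form throughout, so rank deficiency of $\mathbf{G}$ causes no trouble.

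From the nesting $\mathcal{S}_1\subseteq\mathcal{S}_2\subseteq\mathcal{S}$ we get $\mathbf{P}\mathbf{P}_k=\mathbf{P}_k$. Hence $\mathbf{P}_k\mathbf{z}=\mathbf{P}_k(\mathbf{P}\mathbf{z})=\mathbf{P}_k\hat{\mathbf{z}}$: the sketched output is the projection of the full output $\hat{\mathbf{z}}$ onto $\mathcal{S}_k$. The residual is therefore $\hat{\mathbf{z}}-\hat{\mathbf{z}}_{\rm sketch}(M_k)=(\mathbf{I}-\mathbf{P}_k)\hat{\mathbf{z}}$, and its squared norm equals the squared distance $\min_{\mathbf{v}\in\mathcal{S}_k}\|\hat{\mathbf{z}}-\mathbf{v}\|_2^2$.

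Since $\mathcal{S}_1\subseteq\mathcal{S}_2$, the minimum over the larger set cannot exceed the minimum over the smaller one, which gives the claimed inequality. Equivalently, Pythagoras gives the explicit relation
\begin{equation*}
\|(\mathbf{I}-\mathbf{P}_1)\hat{\mathbf{z}}\|_2^2=\|(\mathbf{I}-\mathbf{P}_2)\hat{\mathbf{z}}\|_2^2+\|(\mathbf{P}_2-\mathbf{P}_1)\hat{\mathbf{z}}\|_2^2,
\end{equation*}
where $\mathbf{P}_2-\mathbf{P}_1$ is the projector onto $\mathcal{S}_2\ominus\mathcal{S}_1$. Taking square roots yields the statement.

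The main step to get right is the identity $\mathbf{P}_k\mathbf{z}=\mathbf{P}_k\hat{\mathbf{z}}$. This identity is what turns a residual measured relative to $\hat{\mathbf{z}}$, rather than relative to $\mathbf{z}$, into a projection distance. It depends on $\mathcal{S}_k\subseteq\mathrm{range}(\mathbf{G})$, which holds because every sketched column has the form $\mathbf{G}\mathbf{a}_m$. We should also state the nestedness interpretation explicitly. Without it, for two unrelated random $\mathbf{A}$'s, the claim would hold only in expectation or not at all, so the proof should make clear that the $M_2$-sketch augments the $M_1$-sketch.
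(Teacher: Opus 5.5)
Your proposal is correct and follows essentially the same route as the paper. Both arguments append rows to $\mathbf{A}$ to get $\mathcal{R}(\tilde{\mathbf{G}}_{M_1})\subseteq\mathcal{R}(\tilde{\mathbf{G}}_{M_2})\subseteq\mathcal{R}(\mathbf{G})$, identify $\hat{\mathbf{z}}_{\rm sketch}(M_k)$ as the orthogonal projection of $\hat{\mathbf{z}}$ onto the sketched range, and conclude by best approximation over the larger subspace; your explicit justification of $\mathbf{P}_k\mathbf{z}=\mathbf{P}_k\hat{\mathbf{z}}$ and the Pythagorean identity refine steps the paper only asserts (note that the step actually uses $\mathbf{P}_k\mathbf{P}=\mathbf{P}_k$, the adjoint of your stated $\mathbf{P}\mathbf{P}_k=\mathbf{P}_k$).
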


\begin{proof}
    This proposition's proof is delegated in Appendix~\ref{appendix:monotonicity-proof}.
\end{proof}

Propositions~\ref{prop:equal-weights-for-Nr} and~\ref{prop:monotonicity} demonstrate that the sketching procedure leads to approximately equal performance with sufficiently large measurement dimension $M$ (and fully equal performance for $M = N_{\rm r}$), and establish the fundamental design criteria for the sketching matrix $\mathbf{A}$. 
According to Proposition~\ref{prop:equal-weights-for-Nr}, $\mathbf{A}$ must be full rank. 
Consequently, randomized matrices or subsets of the \gls{DFT} matrix (that are typical combiner choices~\cite{Yuan2023DFT}) satisfy this condition. 
Furthermore, Proposition~\ref{prop:monotonicity} dictates that nested sketching structures guarantee a non-increasing residual error. Therefore, progressively expanding \gls{DFT} matrices or nested randomized unit-amplitude phase-shifting matrices represent structurally sound and optimal choices for the measurement collection matrix $\mathbf{A}$.

\subsection{Wave-Domain Combining with Linear SIM}

In the previous sections, we provided insights on the desired values of the reception combiner $\mathbf{w}$, the sketching combining weights $\mathbf{A}$, as well as the more practical choice of $\mathbf{w}_{\rm sketch}$. Whether these weights can be realized by the response of the RX's $L$-layer linear \gls{SIM} remains a yet open problem. In fact, several approaches in the literature have focused on optimizing the responses of linear \glspl{SIM} to approximate certain functionalities, 
such as the \gls{DFT}~\cite{AYG24_SIM_DFT} and arbitrary hidden neural network layers~\cite{Gunduz_Layer_Approximation, Pandolfo_SIM_semantic_alignment}.
However, all those approaches rely on iterative gradient-based optimization algorithms and, most importantly, do not guarantee that the desired responses will be achieved with sufficiently small error. In this section, we present an analytical approach for the configuration of the linear \gls{SIM} 
that guarantees exact solution to the desired combining weights $\mathbf{w}$, thus, realizing entirely \gls{OTA} the inference output defined in~\eqref{eq:sketch-output}. 

Let us assume that the responses $\{\boldsymbol{\phi}_l\}_{l=1}^{L-1}$ of the $L-1$ linear \gls{MS} layers at the RX, following the front-end \gls{NL} \gls{MS} layer indexed with $l=0$ and excluding the final $L$-indexed linear \gls{MS} layer, are randomly chosen.
The diffracted signal $g(\mathbf{y})$ from the \gls{NL} layer undergoes a series of linear transformations affected by the deterministic propagation model described in~\eqref{eq:sim-propagation}. For convenience, let $\mathbf{\Omega} \triangleq \mathbf{\Psi}_L \prod_{l=L-1}^{1} \mathbf{\Phi}_l \mathbf{\Psi}_l   \in \mathbb{C}^{N_{\rm r} \times N_{\rm r}}$ denote the response of the system including the propagation between the $l=0$ (\gls{NL}) and $l=1$ (linear) \gls{MS} layers up to the propagation between the $L-1$ and $L$ linear layers.
We henceforth focus on tuning exclusively the amplitude and phase responses of the final $L$-indexed linear \gls{MS} layer. In particular, the objective is to optimize $\mathbf{\Phi}_L$ such that the response of the entire $L$-layer linear \gls{SIM}
emulates the reception combining vectors $\mathbf{a}_m$'s and  $\mathbf{w}_{\rm sketch}$ (or even $\mathbf{w}_{\rm LS}$) that are used by the proposed \gls{XL}-\gls{MIMO}-\gls{ELM} framework. As shown below, this can be achieved analytically.

\begin{proposition}[Closed-form linear \gls{SIM} configuration for wave-domain combining]\label{prop:closed-form-MS}
    Let $\mathbf{\Omega} \in \mathbb{C}^{N_{\rm r} \times N_{\rm r}}$ be the full-rank propagation matrix among the \gls{MS} layers indexed with $l=0,\ldots,L-1$, 
    and let $[\boldsymbol{\omega}]_j \neq 0$ $\forall j=1,\dots,N_{\rm r}$ for the wave propagation vector in~\eqref{eq:rayleigh-somerfeld-vector} from the final $L$-indexed linear \gls{MS} layer to the single antenna element attached to the RX \gls{RFC}. For any arbitrary combining vector $\mathbf{w} \in \mathbb{C}^{N_{\rm r} \times 1}$, the phase and amplitude shifts $\boldsymbol{\phi}_L^{\mathbf{w}}$ of the final \gls{MS} layer that exactly match the noise-free physical received signal to the scaled target functional $\gamma \mathbf{w}^{\rm H} g(\mathbf{y})$ are given by the following closed-form configuration:
\begin{equation}\label{eq:sim-closed-form}
    \boldsymbol{\phi}_L^{\mathbf{w}} = \gamma \left[ \frac{\left[ \mathbf{w}^{\rm H} \mathbf{\Omega}^{-1} \right]_1}{[\boldsymbol{\omega}]_1}, \dots, \frac{\left[ \mathbf{w}^{\rm H} \mathbf{\Omega}^{-1} \right]_{N_{\rm r}}}{[\boldsymbol{\omega}]_{N_{\rm r}}} \right]^\top,
\end{equation}
\textit{where $\gamma \in \mathbb{R}^+$ is a global scaling factor satisfying $\gamma \le \frac{1}{\zeta}$ with $\zeta \triangleq \max_n \left( \left| \left[ \mathbf{w}^{\rm H} \mathbf{\Omega}^{-1} \right]_n/{[\boldsymbol{\omega}]_n} \right| \right)$.}
\end{proposition}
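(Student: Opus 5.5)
The plan is to write the noise-free received signal with the factorization that isolates the last layer, and then match coefficients. By~\eqref{eq:sim-propagation} and the definition of $\mathbf{\Omega}$, the full linear \gls{SIM} response is $\mathbf{\Phi}=\mathbf{\Phi}_L\mathbf{\Omega}$. The noise-free output in~\eqref{eq:output} therefore reads
\begin{equation*}
\boldsymbol{\omega}\mathbf{\Phi}_L\mathbf{\Omega}\, g(\mathbf{y}) = \big(\boldsymbol{\omega}\,{\rm diag}(\boldsymbol{\phi}_L)\big)\mathbf{\Omega}\, g(\mathbf{y}) = \big(\boldsymbol{\omega}\odot\boldsymbol{\phi}_L^\top\big)\mathbf{\Omega}\, g(\mathbf{y}),
\end{equation*}
where the key observation is that a diagonal matrix acting on a row vector is simply an elementwise product. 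The target is for this expression to equal $\gamma\mathbf{w}^{\rm H}g(\mathbf{y})$ for every input $g(\mathbf{y})$, which by linearity is equivalent to the row-vector identity $(\boldsymbol{\omega}\odot\boldsymbol{\phi}_L^\top)\mathbf{\Omega}=\gamma\mathbf{w}^{\rm H}$.

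Since $\mathbf{\Omega}$ is full rank by hypothesis, I will right-multiply by $\mathbf{\Omega}^{-1}$. This gives the $N_{\rm r}$ decoupled scalar equations $[\boldsymbol{\omega}]_n[\boldsymbol{\phi}_L]_n=\gamma[\mathbf{w}^{\rm H}\mathbf{\Omega}^{-1}]_n$. Because every $[\boldsymbol{\omega}]_n\neq 0$, each equation can be solved uniquely by division, which yields exactly~\eqref{eq:sim-closed-form}. Substituting back and reversing these steps then verifies sufficiency, so the configuration in~\eqref{eq:sim-closed-form} matches the target functional for every input.

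The remaining step, which is the only genuine constraint, is physical realizability: each entry must satisfy $|[\boldsymbol{\phi}_L]_n|=|\alpha_n^L|\le 1$, with the phase free. From~\eqref{eq:sim-closed-form}, $|[\boldsymbol{\phi}_L]_n|=\gamma\,|[\mathbf{w}^{\rm H}\mathbf{\Omega}^{-1}]_n/[\boldsymbol{\omega}]_n|\le\gamma\zeta$, so choosing $\gamma\le 1/\zeta$ guarantees feasibility. The amplitude and phase are then read off as $|\alpha_n^L|=|[\boldsymbol{\phi}_L]_n|$ and $\vartheta_n^L=-\arg\{[\boldsymbol{\phi}_L]_n\}$.

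Two points need care. If $\mathbf{w}=\mathbf{0}$, then $\zeta=0$, any $\gamma$ works, and the trivial configuration suffices. I would also remark that the positive scale $\gamma$ is harmless for inference: it can be absorbed into the digital output or the decision threshold, since scaling $\hat{\mathbf{z}}$ by a positive real leaves the sign-based classification unchanged. The main obstacle, if any, is justifying that $\mathbf{\Omega}$ is invertible, but this is assumed in the statement and deferred to Proposition~\ref{prop:sim-full-rank}.
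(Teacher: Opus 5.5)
Your proposal is correct and follows essentially the same route as the paper: write the response as $\boldsymbol{\omega}\mathbf{\Phi}_L\mathbf{\Omega}$, use the invertibility of $\mathbf{\Omega}$ to reduce $\boldsymbol{\omega}\mathbf{\Phi}_L\mathbf{\Omega}=\gamma\mathbf{w}^{\rm H}$ to the decoupled equations $[\boldsymbol{\omega}]_n[\boldsymbol{\phi}_L]_n=\gamma[\mathbf{w}^{\rm H}\mathbf{\Omega}^{-1}]_n$, solve each by division, and impose $\gamma\le 1/\zeta$ for passivity. The only cosmetic difference is that the paper phrases the inversion as a change of coordinates $\mathbf{s}=\mathbf{\Omega}g(\mathbf{y})$ rather than as a right-multiplication by $\mathbf{\Omega}^{-1}$; your treatment of the degenerate case $\mathbf{w}=\mathbf{0}$, where $1/\zeta$ is undefined, is a small addition the paper omits.
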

\begin{proof}
The proof of this proposition is given in Appendix~\ref{appendix:closed-form-MS-proof}.  
\end{proof}

Proposition~\ref{prop:closed-form-MS} makes the assumption that $\mathbf{\Omega}$ is full rank.
This is ascertained via the following result.
\begin{proposition}[Full-rankness of the linear \gls{SIM} matrix $\boldsymbol{\Omega}$]\label{prop:sim-full-rank}
    Let $\{\boldsymbol{\phi}_l\}_{l=1}^{L-1}$ be arbitrarily sampled from an absolutely continuous distribution.
    Assuming all \glspl{MS} elements spatial coordinates are configured such that they do not belong to the measure-zero set of degenerate placements, then ${\rm rank}(\mathbf{\Omega}) = N_{\rm r}$ almost surely.
\end{proposition}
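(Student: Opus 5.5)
We need ${\rm rank}(\mathbf{\Omega})=N_{\rm r}$ for $\mathbf{\Omega} = \mathbf{\Psi}_L \prod_{l=L-1}^{1} \mathbf{\Phi}_l \mathbf{\Psi}_l$. The plan uses multiplicativity of the determinant:
\begin{equation*}
\det(\mathbf{\Omega}) = \det(\mathbf{\Psi}_L)\prod_{l=1}^{L-1}\det(\mathbf{\Phi}_l)\det(\mathbf{\Psi}_l)
= \prod_{l=1}^{L}\det(\mathbf{\Psi}_l)\prod_{l=1}^{L-1}\prod_{n=1}^{N_{\rm r}}[\boldsymbol{\phi}_l]_n .
\end{equation*}
It then suffices to show that each factor is nonzero almost surely. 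The diagonal factors are the easy part. Each $[\boldsymbol{\phi}_l]_n$ is drawn from an absolutely continuous distribution on $\mathbb{C}$, so $\Pr([\boldsymbol{\phi}_l]_n=0)=0$, since the single point $\{0\}$ has Lebesgue measure zero. A finite union bound over the $(L-1)N_{\rm r}$ entries then gives $\det(\mathbf{\Phi}_l)\neq 0$ for all $l$ almost surely.

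The main step, and the expected obstacle, is the non-singularity of the deterministic Rayleigh--Sommerfeld matrices $\mathbf{\Psi}_l$ in~\eqref{eq:rayleigh-somerfeld}. These matrices carry no randomness, so the argument must be geometric and rest on the non-degenerate placement hypothesis. I would view $\det(\mathbf{\Psi}_l)$ as a function of the element coordinates $\mathbf{p}\in\mathbb{R}^{3\cdot 2N_{\rm r}}$ of the two adjacent layers. Each entry $\frac{d_MS_M}{d^2}\big(\frac{1}{2\pi d}-\frac{\jmath}{\lambda}\big)e^{\jmath 2\pi d/\lambda}$ is real-analytic in $\mathbf{p}$ wherever $d_{n,n'}>0$, which always holds since the layers are separated by $d_M>0$. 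Hence $\det(\mathbf{\Psi}_l)$ is real-analytic on a connected open domain of placements.

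The standard fact that the zero set of a real-analytic function that is not identically zero has Lebesgue measure zero then reduces the task to exhibiting a single placement with $\det(\mathbf{\Psi}_l)\neq 0$. For the witness, I would take the limit $d_M\to 0$ relative to the element spacing, with elements aligned. The diagonal entries, where $d_{n,n}=d_M$, scale like $d_M^{-2}$, while the off-diagonal entries stay bounded by the fixed lateral distances. After normalization $\mathbf{\Psi}_l$ becomes diagonally dominant, hence invertible, and by continuity this persists for small $d_M$. An alternative witness would spread one layer's elements widely so that the matrix is near-diagonal.

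This shows that the degenerate placements, namely the union over $l$ of the zero sets of $\det(\mathbf{\Psi}_l)$, form a finite union of measure-zero sets. That is precisely the excluded set in the statement. Combining the three steps, $\det(\mathbf{\Omega})\neq 0$ almost surely, and therefore ${\rm rank}(\mathbf{\Omega})=N_{\rm r}$.
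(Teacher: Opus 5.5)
Your proposal is correct and follows the paper's proof step for step. Both use determinant multiplicativity, the almost-sure nonvanishing of the random factors $\det(\boldsymbol{\Phi}_l)$, the real-analyticity of $\det(\boldsymbol{\Psi}_l)$ in the element coordinates together with the identity theorem for analytic functions, and a near-diagonal witness placement.

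The only difference is the witness. You shrink $d_M$ against a fixed aligned grid, whereas the paper enlarges the pitch at fixed $d_M$. This is the same idea, but if $d_M$ is a fixed system constant rather than one of the coordinates, phrase it the paper's way: scale the lateral grids of \emph{both} aligned layers together, so that $\boldsymbol{\Psi}_l\to C\mathbf{I}_{N_{\rm r}}$ with $C\neq 0$. Spreading only one layer does not work, because the diagonal distances then grow with $\delta$ as well. That applies to your alternative witness, and also to what the paper's parametrization $\mathbf{r}_n^{(l)}=\mathbf{r}_n^{(l-1)}+\delta\mathbf{v}_n$ literally does.
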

\begin{proof}
The proposition's proof is presented in Appendix~\ref{appendix:sim-full-rank-proof}.    
\end{proof}

\begin{remark}
Proposition~\ref{prop:sim-full-rank} implies that the minimum number of \gls{MS} layers for $\mathbf{\Omega}$ to be full rank is $1$. Hence, for the considered inference operations, $L=2$ suffices, i.e., \gls{RX}'s heterogeneous \gls{SIM} at the \gls{XL}-\gls{MIMO}-\gls{ELM} system needs only the front-end \gls{NL} \gls{MS}, an intermediate linear \gls{MS} with fixed random responses, and a third linear \gls{MS} layer with controllable responses using~\eqref{eq:sim-closed-form}.
\end{remark}

\begin{algorithm}[!t]
\caption{Training with the Proposed XL-MIMO-ELM}
\label{alg:training}
\begin{algorithmic}[1]
\Statex \textbf{Inputs:} Dataset $\mathcal{D} = \{(\mathbf{x}^{(i)}, z^{(i)})\}_{i=1}^D$, propagation matrix $\boldsymbol{\Omega}$, sketching codebook $\mathbf{A}$, and arbitrary \gls{AM} phase shifts $\boldsymbol{\psi}$.
\For{$i=1,\dots,D$}
\State The TX constructs signal $\bar{\mathbf{x}}^{(i)}$ via \eqref{eq:transmit-signal} and transmits it.
\For{$m=1,\dots,M$}
\State The RX selects a combining vector $\mathbf{a}_m = ([\mathbf{A}]_{m,:})^{\rm H}$,
\State configures the $L$-indexed linear \gls{MS} as $\boldsymbol{\phi}^{\mathbf{a}_m}_L$ via~\eqref{eq:sim-closed-form},
\State observes $\hat{z}^{(i)}_m\!=\! \boldsymbol{\omega}{\rm diag}(\boldsymbol{\phi}^{\mathbf{a}_m}_L)\mathbf{\Omega}g(\mathbf{y}^{(i)})\!+\!\tilde{n}^{(i)}_m$, 
and\label{alg:training:line:observe} 
\State computes $[\tilde{\mathbf{G}}]_{i,m} = (\hat{z}^{(i)}_m)^\ast / \gamma $ to construct ~\eqref{eq:G-sketch}.
\EndFor
\EndFor
\State The RX calculates the weights $\mathbf{w}_{\rm sketch}$ via~\eqref{eq:weights-opt-sketched}, and computes the \gls{MS} configuration $\boldsymbol{\phi}^{\mathbf{w}_{\rm sketch}}_L$ via~\eqref{eq:sim-closed-form}. 
\Statex \textbf{Output:} 
Trained response $\boldsymbol{\phi}^{\mathbf{w}_{\rm sketch}}_L$ for the $L$-indexed linear MS layer to be used for inference.
\end{algorithmic}
\end{algorithm}

\section{XL-MIMO-ELM Training and Deployment}\label{sec:training-deployment}

Having described all particularities of the proposed \gls{XL}-\gls{MIMO}-\gls{ELM} system architecture, we herein provide a complete description of how it implements \gls{EI}, in particular, its training and deployment/inference phases. 

We commence with the training process summarized in Algorithm~\ref{alg:training}. 
As observed, this process involves two main stages.
During the first stage, all data points are transmitted once and repeatedly observed $M$ times at the RX, with changing configurations at the final $L$-indexed linear MS layer, to collect the sketching measurements. In the second stage, the optimal combining weights are first calculated, and then the response of the $L$-th layer is finally configured so that the overall $L$-layer linear \gls{SIM} approximates the \gls{LS} weight vector in~\eqref{eq:weights-opt-sketched}. This training process requires $O(D)$ transmissions,  
necessitating $M+1$ reconfigurations of the $L$-th linear MS layer. Note that no explicit knowledge of $\mathbf{H}$ is required and a single RX \gls{RFC} is used for signal observation.
Since the configurations of the first $L-1$ linear \gls{MS} layers may be selected to be static, and they are independent of the training data and \gls{CSI}, we assume that the propagation quantities $\boldsymbol{\Omega}$ and $\boldsymbol{\omega}$ in the RX's front-end have been accurately measured during the structure's manufacturing process. 
Computationally, the complexity of Algorithm~\ref{alg:training} is dominated by the \gls{LS} solution, which is $\Theta(D M^2)$, assuming it is solved through matrix factorization.
Since $M \ll D$, the computational overhead is therefore quasi-linear to the dataset size.
Considering the low training overhead of the overall methodology, it is reasonable to assume that training may take place within the channel coherence time.
As a simple feasibility example, a bandwidth of $B=180$~KHz corresponds to a symbol duration of approximately $T_s = 5.6$~$\mu$s.
Hence, a typical coherence interval of $T_c=5$~ms accommodates approximately $T_c/T_s\cong893$ transmissions.
Since the $N_{\rm t}$ features of each data point are transmitted in parallel, this is sufficient for training with reasonably small data sets, if the $M$ measurements via wave-domain combining are collected within each symbol interval.


\begin{algorithm}[!t]
\caption{Inference with the Proposed XL-MIMO-ELM}
\label{alg:inference}
\begin{algorithmic}[1]
\Statex \textbf{Inputs:} Arbitrary \gls{AM} phase shifts $\boldsymbol{\psi}$, propagation matrix $\boldsymbol{\Omega}$, configuration of the $L$-indexed linear MS layer $\boldsymbol{\phi}^{\mathbf{w}_{\rm sketch}}_L$ from Algorithm~\ref{alg:training}, and input data point $\mathbf{x}$ for inference.
\State The TX constructs signal $\bar{\mathbf{x}}$ via \eqref{eq:transmit-signal} and transmits it, while the RX uses the trained configuration $\boldsymbol{\phi}^{\mathbf{w}}_L$ to implement $\mathbf{w}_{\rm sketch}$ entirely in the wave domain via its $L$-layer linear \gls{SIM}.
\State The RX observes $\hat{z} = \mathbf{w}_{\rm sketch}^{\rm H} g(\mathbf{y}) + \tilde{n}$. 
\label{alg:inf:line:output}
\Statex \textbf{Output:} Inference result $\hat{z}$.
\end{algorithmic}
\end{algorithm}  

The inference process with the proposed \gls{XL}-\gls{MIMO}-\gls{ELM} system has an even lower overhead.
For a single inference data point $\mathbf{x}$, this process is detailed in Algorithm~\ref{alg:inference}. Note that only a single transmission is now required and no sketching matrices or digital computations are required, since the optimal combiner has been computed and implemented during training in the wave domain through the linear \gls{SIM}. This implies that the inference complexity is of pseudo-constant time. 

It is finally noted that, in the proposed \gls{XL}-\gls{MIMO}-\gls{ELM} framework, both training and inference take place within the same channel coherence block.
In practice, it is reasonable for the system to apply inference over batches of new data to avoid frequent re-training steps under different coherence blocks.
Besides, with well regularized models, our framework is expected to be able to overcome small fluctuations of the XL MIMO channel $\mathbf{H}$ and also to be able to be rapidly re-tuned to time-evolving channels, e.g., similar to the methodology of~\cite{Stylianopoulos_MIMO_ELM}.
We leave such investigations to future work.

\section{Numerical Results and Discussion}\label{sec:results}

\begin{figure*}[t]
    \centering
    \begin{subfigure}[b]{0.32\textwidth}
        \centering
        \includegraphics{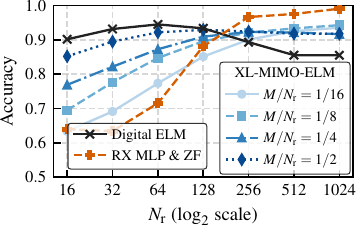}
        \caption{WBCD.}
        \label{fig:results_breast_cancer}
    \end{subfigure}
    \hfill
    \begin{subfigure}[b]{0.32\textwidth}
        \centering
        \includegraphics{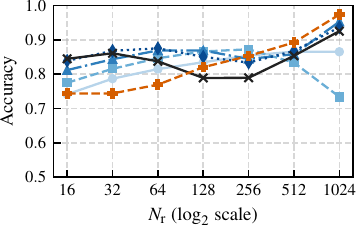}
        \caption{Parkinson's.}
        \label{fig:results_parkinsons}
    \end{subfigure}
    \hfill
    \begin{subfigure}[b]{0.32\textwidth}
        \centering
        \includegraphics{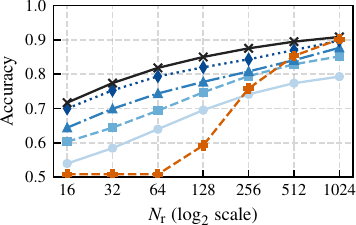}
        \caption{MNIST.}
        \label{fig:results_mnist}
    \end{subfigure}
    \caption{Classification accuracy the proposed \gls{XL}-\gls{MIMO}-\gls{ELM} approach versus the number of metamaterials $N_{\rm r}$ (corresponding to the number of trainable parameters) at the \gls{RX}'s \gls{NL} \gls{MS} layer, considering three distinct datasets and different sketching values $M$.}
    \label{fig:overall_results}
\end{figure*}

\begin{figure}[t]
    \centering
    \begin{subfigure}{\columnwidth}
        \centering
        \includegraphics[width=0.8\linewidth]{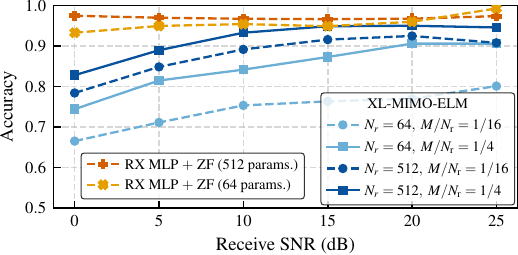}
        \caption{WBCD}
    \end{subfigure}
    \vspace{0.05cm}
    
    \begin{subfigure}{\columnwidth}
        \centering
        \includegraphics[width=0.8\linewidth]{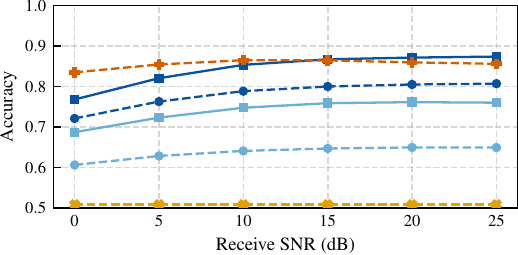}
        \caption{MNIST}
    \end{subfigure}

    \caption{Performance of the \gls{XL}-\gls{MIMO}-\gls{ELM} approach and digital \gls{EI} benchmark versus different levels of the measurement \gls{SNR}.}
    \label{fig:results-snr}
\end{figure}

\begin{figure}[t]
    \centering
    \begin{subfigure}{\columnwidth}
        \centering
        \includegraphics[width=0.8\linewidth]{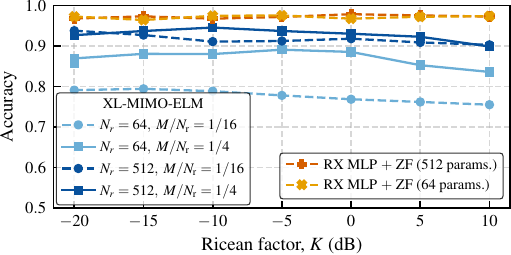}
        \caption{WBCD}
    \end{subfigure}
    \vspace{0.05cm}

    \begin{subfigure}{\columnwidth}
        \centering
        \includegraphics[width=0.8\linewidth]{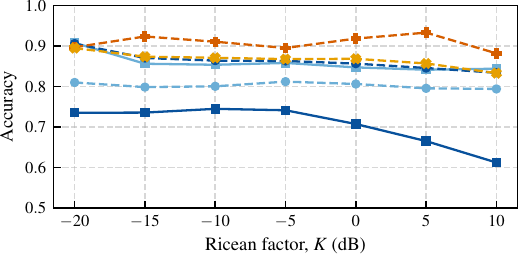}
        \caption{Parkinson's}
    \end{subfigure}

    \caption{Classification accuracy of the \gls{XL}-\gls{MIMO}-\gls{ELM} approach and digital \gls{EI} benchmark over different Ricean factors.}
    \label{fig:results-ricean}
\end{figure}

We have evaluated the proposed \gls{XL}-\gls{MIMO}-\gls{ELM} framework for 
\gls{EI} on several standard small-to-medium-size binary classification datasets under a range of system settings and channel conditions. The considered datasets are the Parkinson's~\cite{Parkinsons_dataset} and the \gls{WBCD}~\cite{breast_cancer_dataset} of $195$ and $569$ data points of $22$ and $30$ numerical features (corresponding to $N_{\rm t}$ values) for disease diagnosis, respectively, as well as the MNIST dataset of $6\times 10^4$ handwritten digit images~\cite{MNIST} (subsampled to $100$ pixels  
and converted the objective to even/odd digit classification).
An $80\!:\!20$ train-test split was imposed, and all input features were normalized to the $(0,1)$ range.
To determine the predicted class, we simply applied the discretization ${\rm sign}(\mathfrak{Re}(\hat{z}))$ of~\eqref{eq:output}, and we report in the performance figures that follow the mean accuracy values over $200$ random initializations. 
For convenience, the phases ${\bm \psi}$ of the \gls{AM} signals were set identically to $\mathbf{0}$, and $\mathbf{b}$ was sampled from a Rayleigh distribution with scale parameter $\mathbb{E}[\|\mathbf{H}\|_{\rm F}]/(2 N_{\rm r} N_{\rm t})$, for it to be in the same order as $|\mathbf{y}|$.
Unless otherwise specified, the channel parameters were set as $P_L = -50$ dB and $K=0$ dB.
For the linear \gls{SIM}, we employed $L=2$ layers, each consisting of $ \sqrt{N_{\rm r}} \times \sqrt{N_{\rm r}}$ diffractive elements.
To overcome overfitting, we have used the regularized versions of the pseudo-inverse matrices, i.e., we compute $(\tilde{\mathbf{G}}^{\rm H}\tilde{\mathbf{G}} +\ell \mathbf{I})^{-1}\tilde{\mathbf{G}}^{\rm H}$ instead of $\tilde{\mathbf{G}}^\dagger$ from~\eqref{eq:weights-opt-sketched} onward with $\ell=10^{-6}$.

We have considered two inference baselines. The first is a theoretical \gls{ELM} model 
with an equal number of hidden layer size as the proposed \gls{XL}-\gls{MIMO}-\gls{ELM}.
This approach performs traditional classification on the (real-domain) datasets, disregarding the system model and its effects, therefore, it is henceforth termed as ``Digital ELM.'' The hidden layer parameters were sampled from the standard normal distribution.
The second approach is an \gls{EI} approach where the \gls{TX}-side processing remains as described, and assumed idealistically full  knowledge of \gls{CSI} and the signal $\mathbf{y}$ at each of the elements of the front-end \gls{NL} \gls{MS} in the RX. Therefore, $\bar{\mathbf{x}}$ was estimated via \gls{ZF} as $\bar{\mathbf{x}} \cong \mathbf{H}^\dagger\mathbf{y}$ and a $2$-layered \gls{MLP} classifier was trained at the \gls{RX} across random channel realizations. The total number of parameters of the \gls{MLP} has been set approximately equal to that of the \gls{ELM} for each case. We note that this baseline is associated with increased system knowledge requirements and excessive training times.

The performance of the proposed training approaches for an increasing number of elements $N_{\rm r}$ at the \gls{RX}'s front-end \gls{NL} \gls{MS} layer is displayed in Fig.~\ref{fig:overall_results}.
Note that $N_{\rm r}$ also corresponds to the number of trainable \gls{ELM} 
parameters, which directly impacts the models' approximation capacity.
Different sketching matrix sizes $M$ have been used, so that $M/N_{\rm r}$ remains constant in $\{1/16, 1/8, 1/4, 1/2\}$ for each case. 
Firstly, it can be observed that, as $N_{\rm r}$ increases, the classification accuracy generally improves for all $M$ values, excluding some limited overfitting cases due to over-parameterization.
Despite ``Digital ELM'' being an ideal theoretical model unaffected by channel and system conditions and ``RX MLP'' applying \gls{ZF} to negate channel transformations, the proposed \gls{XL}-\gls{MIMO}-\gls{ELM} generally approaches the performance or even outperforms the digital baselines, given the same numbers of parameters.
This aligns with the result of Proposition~\ref{prop:universal-approximation} (assuming $M=N_{\rm r})$, where for $N_{\rm r} \to D$, the \gls{XL}-\gls{MIMO}-\gls{ELM} achieves zero error in dataset predictions.
It is noted that Proposition~\ref{prop:universal-approximation} considers training set errors in terms of \gls{MSE}, whereas the reported results consider classification accuracy on the test set.
Nevertheless, training \gls{MSE} is empirically a good proxy for the desired test set accuracy, which measures generalization.
As a secondary point, Fig.~\ref{fig:overall_results} allows for investigation of the effects of the dimension $M$ of the sketching matrix. 
Higher $M$ values generally lead to better results, as indicated by Proposition~\ref{prop:monotonicity}.
Again, note that Proposition~\ref{prop:monotonicity} considers specifically the training \gls{MSE} measured over the same channel realizations; subsequently, small variations, such as those of Fig.~\ref{fig:results_parkinsons}, are expected due to overfitting effects, without changing the overall trend of the proposed system's behavior.
Finally, it is important to highlight that, as $N_{\rm r}$ increases, the differences in the achieved performance across different $M$ values are diminishing, since the \gls{XL}-\gls{MIMO}-\gls{ELM} becomes over-parametrized, and is therefore less susceptible to the partial observability of the features of the hidden layer that are obtained through the sketching/repeated combining process.

In the previous investigation, we have assumed perfect knowledge of $\tilde{\mathbf{G}}$ during training, i.e., we have manually set $\tilde{n}=0$ in~\eqref{eq:output} (and equivalently in lines~\ref{alg:training:line:observe} and~\ref{alg:inf:line:output} of Algorithms~\ref{alg:training} and~\ref{alg:inference}, respectively), to test empirical agreement of the numerical results with the theoretical results of Section~\ref{sec:noise_free}, which were derived under a noise-free assumption. 
Subsequently, we proceed to investigate the effects of reception noise.
Figure~\ref{fig:results-snr} reports the performance across different receive \gls{SNR} levels, computed as $\mathbb{E}[|\mathbf{w}^{\rm H}g(\mathbf{y})|^2/\sigma_{\tilde{n}}^2]$. 
Note that the \gls{AWGN} of~\eqref{eq:output} is applied both during reception of the training set, i.e., when constructing $\tilde{\mathbf{G}}$ through the repeated wave-domain combining~\eqref{eq:G-sketch}, as well as during inference on the test set, i.e., when the aforementioned algorithmic lines are executed by the system.
It is shown that, while performance degrades at low measurement \gls{SNR}, the \gls{XL}-\gls{MIMO}-\gls{ELM} converges to the ideal noise-free \gls{SNR} results, validating the idealized assumptions of our theoretical analysis.
Evidently, larger $N_{\rm r}$ and $M$ values lead to better performance. 
The ``RX MLP'' baseline is understandable much less affected by noise, since the channel equalization is handled by \gls{ZF}.

So far, we have assumed rich scattering conditions (i.e., $K=0$) for the \gls{XL}-\gls{MIMO} channel $\mathbf{H}$, ensuring that it is a full-rank matrix. This improves the universal approximation capability of our \gls{XL}-\gls{MIMO}-\gls{MIMO} framework since the condition number of $\mathbf{G}$ decreases, yielding better conditioning for the \gls{LS} solution. Figure~\ref{fig:results-ricean} illustrates the impact of channel scattering considering various Ricean factors $K$.
As depicted, relatively stable performance is maintained across the examined values of $M$ and $N_{\rm r}$ under sufficiently diverse channels.
However, as the channel becomes \gls{LoS}-dominant (i.e., $K$ increases), classification accuracy degrades because the columns of $\mathbf{G}$ become linearly dependent, and therefore, the conditions for universal approximation are no longer satisfied. As observed in the figure, this is more pronounced in the Parkinson's data set with $N_{\rm r}=512$, where the overparametrized model is more susceptible to overfitting.


\section{Conclusions and Future Work}\label{sec:conclusion}
This paper showcased that \gls{XL} \gls{MIMO} systems with properly designed \gls{MS} components at the \gls{RX} side can perform computations equivalent to a complex-valued \gls{ELM}, enabling \gls{ML}-based inference on the transmitted data entirely \gls{OTA}. The channel coefficients of the proposed \gls{XL}-\gls{MIMO}-\gls{ELM} system serve as random hidden-layer weights, while its \gls{RX}'s heterogeneous NL/linear \gls{SIM} implements the activation function, bias, and trainable weights. 
The \gls{ELM} mathematical treatment allows for closed-form training, that may be even implemented within the channel coherence time for reasonably sized data sets, while the theoretical analysis confirmed our system's universal approximation capability.
To tackle practical challenges during deployment, two key additions have been made to the standard \gls{ELM} framework.
The first one considers a codebook of RX combining configurations associated with a data transmission protocol to reduce the number of transmissions required for observing the training data with a single antenna/\gls{RFC}, which led to a modification of the \gls{ELM} optimization procedure.
The second presents a novel 
analytical approach for configuring the responses of the last linear \gls{MS} layer 
so that any desired combining vector is implemented directly in the wave domain. Thus, the overall \gls{XL}-\gls{MIMO}-\gls{ELM} system architecture operates exclusively in analog, albeit it requires digital processing for computing the
\gls{LS} solution and the configuration weights of a linear \gls{MS} during training. To further empower the proposed framework, future work will focus on iterative training procedures that optimize the \gls{ELM} across time-evolving environments~\cite{Stylianopoulos_MIMO_ELM}, thus reducing the amortized training overhead.
Additionally, the framework is suitable for integration to wider \gls{MS}-based \gls{EI} frameworks~\cite{Stylianopoulos_GO}, that achieve higher expressivity than \glspl{SLFN}.
In showcasing the applicability for this framework and its extensions, distributed sensing applications provide an exciting direction.

\begin{appendices}

\section{Used Definitions, Propositions, and Lemmas}\label{appendix:definitions}

\begin{definition}[Real Analytic Function~{\normalfont
\cite[Section~2.2]{KrantzParks2002}}]\label{def:real-analytic-function}
A function $f:\mathcal{I}\to\mathbb{R}$ defined on an open set $\mathcal{I}\subseteq\mathbb{R}^m$ is real analytic if, for every point $\mathbf{x}_0\in\mathcal{I}$, there exists a neighborhood $\mathcal{U}$ of $\mathbf{x}_0$ in which $f$ can be represented by a multivariate power series that converges to $f(\mathbf{x})$ $\forall\mathbf{x}\in\mathcal{U}$.
\end{definition}

\begin{definition}[Lebesgue Measure~{\normalfont
\cite[Section~2]{KrantzParks2002}}]\label{def:lebesgue-measure}
The Lebesgue measure on $\mathbb{R}^m$, denoted by $\lambda^m$, is the standard translation-invariant measure assigning $m$-dimensional volume to measurable sets in $\mathbb{R}^m$.
\end{definition}

\begin{definition}[Zero Set~{\normalfont
\cite[Proposition~0]{Mityagin2020}}]\label{def:zero-set}
For a function $f: \mathcal{I} \to \mathbb{R}$, its zero set is defined as $\mathcal{Z}(f) \triangleq \{\theta \in \mathcal{I} : f(\theta) = 0\}$.
\end{definition}

\begin{proposition}[Identity Theorem for Real Analytic Functions~{\normalfont \cite[Proposition~3]{Mityagin2020}}]\label{prop:analytic-zero-set}
Let $\mathcal{I} \subset \mathbb{R}^m$ be an open connected domain and let $f: \mathcal{I} \to \mathbb{R}$ be a real analytic function. If $f \not\equiv 0$ on $\mathcal{I}$ (i.e., $\exists \theta_0 \in \mathcal{I}$ such that $f(\theta_0) \neq 0$), then its zero set $\mathcal{Z}(f)$ has Lebesgue measure zero, i.e., $\lambda^m(\mathcal{Z}(f)) = 0$.
\end{proposition}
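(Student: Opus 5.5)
The plan is to avoid an induction on $m$. Instead, we reduce the claim to a local statement and finish with Fubini's theorem and Rolle's theorem. Throughout, $f:\mathcal{I}\to\mathbb{R}$ is real analytic in the sense of Definition~\ref{def:real-analytic-function}, $\mathcal{I}$ is open and connected, and $f(\theta_0)\neq 0$ for some $\theta_0\in\mathcal{I}$.

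\textbf{Step 1 (no flat points).} Let $\mathcal{F}\triangleq\{\theta\in\mathcal{I}:\partial^{\alpha}f(\theta)=0\ \forall \alpha\}$ be the set of points where every partial derivative, of every order, vanishes.
\begin{itemize}
\item $\mathcal{F}$ is closed in $\mathcal{I}$, since it is an intersection of zero sets of continuous functions.
\item $\mathcal{F}$ is open. If $\theta\in\mathcal{F}$, the power series of $f$ at $\theta$ has all coefficients zero, so $f\equiv 0$ on a neighborhood $\mathcal{U}$ of $\theta$. All derivatives then vanish on $\mathcal{U}$, so $\mathcal{U}\subseteq\mathcal{F}$.
\end{itemize}
By connectedness, $\mathcal{F}$ is either empty or all of $\mathcal{I}$. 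The second case would give $f\equiv 0$, contradicting $f(\theta_0)\neq 0$. Hence $\mathcal{F}=\emptyset$: at every $\mathbf{x}_0\in\mathcal{I}$ some derivative of $f$ is nonzero.

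\textbf{Step 2 (good direction).} Fix $\mathbf{x}_0\in\mathcal{I}$ and let $k$ be the minimal order of a nonvanishing derivative at $\mathbf{x}_0$. Then
\begin{equation*}
p(\mathbf{v})\triangleq(\mathbf{v}\cdot\nabla)^k f(\mathbf{x}_0)
\end{equation*}
is a homogeneous polynomial of degree $k$ in $\mathbf{v}$, and its coefficients are nonzero multiples of the $k$-th order derivatives. So $p$ is not the zero polynomial, and we can pick a unit vector $\mathbf{v}$ with $p(\mathbf{v})\neq 0$. Rotate coordinates so that $\mathbf{v}=\mathbf{e}_m$. Rotations preserve both Lebesgue measure (Definition~\ref{def:lebesgue-measure}) and analyticity, so nothing is lost. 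In the new coordinates $\partial_m^k f(\mathbf{x}_0)\neq 0$, and by continuity $\partial_m^k f\neq 0$ on a small open box $\mathcal{B}=\mathcal{B}'\times(a,b)\subset\mathcal{I}$ around $\mathbf{x}_0$.

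\textbf{Step 3 (slices and Fubini).} For each fixed $\mathbf{x}'\in\mathcal{B}'$, consider the slice $t\mapsto f(\mathbf{x}',t)$ on $(a,b)$.
\begin{itemize}
\item Its $k$-th derivative never vanishes on $(a,b)$.
\item Repeated application of Rolle's theorem then shows the slice has at most $k$ zeros.
\item So each slice of $\mathcal{Z}(f)\cap\mathcal{B}$ is finite and has one-dimensional measure zero.
\end{itemize}
$\mathcal{Z}(f)$ is closed, hence measurable, so Fubini--Tonelli applies and gives $\lambda^m(\mathcal{Z}(f)\cap\mathcal{B})=0$.

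\textbf{Step 4 (globalize).} Every $\mathbf{x}_0\in\mathcal{I}$ has such a box $\mathcal{B}_{\mathbf{x}_0}$. Since $\mathbb{R}^m$ is second countable, countably many of these boxes cover $\mathcal{I}$. Countable subadditivity then yields $\lambda^m(\mathcal{Z}(f))=0$.

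\textbf{Main obstacle.} Steps 3 and 4 are routine. The delicate points are the clopen argument in Step 1, where connectedness and analyticity are genuinely used, and the observation in Step 2 that the directional $k$-th derivative polynomial $p$ is nonzero. For the latter, I would expand $(\mathbf{v}\cdot\nabla)^k=\sum_{|\alpha|=k}\binom{k}{\alpha}\mathbf{v}^{\alpha}\partial^{\alpha}$ and note that distinct monomials $\mathbf{v}^{\alpha}$ are linearly independent, so a nonzero $\partial^{\alpha}f(\mathbf{x}_0)$ with $|\alpha|=k$ forces $p\not\equiv 0$.
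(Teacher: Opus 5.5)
Your proof is correct, and it takes a genuinely different route from the one behind the paper's statement. The paper gives no argument of its own and simply cites Mityagin's note, whose proof is an induction on the dimension $m$. One restricts to a cube, and Fubini's theorem shows that if $\lambda^m(\mathcal{Z}(f))>0$, then the zero set has positive $(m-1)$-dimensional measure on a positive-measure family of coordinate hyperplanes. The induction hypothesis makes $f$ vanish identically on those slices. The one-variable identity theorem along the remaining coordinate (a set of positive measure has accumulation points) then gives $f\equiv 0$ on the cube, hence on $\mathcal{I}$ by connectedness.

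Both arguments rest on Fubini, but they slice differently. The induction slices by hyperplanes and recurses on dimension. You slice by lines in a well-chosen direction and bound the number of zeros on each line directly via Rolle's theorem. As a consequence, you use analyticity and connectedness exactly once, in the clopen argument of Step~1, to get a finite order of vanishing at every point. Steps~2--4 need only $C^\infty$ regularity, so you in fact prove that any smooth function without flat points has a Lebesgue-null zero set. You also obtain a quantitative local statement the inductive proof does not give: near each $\mathbf{x}_0$, every line in the chosen direction meets $\mathcal{Z}(f)$ in at most $k$ points, $k$ being the vanishing order at $\mathbf{x}_0$.

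The inductive proof, in exchange, is more economical in tools. It never has to select a direction, so it needs no homogeneous-polynomial argument and no change of coordinates, and every step stays within the elementary one-variable theory of analytic functions.

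A minor point: minimality of $k$ is not actually needed in your Step~2, since any order at which some partial derivative is nonzero already makes $p\not\equiv 0$.
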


\begin{lemma}[Analyticity under Composition~{\normalfont
\cite[Section~1.4]{KrantzParks2002}}]\label{lem:analytic-composition}
Let $\mathcal{U} \subset \mathbb{R}^m$ and $\mathcal{V} \subset \mathbb{R}^p$ be open sets. If $g: \mathcal{U} \to \mathcal{V}$ and $f: \mathcal{V} \to \mathbb{R}$ are real analytic functions, then the composite function $(f \circ g): \mathcal{U} \to \mathbb{R}$ is real analytic on $\mathcal{U}$.
\end{lemma}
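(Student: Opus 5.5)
The plan is to reduce the claim to a local statement and then use holomorphic extension, which avoids any direct manipulation of multivariate power series. By Definition~\ref{def:real-analytic-function}, analyticity is local, so it suffices to fix an arbitrary $\mathbf{u}_0 \in \mathcal{U}$, set $\mathbf{v}_0 \triangleq g(\mathbf{u}_0) \in \mathcal{V}$, and produce a neighborhood of $\mathbf{u}_0$ on which $f\circ g$ equals a convergent power series. Here $g=(g_1,\dots,g_p)$ is vector valued, and we read its real analyticity componentwise: each $g_k:\mathcal{U}\to\mathbb{R}$ is real analytic.

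\textbf{Step 1 (complexification).} First I show that a real power series $\sum_{\alpha} c_\alpha (\mathbf{x}-\mathbf{x}_0)^\alpha$ converging on a real neighborhood of $\mathbf{x}_0$ also converges absolutely on a small complex polydisc around $\mathbf{x}_0$.
\begin{itemize}
\item Convergence at a real point $\mathbf{x}_0+\mathbf{r}$ with all entries of $\mathbf{r}$ positive forces the terms to be bounded, i.e., $|c_\alpha|\mathbf{r}^\alpha \le B$ for all $\alpha$.
\item Comparison with a geometric series then gives absolute convergence on the complex polydisc of polyradius $\mathbf{r}/2$.
\item There the series defines a holomorphic function $\tilde f$ of $m$ complex variables that agrees with $f$ on real points.
\end{itemize}
Applying this to $f$ at $\mathbf{v}_0$ gives an extension $\tilde f$ on a polydisc $P_f\subset\mathbb{C}^p$. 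Applying it to each $g_k$ at $\mathbf{u}_0$ gives extensions $\tilde g_k$ on a polydisc $P_g\subset\mathbb{C}^m$.

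\textbf{Step 2 (composition).} By continuity of $\tilde g$ at $\mathbf{u}_0$, I shrink $P_g$ to a smaller polydisc $P_g'$ with $\tilde g(P_g')\subset P_f$. On $P_g'$ the composite $\tilde f\circ\tilde g$ is holomorphic, by the chain rule for complex differentiability, or equivalently because it is continuous and holomorphic in each variable separately.

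\textbf{Step 3 (back to the real domain).} A holomorphic function on a polydisc equals its Taylor series there, by the iterated Cauchy integral formula. Hence $\tilde f\circ\tilde g$ has a convergent power series expansion about $\mathbf{u}_0$. On the real points of $P_g'$, which form an open neighborhood of $\mathbf{u}_0$ in $\mathcal{U}$, we have $\tilde g = g$ and $\tilde f = f$, so this series equals $f\circ g$. Its coefficients are real, since they are derivatives of the real-valued function $f\circ g$ at the real point $\mathbf{u}_0$. Because $\mathbf{u}_0$ was arbitrary, $f\circ g$ is real analytic on $\mathcal{U}$.

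The main obstacle is Step 1, namely justifying absolute convergence on a complex polydisc. The convergence region of a multivariate series is not a ball, so the bound must be taken along a positive multi-radius direction rather than read off a single radius. If one prefers to stay purely real, the fallback is the classical majorant method: substitute the series of $g-\mathbf{v}_0$ into that of $f$, dominate it by $\sum_\alpha B\,\mathbf{r}^{-\alpha}\big(\sum_\beta |d_\beta|\,|\mathbf{u}-\mathbf{u}_0|^\beta\big)^\alpha$, and note that this is finite for $\mathbf{u}$ close to $\mathbf{u}_0$ because the inner sums vanish at $\mathbf{u}_0$. Absolute convergence then permits rearranging the composite into a single power series in $\mathbf{u}-\mathbf{u}_0$.
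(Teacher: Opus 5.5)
The paper does not prove this lemma. It imports it as a black box from Krantz--Parks, where composition is proved by real-variable means: one variable in their Section~1.4, with the several-variable statement in Chapter~2. That proof bounds the derivatives of $f\circ g$ through the Fa\`a di Bruno formula against those of a model majorant function.

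Your proof is correct and takes a genuinely different route. You complexify $f$ and the $g_k$ on small polydiscs, compose the resulting holomorphic maps, and read off a convergent Taylor series. All questions of substituting and rearranging multivariate series are thereby absorbed into two standard facts of several complex variables: holomorphy of compositions, and the Cauchy integral formula on a polydisc. This makes the argument short and makes the multivariable case no harder than one variable. The price is relying on complex-analytic machinery heavier than the lemma itself. The textbook route stays inside real analysis and keeps explicit control of the coefficients and of the radius of convergence of $f\circ g$, at the cost of combinatorial bookkeeping; your majorant ``fallback'' is essentially that route.

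One detail to tighten in Step~1. Choose $\mathbf{r}$ so that the entire closed box $\prod_i[x_{0,i}-r_i,\,x_{0,i}+r_i]$ lies in the neighbourhood where the series represents the function. Merely containing the point $\mathbf{x}_0+\mathbf{r}$ does not guarantee $\tilde f=f$ on all real points of the polydisc. Step~3 needs exactly this, both when it evaluates $\tilde f$ at $g(\mathbf{u})$ and when it asserts that the real points of $P_g'$ lie in $\mathcal{U}$. Separately, the bound $|c_\alpha|\mathbf{r}^\alpha\le B$ tacitly uses the usual convention that a multi-indexed power series converges absolutely, or along a fixed enumeration of the multi-indices.
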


\begin{proposition}[Analyticity of Elementary Functions~{\normalfont
\cite[Sections~1,~2]{KrantzParks2002}}]\label{prop:elementary-analytic}
Polynomials, exponential functions, affine transformations, and the softplus activation function $\sigma(x) = \log(1 + e^x)$ are real analytic on their respective open domains.
\end{proposition}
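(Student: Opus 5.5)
The plan is to treat the four families in order of increasing difficulty, using Definition~\ref{def:real-analytic-function} directly for the first three and Lemma~\ref{lem:analytic-composition} for softplus. For a polynomial $p:\mathbb{R}^m\to\mathbb{R}$ and any point $\mathbf{x}_0$, I would write $\mathbf{x}=\mathbf{x}_0+(\mathbf{x}-\mathbf{x}_0)$ and expand every monomial with the multinomial theorem. This expresses $p$ as a \emph{finite} power series in $(\mathbf{x}-\mathbf{x}_0)$, which converges trivially on all of $\mathbb{R}^m$. An affine map $\mathbf{x}\mapsto\mathbf{a}^\top\mathbf{x}+c$ is a polynomial of degree one, so it is covered by the same argument; vector-valued affine maps are analytic componentwise.

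For the exponential I would start from the everywhere-convergent series $e^{t}=\sum_{k\ge0}t^k/k!$. At a point $t_0$, the identity $e^{t}=e^{t_0}e^{t-t_0}$ gives a power series in $(t-t_0)$ that converges for all $t$, so $\exp$ is real analytic on $\mathbb{R}$. Multivariate exponentials of the form $e^{\mathbf{a}^\top\mathbf{x}}$ then follow from Lemma~\ref{lem:analytic-composition}, composing $\exp$ with an affine map.

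For softplus, I would write $\sigma=\log\circ\, u$ with $u(x)=1+e^{x}$. The map $u$ is analytic on $\mathbb{R}$ as a sum of a constant and $\exp$, and its range is $(1,\infty)\subset(0,\infty)$, which is open. It then remains to show that $\log$ is real analytic on $(0,\infty)$. At $t_0>0$, I would use $\log t=\log t_0+\log\!\big(1+(t-t_0)/t_0\big)$ together with the series $\log(1+s)=\sum_{k\ge1}(-1)^{k+1}s^k/k$, valid for $|s|<1$. This yields a power series in $(t-t_0)$ converging on the neighborhood $|t-t_0|<t_0$. Lemma~\ref{lem:analytic-composition} with $\mathcal{U}=\mathbb{R}$ and $\mathcal{V}=(0,\infty)$ then gives analyticity of $\sigma$. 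The scaled form in~\eqref{eq:softplus-activation} is $0.2\,\sigma(5(v-[\mathbf{b}]_j))$, a further composition with affine maps, so it is analytic as well.

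The main obstacle is the local, re-centered convergence required by the definition. A power series about one point does not automatically yield convergent series about every other point, so the re-expansions above have to be justified. For $\exp$ and polynomials this is immediate from the explicit identities. For $\log$ the point needing care is that the radius of convergence is $t_0$, which is positive at every point of the domain; this is exactly why the domain must be the open set $(0,\infty)$. A cleaner alternative, if preferred, is to invoke the general fact that a power series is analytic inside its disc of convergence and then apply it to each of the series above.
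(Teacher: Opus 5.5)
Your argument is correct. The paper gives no proof of its own here; it simply defers to Krantz--Parks. What you have written is a self-contained version of the standard textbook reasoning: you write explicit power series centered at an arbitrary point for polynomials, for $\exp$ (via $e^{t}=e^{t_0}e^{t-t_0}$) and for $\log$ (via $\log t=\log t_0+\log(1+(t-t_0)/t_0)$, with radius $t_0$), and then obtain softplus by closure under composition.

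The benefit over the bare citation is that your version shows where the real work lies. Every series you wrote is already an expansion in powers of $(t-t_0)$ about an arbitrary $t_0$. So the re-centering issue you flag as the main obstacle never actually arises, and no theorem about analyticity inside the disc of convergence is needed. The only nontrivial ingredient is Lemma~\ref{lem:analytic-composition}, which you take as given, as the paper does; it is where rearranging a double series genuinely happens.

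The paper later applies this result to complex exponentials, such as $e^{\jmath 2\pi d_{n,n'}/\lambda}$ in~\eqref{eq:rayleigh-somerfeld}, which the statement's list does not strictly cover. If you want to cover that use, your method extends with no extra effort. The identity $e^{\jmath t}=e^{\jmath t_0}\sum_{k\ge 0}(\jmath(t-t_0))^k/k!$ gives everywhere-convergent real power series in $(t-t_0)$ for both the real and imaginary parts.
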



\begin{definition}[Open Connected Domain~{\normalfont
\cite[Sections~23,~25]{Munkres2000}}]\label{def:open-connected-domain}
A subset $\mathcal{I} \subseteq \mathbb{R}^m$ is an open connected domain if it is open (every point has an open neighborhood contained in $\mathcal{I}$) and path-connected (any two points in $\mathcal{I}$ can be joined by a continuous path inside $\mathcal{I}$).
\end{definition}

\begin{definition}[Real Codimension~{\normalfont
\cite[Sections~2C,~3E]{AxlerLADR2024}}]\label{def:real-codimension}
The real codimension of a linear or affine subspace $\mathcal{V} \subset \mathbb{R}^m$ is defined as $\operatorname{codim}_{\mathbb{R}}(\mathcal{V}) \triangleq m - \dim_{\mathbb{R}}(\mathcal{V})$, with $\dim_{\mathbb{R}}(\cdot)$ denoting the dimension of a real space.
\end{definition}

\begin{definition}[Isomorphism between $\mathbb{C}^k$ and $\mathbb{R}^{2k}$~{\normalfont
\cite[Sections~2C,~6]{AxlerLADR2024}}]\label{def:complex-real-isomorphism}
The canonical real vector space isomorphism $\psi: \mathbb{C}^k \to \mathbb{R}^{2k}$ maps $\mathbf{z} = \mathbf{x} + i\mathbf{y} \in \mathbb{C}^k$ to $(\mathbf{x}, \mathbf{y}) \in \mathbb{R}^{2k}$, preserving topological openness, connectedness, and real analyticity.
\end{definition}
\begin{proposition}[Preservation of Connectedness under Codimension $\geq 2$ Removal~{\normalfont\cite[Theorem~1.1]{ConradHyperplaneComplements}}
]
\label{prop:codim-connectedness}
Let $\mathcal{U} \subset \mathbb{R}^m$ be an open connected set, and let
$\mathcal{V} \subset \mathbb{R}^m$ be a linear subspace with
$\operatorname{codim}_{\mathbb{R}}(\mathcal{V}) \geq 2$.
Then, the complement set $\mathcal{U} \setminus \mathcal{V}$ remains open and
connected in $\mathbb{R}^m$.
\end{proposition}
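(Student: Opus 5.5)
Openness is immediate and I would dispatch it first. A linear subspace $\mathcal{V}\subset\mathbb{R}^m$ is closed, so $\mathcal{U}\setminus\mathcal{V}=\mathcal{U}\cap(\mathbb{R}^m\setminus\mathcal{V})$ is an intersection of two open sets. All the work is in connectedness. I would prove path-connectedness directly, as in Definition~\ref{def:open-connected-domain}, by building a polygonal path that avoids $\mathcal{V}$.

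First I reduce to polygonal paths with a uniform safety margin. Fix $\mathbf{p},\mathbf{q}\in\mathcal{U}\setminus\mathcal{V}$. Since $\mathcal{U}$ is open and connected, there is a continuous path $\gamma$ from $\mathbf{p}$ to $\mathbf{q}$ inside $\mathcal{U}$. Its image is compact, so $\delta\triangleq\operatorname{dist}(\gamma,\mathbb{R}^m\setminus\mathcal{U})>0$ (take $\delta=\infty$ if $\mathcal{U}=\mathbb{R}^m$). By uniform continuity of $\gamma$, I choose points $\mathbf{p}=\mathbf{c}_0,\mathbf{c}_1,\dots,\mathbf{c}_n=\mathbf{q}$ on $\gamma$ with $\|\mathbf{c}_{k+1}-\mathbf{c}_k\|<\delta/4$. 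Then every point of the polygonal path $[\mathbf{c}_0,\mathbf{c}_1]\cup\dots\cup[\mathbf{c}_{n-1},\mathbf{c}_n]$ lies within $\delta/4$ of $\gamma$. The same holds after replacing the interior vertices $\mathbf{c}_1,\dots,\mathbf{c}_{n-1}$ by any $\mathbf{c}'_k$ with $\|\mathbf{c}'_k-\mathbf{c}_k\|<\delta/4$: each new segment stays within $\delta/2$ of $\gamma$, hence inside $\mathcal{U}$.

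The key geometric fact is the following. If $\mathbf{a}\notin\mathcal{V}$ and the segment $[\mathbf{a},\mathbf{c}]$ meets $\mathcal{V}$, then $\mathbf{c}\in\mathcal{W}_{\mathbf{a}}\triangleq\operatorname{span}(\mathcal{V},\mathbf{a})$. Indeed, a point $t\mathbf{c}+(1-t)\mathbf{a}\in\mathcal{V}$ must have $t>0$ because $\mathbf{a}\notin\mathcal{V}$, and solving for $\mathbf{c}$ puts it in $\mathcal{W}_{\mathbf{a}}$. Since $\dim\mathcal{W}_{\mathbf{a}}\le\dim\mathcal{V}+1\le m-1$ by the codimension hypothesis (Definition~\ref{def:real-codimension}), $\mathcal{W}_{\mathbf{a}}$ is a proper subspace. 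It therefore has Lebesgue measure zero (Definition~\ref{def:lebesgue-measure}) and empty interior. This is exactly where codimension $\ge 2$ is used: with codimension $1$, $\mathcal{W}_{\mathbf{a}}$ could be all of $\mathbb{R}^m$.

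Next I perturb the interior vertices sequentially, for $k=1,\dots,n-1$. I pick $\mathbf{c}'_k$ in the ball $B(\mathbf{c}_k,\delta/4)$ outside the finite union $\mathcal{V}\cup\mathcal{W}_{\mathbf{c}'_{k-1}}$, with $\mathbf{c}'_0=\mathbf{p}$. For the last vertex I additionally require $\mathbf{c}'_{n-1}\notin\mathcal{W}_{\mathbf{q}}$. Each such choice is possible because a ball has positive measure and the excluded set is a finite union of measure-zero sets. By the fact above, every segment $[\mathbf{c}'_k,\mathbf{c}'_{k+1}]$ then avoids $\mathcal{V}$. For the final segment I use $\mathbf{q}\notin\mathcal{V}$ with the roles reversed: $\mathbf{c}'_{n-1}\notin\mathcal{W}_{\mathbf{q}}$ ensures $[\mathbf{q},\mathbf{c}'_{n-1}]\cap\mathcal{V}=\emptyset$. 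Combined with the margin argument, the perturbed polygonal path joins $\mathbf{p}$ to $\mathbf{q}$ inside $\mathcal{U}\setminus\mathcal{V}$, which gives path-connectedness.

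The main obstacle I anticipate is handling all constraints at once: staying inside $\mathcal{U}$, avoiding $\mathcal{V}$ at the vertices, and avoiding it along whole segments, including the two fixed endpoints. The sequential choice handles this because each step excludes only finitely many proper subspaces. The remaining care is making the $\delta/4$ margin bookkeeping precise.
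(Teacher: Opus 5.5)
Your proof is correct, and it necessarily takes a different route from the paper, because the paper gives no argument of its own: the statement is imported from Conrad's handout (Theorem~1.1 there) and used as a black box in Appendix~\ref{appendix:universal-approximation-proof}. Your three steps all check out:
\begin{itemize}
\item the reduction to a polygonal path with a $\delta$-margin inside $\mathcal{U}$;
\item the segment lemma, namely that $\mathbf{a}\notin\mathcal{V}$ and $[\mathbf{a},\mathbf{c}]\cap\mathcal{V}\neq\emptyset$ force $\mathbf{c}\in\operatorname{span}(\mathcal{V},\mathbf{a})$, which is a proper subspace precisely because $\operatorname{codim}_{\mathbb{R}}(\mathcal{V})\geq 2$;
\item the sequential choice of vertices outside finitely many Lebesgue-null sets.
\end{itemize}
The margin bookkeeping is also right: a point of $[\mathbf{c}'_k,\mathbf{c}'_{k+1}]$ lies within $t\,\|\mathbf{c}'_k-\mathbf{c}_k\|+(1-t)\,\|\mathbf{c}'_{k+1}-\mathbf{c}_k\|<\delta/2$ of $\mathbf{c}_k$.

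What your route buys is self-containment. It uses only the fact that proper subspaces have Lebesgue measure zero, which is already in the paper's appendix toolbox, and it makes explicit where the codimension hypothesis enters. It also removes a finite union $\mathcal{V}_1\cup\dots\cup\mathcal{V}_J$ in a single pass: at each step exclude $\operatorname{span}(\mathcal{V}_i,\mathbf{c}'_{k-1})$ for every $i$, and likewise $\operatorname{span}(\mathcal{V}_i,\mathbf{q})$ at the last vertex. That is exactly the situation in Appendix~\ref{appendix:universal-approximation-proof}, where $\cup_j\mathcal{I}^0_j$ is removed from the open convex region $[\mathbf{b}]_j>0$. The single-subspace statement would otherwise have to be iterated, which also works since each removal leaves an open connected set. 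The citation buys brevity and the backing of a standard reference.

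One bookkeeping point you should state explicitly: you need $n\geq 2$, so that there is at least one interior vertex to perturb. If $n=1$, your path is the fixed segment $[\mathbf{p},\mathbf{q}]$, which may meet $\mathcal{V}$. Refining the partition fixes this trivially.
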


\begin{proposition}[Convexity Implies Path Connectedness~{\normalfont\cite[Section~25]{Munkres2000}}]
\label{prop:convex-connected}
Any nonempty convex subset $\mathcal{C} \subseteq \mathbb{R}^m$
(such as the positive orthant
$\{\mathbf{b} \in \mathbb{R}^{N_{\rm r}} :
[\mathbf{b}]_j > 0\, \forall j\}$)
is path-connected.
\end{proposition}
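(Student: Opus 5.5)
The statement follows directly from the definition of path-connectedness in Definition~\ref{def:open-connected-domain}: we connect any two points by the straight segment between them and use convexity to keep the segment inside $\mathcal{C}$.

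Fix arbitrary $\mathbf{p},\mathbf{q}\in\mathcal{C}$; such points exist since $\mathcal{C}$ is nonempty. Define the path $\boldsymbol{\gamma}:[0,1]\to\mathbb{R}^m$ by $\boldsymbol{\gamma}(t)\triangleq(1-t)\mathbf{p}+t\mathbf{q}$.

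First, $\boldsymbol{\gamma}$ is continuous. It is affine in $t$, so $\|\boldsymbol{\gamma}(t)-\boldsymbol{\gamma}(s)\|_2=|t-s|\,\|\mathbf{q}-\mathbf{p}\|_2$, which makes it Lipschitz. Its endpoints are $\boldsymbol{\gamma}(0)=\mathbf{p}$ and $\boldsymbol{\gamma}(1)=\mathbf{q}$.

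Second, $\boldsymbol{\gamma}(t)\in\mathcal{C}$ for every $t\in[0,1]$. This is exactly the defining property of convexity: every convex combination of two points of $\mathcal{C}$ lies in $\mathcal{C}$. Hence $\boldsymbol{\gamma}$ is a continuous path inside $\mathcal{C}$ joining $\mathbf{p}$ and $\mathbf{q}$, and $\mathcal{C}$ is path-connected.

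For the parenthetical example, we check that the positive orthant $\{\mathbf{b}\in\mathbb{R}^{N_{\rm r}}:[\mathbf{b}]_j>0\ \forall j\}$ is nonempty and convex.
\begin{itemize}
\item It is nonempty because it contains $\mathbf{1}$.
\item It is convex because, for $[\mathbf{p}]_j,[\mathbf{q}]_j>0$ and $t\in[0,1]$, we have $(1-t)[\mathbf{p}]_j+t[\mathbf{q}]_j>0$. The two weights are nonnegative and not both zero, and each multiplies a strictly positive number.
\end{itemize}
The orthant is also open, as a finite intersection of open half-spaces. Together with path-connectedness, this makes it an open connected domain in the sense of Definition~\ref{def:open-connected-domain}, which is the form presumably needed later for the bias domain in the proof of Proposition~\ref{prop:universal-approximation}.

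No step is a genuine obstacle; the only care needed is the strict inequality argument for the orthant.
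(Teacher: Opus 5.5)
Your proof is correct and is the standard straight-line-segment argument. The paper gives no proof of its own and simply cites Munkres, where convex sets are shown to be path-connected in this same way; your check that the positive orthant is nonempty, convex, and open also covers what the bias domain needs in the proof of Proposition~\ref{prop:universal-approximation}.
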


\begin{proposition}[Real Codimension of Complex Linear Subspace~{\normalfont\cite[Sections~1B and~2C]{AxlerLADR2024}}
]
\label{prop:complex-codim}
Let $\mathbf{a} \in \mathbb{C}^k \setminus \{\mathbf{0}\}$.
The complex linear constraint
$\mathbf{a}^{\rm H}\mathbf{z}=0$ over $\mathbb{C}^k$
induces two independent real linear constraints
$\operatorname{Re}(\mathbf{a}^{\rm H}\mathbf{z})=0$ and
$\operatorname{Im}(\mathbf{a}^{\rm H}\mathbf{z})=0$.
Consequently, the subspace $\mathcal{W}\triangleq\{\mathbf{z}\in\mathbb{C}^k: \mathbf{a}^{\rm H}\mathbf{z}=0\}$
corresponds to a real linear subspace of codimension $2$ in
$\mathbb{R}^{2k}$.
\end{proposition}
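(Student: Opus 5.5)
The plan is to pass to real coordinates through the isomorphism $\psi$ of Definition~\ref{def:complex-real-isomorphism}, write the complex constraint as two explicit real linear functionals on $\mathbb{R}^{2k}$, and then apply rank--nullity. First, $\mathcal{W}$ is a complex linear subspace, so it is closed under real scalar multiplication and addition. Hence $\psi(\mathcal{W})$ is a real linear subspace of $\mathbb{R}^{2k}$, and Definition~\ref{def:real-codimension} applies to it.

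Next I will write $\mathbf{a}=\mathbf{a}_R+\jmath\mathbf{a}_I$ and $\mathbf{z}=\mathbf{x}+\jmath\mathbf{y}$, with $\mathbf{a}_R,\mathbf{a}_I,\mathbf{x},\mathbf{y}\in\mathbb{R}^k$, and expand
\begin{equation*}
\mathbf{a}^{\rm H}\mathbf{z}=(\mathbf{a}_R^\top-\jmath\mathbf{a}_I^\top)(\mathbf{x}+\jmath\mathbf{y})
=\big(\mathbf{a}_R^\top\mathbf{x}+\mathbf{a}_I^\top\mathbf{y}\big)+\jmath\big(\mathbf{a}_R^\top\mathbf{y}-\mathbf{a}_I^\top\mathbf{x}\big).
\end{equation*}
Therefore $\mathbf{a}^{\rm H}\mathbf{z}=0$ holds if and only if $\mathbf{M}\,[\mathbf{x};\mathbf{y}]=\mathbf{0}$, where $\mathbf{M}\in\mathbb{R}^{2\times 2k}$ has the two rows $\mathbf{r}_1=[\mathbf{a}_R^\top,\ \mathbf{a}_I^\top]$ and $\mathbf{r}_2=[-\mathbf{a}_I^\top,\ \mathbf{a}_R^\top]$. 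These rows are exactly the two real constraints $\operatorname{Re}(\mathbf{a}^{\rm H}\mathbf{z})=0$ and $\operatorname{Im}(\mathbf{a}^{\rm H}\mathbf{z})=0$, so $\psi(\mathcal{W})=\ker\mathbf{M}$.

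The one substantive step is showing that the two constraints are independent, i.e., that ${\rm rank}(\mathbf{M})=2$. I will check this directly. The inner product is $\mathbf{r}_1\mathbf{r}_2^\top=-\mathbf{a}_R^\top\mathbf{a}_I+\mathbf{a}_I^\top\mathbf{a}_R=0$, so the rows are orthogonal. Both rows have squared norm $\|\mathbf{a}_R\|_2^2+\|\mathbf{a}_I\|_2^2=\|\mathbf{a}\|_2^2$, which is strictly positive because $\mathbf{a}\neq\mathbf{0}$. Two nonzero orthogonal vectors are linearly independent, so ${\rm rank}(\mathbf{M})=2$.

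Rank--nullity then gives $\dim_{\mathbb{R}}\psi(\mathcal{W})=2k-2$, and by Definition~\ref{def:real-codimension} the codimension is $2k-(2k-2)=2$. No step here is a genuine obstacle. The only point needing care is the nondegeneracy $\mathbf{a}\neq\mathbf{0}$, which the orthogonality argument settles at once.
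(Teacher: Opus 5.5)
Your proof is correct. The paper gives no argument of its own for this proposition; it only cites Axler's sections on vector spaces and dimension. The argument implicit there is complex rank--nullity: $\mathcal{W}$ is the kernel of the nonzero $\mathbb{C}$-linear functional $\mathbf{z}\mapsto\mathbf{a}^{\rm H}\mathbf{z}$, so $\dim_{\mathbb{C}}\mathcal{W}=k-1$. A complex basis $\{\mathbf{v}_i\}$ of $\mathcal{W}$ gives the real basis $\{\mathbf{v}_i,\jmath\mathbf{v}_i\}$, hence $\dim_{\mathbb{R}}\mathcal{W}=2k-2$.

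That route is shorter, but it gets the independence of the two real constraints only indirectly: two real functionals whose common kernel has codimension $2$ must be independent. You stay in $\mathbb{R}^{2k}$ instead, and so you prove the ``two independent real constraints'' clause of the statement directly. You in fact show slightly more, namely $\mathbf{M}\mathbf{M}^\top=\|\mathbf{a}\|_2^2\mathbf{I}_2$, so the two constraints are orthogonal with equal norms.

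Your opening remark that $\psi(\mathcal{W})$ is a real subspace is harmless but redundant, since $\psi(\mathcal{W})=\ker\mathbf{M}$ already establishes it.
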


\begin{definition}[Absolute Continuity of Probability Measures~{\normalfont\cite[Section~9B]{AxlerMIRA2020}}]
\label{def:absolute-continuity}
A probability measure $P_{\boldsymbol{\theta}}$ on $\mathbb{R}^m$ is
absolutely continuous with respect to the Lebesgue measure $\lambda^m$
(written $P_{\mathbf{\Theta}}\ll\lambda^m$) if
$\lambda^m(E)=0 \implies P_{\mathbf{\Theta}}(E)=0$
for every measurable set $E\subset\mathbb{R}^m$.
\end{definition}

\begin{proposition}[Sampling Probability over Measure-Zero Sets~{\normalfont\cite[Section~9B]{AxlerMIRA2020}}]\label{prop:measure-zero-prob}
Let $\mathbf{\Theta}$ be a continuous random vector taking values in $\mathbb{R}^m$ whose probability distribution $P_{\mathbf{\Theta}}$ is absolutely continuous with respect to Lebesgue measure (such as complex circularly symmetric Gaussian or Ricean fading distributions). For any measurable set $E \subset \mathbb{R}^m$ with $\lambda^m(E) = 0$, the probability ${\rm Pr}(\mathbf{\Theta} \in E) = 0$.
\end{proposition}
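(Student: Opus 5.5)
The plan is to reduce the claim to Definition~\ref{def:absolute-continuity} and then verify that the distributions actually used in the paper meet that hypothesis. By definition of the law of $\mathbf{\Theta}$, for any measurable $E\subset\mathbb{R}^m$ we have ${\rm Pr}(\mathbf{\Theta}\in E)=P_{\mathbf{\Theta}}(E)$. If $P_{\mathbf{\Theta}}\ll\lambda^m$ and $\lambda^m(E)=0$, then Definition~\ref{def:absolute-continuity} gives $P_{\mathbf{\Theta}}(E)=0$ directly. If one prefers the density formulation, the Radon--Nikodym theorem yields a density $f_{\mathbf{\Theta}}\ge 0$ with $P_{\mathbf{\Theta}}(E)=\int_E f_{\mathbf{\Theta}}\,{\rm d}\lambda^m$. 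This integral vanishes because the integrand is integrated over a null set. So the abstract statement is immediate.

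The substantive step is confirming that the parenthetical examples are absolutely continuous, with vectors identified with $\mathbb{R}^m$ through the isomorphism of Definition~\ref{def:complex-real-isomorphism}. I would proceed in three steps.
\begin{enumerate}
\item The vectorized $\mathbf{H}_{\rm NLoS}\sim\mathcal{CN}(\mathbf{0},\sigma^2\mathbf{I})$ maps to a real Gaussian vector in $\mathbb{R}^{2N_{\rm r}N_{\rm t}}$. Its covariance is $(\sigma^2/2)\mathbf{I}$, which is nondegenerate, so it has the explicit everywhere-positive Gaussian density.
\item The Ricean matrix in~\eqref{eq:ricean} is an affine image $\mathbf{H}=\mathbf{M}+c\,\mathbf{H}_{\rm NLoS}$ with $c=\sqrt{P_L/(1+K)}>0$. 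For any null set $E$, the preimage $\{\mathbf{h}:\mathbf{M}+c\mathbf{h}\in E\}=c^{-1}(E-\mathbf{M})$ is null. This follows from the translation invariance noted in Definition~\ref{def:lebesgue-measure} together with $\lambda^m(c^{-1}\mathcal{S})=c^{-m}\lambda^m(\mathcal{S})$. Hence $\mathbf{H}$ is absolutely continuous; equivalently, the change-of-variables formula gives it a density.
\item When $\mathbf{\Theta}$ also stacks the bias $\mathbf{b}$, drawn independently from a continuous distribution with a density on the positive orthant, the joint law has the product density. The joint law is then absolutely continuous with respect to the product Lebesgue measure, by Tonelli's theorem applied to $\int_E f_{\mathbf{H}}f_{\mathbf{b}}$.
\end{enumerate}

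The only real obstacle is a subtlety rather than a computation. ``Continuous distribution'' must be read as ``absolutely continuous'', since a merely atomless law, such as one supported on a Cantor-type set, can charge Lebesgue-null sets. I would state this interpretation explicitly. I would also stress that the nonzero scaling $c$ (i.e., $K<\infty$) is essential, because a pure LoS channel is deterministic and rank one. With those points noted, the proposition follows.
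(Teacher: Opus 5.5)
Your proposal is correct and takes the same approach as the paper. The paper gives no separate proof: it cites Axler and treats the claim as an immediate consequence of Definition~\ref{def:absolute-continuity}, exactly as in your first paragraph. Your further checks that the circularly symmetric Gaussian, the Ricean affine image with $c=\sqrt{P_L/(1+K)}>0$, and the joint law with an independent, absolutely continuous $\mathbf{b}$ are all absolutely continuous are sound and go beyond what the paper verifies. Your caveat that ``continuous'' must mean ``absolutely continuous'' for $\mathbf{b}$ in Proposition~\ref{prop:universal-approximation} is also well taken.
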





\begin{proposition}[Polynomial Nature and Analyticity of Determinants~{\normalfont\cite[Section~9C]{AxlerLADR2024}}]\label{prop:determinant-analyticity}
The determinant of matrix $\det(\mathbf{A})$, $\det(\mathbf{A})$, is a multivariate polynomial in the entries $[\mathbf{A}]_{i,j}$, and is therefore real analytic on $\mathbb{R}^{M \times M}$.
\end{proposition}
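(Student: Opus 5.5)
The plan is to write the determinant explicitly through the Leibniz expansion and then read off both claims from that formula. For $\mathbf{A}\in\mathbb{R}^{M\times M}$ we use
\begin{equation*}
\det(\mathbf{A})=\sum_{\sigma\in S_M}{\rm sign}(\sigma)\prod_{i=1}^{M}[\mathbf{A}]_{i,\sigma(i)},
\end{equation*}
where $S_M$ is the symmetric group on $\{1,\dots,M\}$. If one prefers to start from the axiomatic definition (multilinear and alternating in the columns, normalized by $\det(\mathbf{I})=1$), we first derive this formula. Expand each column in the standard basis using multilinearity. The alternating property then kills every term in which a basis vector is repeated. The surviving terms are indexed by permutations, each carrying the factor $\det$ of a permuted identity, which equals ${\rm sign}(\sigma)$.

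Next we identify $\mathbb{R}^{M\times M}$ with $\mathbb{R}^{M^2}$ through the entries $[\mathbf{A}]_{i,j}$. Each summand is a coefficient $\pm1$ times a monomial of degree $M$ in these $M^2$ real variables, and there are finitely many ($M!$) summands. Hence $\det$ is a multivariate polynomial with integer coefficients, which is the first claim.

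For analyticity we invoke Proposition~\ref{prop:elementary-analytic}: polynomials are real analytic on all of $\mathbb{R}^{M^2}$, which is open, so $\det$ is real analytic in the sense of Definition~\ref{def:real-analytic-function}. The same conclusion can be checked directly. At any $\mathbf{A}_0$, substitute $[\mathbf{A}]_{i,j}=[\mathbf{A}_0]_{i,j}+\delta_{i,j}$ and expand. This gives a finite power series in the $\delta_{i,j}$, which converges everywhere, so one can take $\mathcal{U}=\mathbb{R}^{M^2}$.

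For later use with complex matrices (e.g., $\det(\mathbf{G}^{\rm H}\mathbf{G})$ or $\det(\mathbf{\Omega})$), we would also record the following. Via the isomorphism of Definition~\ref{def:complex-real-isomorphism}, the real and imaginary parts of the complex determinant are real polynomials in the real and imaginary parts of the entries. Likewise $|\det|^2=\mathfrak{Re}(\det)^2+\mathfrak{Im}(\det)^2$ is a real polynomial, and hence real analytic by the same argument.

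There is no genuine obstacle here. The only point needing care is to treat the entries as independent coordinates, which makes clear that "polynomial" refers to the $M^2$ (or $2M^2$) real variables.
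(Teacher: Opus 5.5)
Your proposal is correct and follows the same route as the paper, which gives no argument of its own beyond citing Axler's Section~9C: the permutation-sum (Leibniz) formula shows $\det$ is a polynomial in the entries, and polynomials are real analytic. Your closing remark that $\mathfrak{Re}(\det)$, $\mathfrak{Im}(\det)$ and $|\det|^2$ are real polynomials in the real and imaginary parts of the entries is a worthwhile addition, since the paper later applies this proposition to complex matrices such as $\mathbf{G}$ and $\mathbf{\Psi}_l$, while its definition of real analyticity covers only real-valued functions.
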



\begin{definition}[Range / Column Space of a Matrix~{\normalfont\cite[Section~3B]{AxlerLADR2024}}]\label{def:matrix-range}
For any matrix $\mathbf{X} \in \mathbb{C}^{m \times k}$, its range (or column space) is defined as
$\mathcal{R}(\mathbf{X}) \triangleq \left\{ \mathbf{X} \mathbf{v} \in \mathbb{C}^m \;\middle|\; \mathbf{v} \in \mathbb{C}^k \right\} \subseteq \mathbb{C}^m$.
\end{definition}

\begin{proposition}[Properties of Pseudoinverse Projection Matrix~{\normalfont\cite[Section~6C]{AxlerLADR2024}}]\label{prop:pseudoinverse-properties}
For any matrix $\mathbf{G} \in \mathbb{C}^{D \times N_{\rm r}}$, the matrix product $\mathbf{G}\mathbf{G}^\dagger \in \mathbb{C}^{D \times D}$ is the unique orthogonal projection operator onto the range $\mathcal{R}(\mathbf{G})$, satisfying $(\mathbf{G}\mathbf{G}^\dagger)^{\rm H} = \mathbf{G}\mathbf{G}^\dagger$ and $(\mathbf{G}\mathbf{G}^\dagger)^2 = \mathbf{G}\mathbf{G}^\dagger$.
\end{proposition}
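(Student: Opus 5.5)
The plan is to work from the singular value decomposition rather than the normal-equation formula $(\mathbf{G}^{\rm H}\mathbf{G})^{-1}\mathbf{G}^{\rm H}$. That formula in~\eqref{eq:weights-opt-no-reg} requires full column rank, whereas the statement is claimed for \emph{any} $\mathbf{G}\in\mathbb{C}^{D\times N_{\rm r}}$.

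\textbf{Step 1: compute $\mathbf{G}\mathbf{G}^\dagger$.} Write the compact SVD $\mathbf{G}=\mathbf{U}_r\mathbf{\Sigma}_r\mathbf{V}_r^{\rm H}$. Here $r={\rm rank}(\mathbf{G})$, the matrices $\mathbf{U}_r\in\mathbb{C}^{D\times r}$ and $\mathbf{V}_r\in\mathbb{C}^{N_{\rm r}\times r}$ have orthonormal columns, and $\mathbf{\Sigma}_r$ is diagonal with positive entries. Take as the definition of the Moore--Penrose pseudoinverse $\mathbf{G}^\dagger=\mathbf{V}_r\mathbf{\Sigma}_r^{-1}\mathbf{U}_r^{\rm H}$. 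If $\mathbf{G}^\dagger$ is instead defined through the four Penrose conditions, I would first check that this matrix satisfies them and invoke their uniqueness. Using $\mathbf{V}_r^{\rm H}\mathbf{V}_r=\mathbf{I}_r$, a direct multiplication gives
\begin{equation*}
\mathbf{G}\mathbf{G}^\dagger=\mathbf{U}_r\mathbf{\Sigma}_r\mathbf{V}_r^{\rm H}\mathbf{V}_r\mathbf{\Sigma}_r^{-1}\mathbf{U}_r^{\rm H}=\mathbf{U}_r\mathbf{U}_r^{\rm H}.
\end{equation*}

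\textbf{Step 2: Hermitian symmetry and idempotence.} Both properties follow at once from $\mathbf{P}\triangleq\mathbf{U}_r\mathbf{U}_r^{\rm H}$. Clearly $\mathbf{P}^{\rm H}=\mathbf{P}$. Moreover, $\mathbf{U}_r^{\rm H}\mathbf{U}_r=\mathbf{I}_r$ gives $\mathbf{P}^2=\mathbf{U}_r(\mathbf{U}_r^{\rm H}\mathbf{U}_r)\mathbf{U}_r^{\rm H}=\mathbf{P}$.

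\textbf{Step 3: $\mathbf{P}$ is the orthogonal projection onto $\mathcal{R}(\mathbf{G})$.} First identify the range, using Definition~\ref{def:matrix-range}.
\begin{itemize}
\item $\mathcal{R}(\mathbf{G})\subseteq\mathcal{R}(\mathbf{U}_r)$, because $\mathbf{G}\mathbf{v}=\mathbf{U}_r(\mathbf{\Sigma}_r\mathbf{V}_r^{\rm H}\mathbf{v})$.
\item $\mathcal{R}(\mathbf{U}_r)\subseteq\mathcal{R}(\mathbf{G})$, because $\mathbf{U}_r\mathbf{c}=\mathbf{G}(\mathbf{V}_r\mathbf{\Sigma}_r^{-1}\mathbf{c})$.
\end{itemize}
Next, $\mathbf{P}$ acts as a projection on the two complementary pieces.
\begin{itemize}
\item For $\mathbf{u}=\mathbf{U}_r\mathbf{c}\in\mathcal{R}(\mathbf{G})$, we have $\mathbf{P}\mathbf{u}=\mathbf{u}$.
\item For $\mathbf{u}\perp\mathcal{R}(\mathbf{G})$, we have $\mathbf{U}_r^{\rm H}\mathbf{u}=\mathbf{0}$, so $\mathbf{P}\mathbf{u}=\mathbf{0}$.
\end{itemize}
Since $\mathbb{C}^D=\mathcal{R}(\mathbf{G})\oplus\mathcal{R}(\mathbf{G})^\perp$, $\mathbf{P}$ is the orthogonal projector onto $\mathcal{R}(\mathbf{G})$.

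\textbf{Step 4: uniqueness.} This is the only step with a genuine obstacle, namely making the uniqueness claim precise. Suppose $\mathbf{Q}$ is any Hermitian idempotent with $\mathcal{R}(\mathbf{Q})=\mathcal{R}(\mathbf{G})$.
\begin{itemize}
\item Idempotence gives $\mathbf{Q}\mathbf{u}=\mathbf{u}$ on its range: writing $\mathbf{u}=\mathbf{Q}\mathbf{v}$, we get $\mathbf{Q}\mathbf{u}=\mathbf{Q}^2\mathbf{v}=\mathbf{u}$.
\item Hermitian symmetry gives $\ker(\mathbf{Q})=\mathcal{R}(\mathbf{Q}^{\rm H})^\perp=\mathcal{R}(\mathbf{G})^\perp$.
\end{itemize}
Hence $\mathbf{Q}$ agrees with $\mathbf{P}$ on both summands of the direct sum, so $\mathbf{Q}=\mathbf{P}$, which establishes uniqueness.
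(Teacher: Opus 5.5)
Your proof is correct and complete, apart from the trivial case $r=0$, where $\mathbf{G}^\dagger=\mathbf{0}$ and $\mathbf{G}\mathbf{G}^\dagger=\mathbf{0}$ is the projector onto $\{\mathbf{0}\}$. The paper gives no proof of its own; it imports the fact from \cite[Section~6C]{AxlerLADR2024}, where the route differs from yours. There the pseudoinverse is \emph{defined} coordinate-free as $\mathbf{G}^\dagger \triangleq (\mathbf{G}|_{\ker(\mathbf{G})^\perp})^{-1}\mathbf{P}_{\mathcal{R}(\mathbf{G})}$: project onto the range, then invert $\mathbf{G}$ restricted to the orthogonal complement of its null space. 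With that definition, $\mathbf{G}\mathbf{G}^\dagger=\mathbf{P}_{\mathcal{R}(\mathbf{G})}$ falls out immediately. Uniqueness is built into the definition of the orthogonal projection through $\mathbb{C}^D=\mathcal{R}(\mathbf{G})\oplus\mathcal{R}(\mathbf{G})^\perp$, and idempotence and Hermitian symmetry are generic properties of orthogonal projections. You instead start from the matrix Moore--Penrose pseudoinverse via the compact SVD, obtain the explicit form $\mathbf{U}_r\mathbf{U}_r^{\rm H}$, and verify every property by direct computation, including uniqueness among Hermitian idempotents with the prescribed range. The cited route needs only the orthogonal decomposition and the injectivity of $\mathbf{G}$ on $\ker(\mathbf{G})^\perp$, with no SVD existence theorem. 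However, it leaves implicit that this operator is the same matrix pseudoinverse the paper actually computes. Your route works with that object directly and makes explicit, as you rightly note, that the full-column-rank expression in~\eqref{eq:weights-opt-no-reg} is only a special case. That generality is what the paper needs: Appendix~\ref{appendix:monotonicity-proof} applies the statement to the sketched matrices $\tilde{\mathbf{G}}_{M_k}\in\mathbb{C}^{D\times M_k}$ with no rank assumption stated, and your argument does not depend on either the column dimension or the rank.
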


\begin{proposition}[Best Approximation Property of Orthogonal Projections~{\normalfont\cite[Section~6C]{AxlerLADR2024}}]\label{prop:best-approximation}
Let $\mathcal{V} \subseteq \mathbb{C}^D$ be a linear subspace and let $\mathbf{P}_{\mathcal{V}} \in \mathbb{C}^{D \times D}$ be the unique orthogonal projection matrix onto $\mathcal{V}$. For any vector $\mathbf{u} \in \mathbb{C}^D$, the projection $\mathbf{P}_{\mathcal{V}} \mathbf{u}$ minimizes the Euclidean distance from $\mathbf{u}$ to $\mathcal{V}$
$\|\mathbf{u} - \mathbf{P}_{\mathcal{V}} \mathbf{u}\|_2 = \min_{\mathbf{y} \in \mathcal{V}} \|\mathbf{u} - \mathbf{y}\|_2$.
\end{proposition}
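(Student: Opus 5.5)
The plan is a Pythagorean decomposition of the residual. Fix $\mathbf{u}\in\mathbb{C}^D$ and an arbitrary $\mathbf{y}\in\mathcal{V}$, and split
\begin{equation*}
\mathbf{u}-\mathbf{y} = \left(\mathbf{u}-\mathbf{P}_{\mathcal{V}}\mathbf{u}\right) + \left(\mathbf{P}_{\mathcal{V}}\mathbf{u}-\mathbf{y}\right).
\end{equation*}
The second term lies in $\mathcal{V}$, because $\mathbf{P}_{\mathcal{V}}\mathbf{u}\in\mathcal{V}$ by definition of a projection onto $\mathcal{V}$ and $\mathcal{V}$ is a linear subspace. The argument then reduces to showing that the first term is orthogonal to every element of $\mathcal{V}$.

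The orthogonality step is the only one that needs care, and I expect it to be the main (if mild) obstacle. I would use the defining properties of the orthogonal projection, as in Proposition~\ref{prop:pseudoinverse-properties}: $\mathbf{P}_{\mathcal{V}}^{\rm H}=\mathbf{P}_{\mathcal{V}}$, $\mathbf{P}_{\mathcal{V}}^2=\mathbf{P}_{\mathcal{V}}$, and $\mathbf{P}_{\mathcal{V}}\mathbf{v}=\mathbf{v}$ for every $\mathbf{v}\in\mathcal{V}$. For any $\mathbf{v}\in\mathcal{V}$ these give
\begin{equation*}
\mathbf{v}^{\rm H}\left(\mathbf{I}-\mathbf{P}_{\mathcal{V}}\right)\mathbf{u}
= \left(\mathbf{P}_{\mathcal{V}}\mathbf{v}\right)^{\rm H}\left(\mathbf{I}-\mathbf{P}_{\mathcal{V}}\right)\mathbf{u}
= \mathbf{v}^{\rm H}\left(\mathbf{P}_{\mathcal{V}}-\mathbf{P}_{\mathcal{V}}^2\right)\mathbf{u}
= 0.
\end{equation*}
Hence $\mathbf{u}-\mathbf{P}_{\mathcal{V}}\mathbf{u}\in\mathcal{V}^{\perp}$.

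With orthogonality in hand, I would expand the squared norm of the sum. Both cross terms vanish by the identity above, which leaves
\begin{equation*}
\|\mathbf{u}-\mathbf{y}\|_2^2 = \|\mathbf{u}-\mathbf{P}_{\mathcal{V}}\mathbf{u}\|_2^2 + \|\mathbf{P}_{\mathcal{V}}\mathbf{u}-\mathbf{y}\|_2^2 \ge \|\mathbf{u}-\mathbf{P}_{\mathcal{V}}\mathbf{u}\|_2^2 .
\end{equation*}
This bound holds for every $\mathbf{y}\in\mathcal{V}$, and equality is attained at $\mathbf{y}=\mathbf{P}_{\mathcal{V}}\mathbf{u}$, which itself belongs to $\mathcal{V}$. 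Therefore the infimum over $\mathcal{V}$ is a minimum and equals $\|\mathbf{u}-\mathbf{P}_{\mathcal{V}}\mathbf{u}\|_2$, as claimed. As a by-product, the decomposition shows the minimizer is unique: equality forces $\|\mathbf{P}_{\mathcal{V}}\mathbf{u}-\mathbf{y}\|_2=0$, that is, $\mathbf{y}=\mathbf{P}_{\mathcal{V}}\mathbf{u}$.
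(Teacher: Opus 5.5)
Your proof is correct and is essentially the standard argument. The paper gives no proof of its own and simply cites Axler's Section~6C, which proves the result by this same Pythagorean decomposition $\mathbf{u}-\mathbf{y}=(\mathbf{u}-\mathbf{P}_{\mathcal{V}}\mathbf{u})+(\mathbf{P}_{\mathcal{V}}\mathbf{u}-\mathbf{y})$, with the first term in $\mathcal{V}^{\perp}$ and the second in $\mathcal{V}$; your explicit check of the orthogonality via $\mathbf{P}_{\mathcal{V}}^{\rm H}=\mathbf{P}_{\mathcal{V}}=\mathbf{P}_{\mathcal{V}}^{2}$ and your uniqueness remark are both sound.
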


\section{Proof of Proposition \ref{prop:universal-approximation}}\label{appendix:universal-approximation-proof}

To show that~\eqref{eq:LS-objective} admits a zero-error solution, it suffices to prove that $\det(\mathbf{G}) \neq 0$ with probability $1$. For this goal, we define the function $f_{\mathcal{D}}(\boldsymbol{\theta})\triangleq \det(\mathbf{G})$ on the domain $\mathcal{I} \subset \mathbb{R}^{2N_{\rm r} N_{\rm t} + N_{\rm r}}$, which is parametrized by the random coefficients of the channel and the biases (that affect the measurement matrix $\mathbf{G}$), for a given arbitrary data set $\mathcal{D}$. The definitions, propositions, and lemmas of the previous Appendix~\ref{appendix:definitions} are used. In particular, we make use of Proposition~\ref{prop:analytic-zero-set}
stating that, for a real analytic function on an open connected domain in which this function is not identically zero (i.e., $\mathcal{J} \triangleq \{ \boldsymbol{\theta} \in \mathcal{I}: f_{\mathcal{D}}(\boldsymbol{\theta}) \neq 0 \} \neq \emptyset$), the zero set $\mathcal{Z} \triangleq \mathcal{I} \setminus \mathcal{J}$ has zero Lebesgue measure
(see Definitions~\ref{def:real-analytic-function},~\ref{def:open-connected-domain},~\ref{def:zero-set} and~\ref{def:lebesgue-measure}), i.e., the set of random weights where $\det(\mathbf{G}) = 0$ is negligibly small compared to the full space of random parameters, irrespective of the chosen data set. 
Note that, despite $\boldsymbol{\theta}$ containing complex quantities, due to the isomorphism property between $\mathbb{C}$ and $\mathbb{R}^{2}$ (Definition~\ref{def:complex-real-isomorphism}), it suffices to examine $f_D: \mathcal{I} \to \mathbb{R}$ as a real function in $\mathbb{R}^{2N_{\rm r}N_{\rm t} + N_{\rm r}} \to \mathbb{R}$, thereby simplifying the rest of the derivations needed.

Since $f_{\mathcal{D}}(\boldsymbol{\theta})$ is a composition of real analytic functions, this function is indeed real analytic on $\mathcal{I}$ (Lemma~\ref{lem:analytic-composition}).
In fact, all of \eqref{eq:transmit-signal},~\eqref{eq:received-signal},~\eqref{eq:softplus-activation},~\eqref{eq:output} and $\det(\cdot)$ involve real analytic functions (Propositions~\ref{prop:elementary-analytic} and~\ref{prop:determinant-analyticity}).
Moving to the precise definition of $\mathcal{I}$, it needs to be defined to exclude points where the modulus inside~\eqref{eq:softplus-activation} equals zero, i.e., $ \mathcal{I} = \{ \mathbf{H}, \mathbf{b}:|[ \mathbf{H}\bar{\mathbf{x}}]_j| \neq 0, [\mathbf{b}]_j > 0\, \forall j=1,\dots,N_{\rm r} \}$ and must be connected. Each of the excluded subsets $\mathcal{I}^0_j \triangleq \{ [\mathbf{H}]_{j,:} : |[ \mathbf{H}\bar{\mathbf{x}}]_j| = 0 \}$ has an intrinsic dimension $N_{\rm r}-1$, and its elements therefore have zero probability of being sampled from the Ricean model; this happens since $\mathcal{I}^0_j$ has zero Lebesgue measure and the complex Gaussian distribution is absolutely continuous with respect to that measure (Definition~\ref{def:absolute-continuity} and Proposition~\ref{prop:measure-zero-prob}). Toward ensuring connectedness, first, the condition $[\mathbf{b}]_j>0$ inherently defines a convex, connected open region by Proposition~\ref{prop:convex-connected}.
Next, the excluded subsets $\mathcal{I}_j^{0}$ are defined by a complex linear constraint, which imposes two simultaneous real constraints (Proposition~\ref{prop:complex-codim}).
Consequently, each $\mathcal{I}_j^{0}$ constitutes a subspace of real codimension $2$ (Definition~\ref{def:real-codimension}) within the isomorphic real space. Since removing a subspace of real codimension $2$ or greater from a connected open set preserves its connectedness (Proposition~\ref{prop:codim-connectedness}), the domain $\mathcal{I}$ remains connected.

Finally, it remains to show that $\exists \boldsymbol{\theta} \in \mathcal{I}$ such that $f_{\mathcal{D}}(\boldsymbol{\theta}) \neq 0$.
Consider $\mathcal{I}^{\rm Re} \triangleq \{ \boldsymbol{\theta} \in \mathcal{I} : \mathfrak{Im}(\mathbf{H}) = \mathbf{0} \}$ and suppose without loss of generality that the selected phases ${\boldsymbol \psi}$ of the considered \gls{AM} data signals are set to $\mathbf{0}$.
For this case, the presented XL MIMO system is reduced to a standard real-valued \gls{ELM}, with a linear hidden layer and a continuous, non-polynomial activation function.
It therefore fulfills the conditions for universal interpolation~\cite[Theorem 2.2]{HuaZhuSie:06}), and consequently satisfies $\| \mathbf{z} - \mathbf{G} \mathbf{w} \|_2^{2} = 0$, implying $\det(\mathbf{G}) \neq 0$ on $\mathcal{I}^{\rm Re}$.

To sum up, $f_{\mathcal{D}}$ is a real analytic function, defined on an open and connected domain $\mathcal{I} = \mathbb{R}^{2N_{\rm r}N_{\rm t} + N_{\rm r}} \setminus \mathcal{I}^0 \triangleq \cup_{j=1}^{N_{\rm r}} \mathcal{I}_j^0 $, its zero set $\mathcal{I}^0$ has a zero Lebesgue measure, and in $\mathcal{I}^{\rm Re} \subset \mathcal{I}$, $f_{\mathcal{D}}(\boldsymbol{\theta}) \neq 0$ holds.
Then, by Proposition~\ref{prop:analytic-zero-set}, $f_{\mathcal{D}} \neq 0$ almost everywhere, therefore, matrix $\mathbf{G}$ in~\eqref{eq:LS-objective} is invertible, concluding the proof.

\section{Proof of Proposition \ref{prop:monotonicity}}\label{appendix:monotonicity-proof}

Let $1 \le M_1 \le M_2 \le N_{\rm r}$ denote nested sketch dimensions. We also define the measurement matrices $\mathbf{A}_{M_1}^{\rm H} \in \mathbb{C}^{N_{\rm r} \times M_1}$ and $\mathbf{A}_{M_2}^{\rm H} \in \mathbb{C}^{N_{\rm r} \times M_2}$, 
where the latters is formed by appending $M_2 - M_1$ additional measurement directions $\Delta\mathbf{A}^{\rm H} \in \mathbb{C}^{N_{\rm r} \times (M_2 - M_1)}$ to former 
such that $\mathbf{A}_{M_2}^{\rm H} = \begin{bmatrix} \mathbf{A}_{M_1}^{\rm H} & \Delta\mathbf{A}^{\rm H} \end{bmatrix}$. 
The corresponding sketched activation matrices are $\tilde{\mathbf{G}}_{M_1} \triangleq \mathbf{G} \mathbf{A}_{M_1}^{\rm H} \in \mathbb{C}^{D \times M_1}$ and $\tilde{\mathbf{G}}_{M_2} \triangleq \mathbf{G} \mathbf{A}_{M_2}^{\rm H} = \begin{bmatrix} \tilde{\mathbf{G}}_{M_1} & \mathbf{G} \Delta\mathbf{A}^{\rm H} \end{bmatrix} \in \mathbb{C}^{D \times M_2}$. For any vector $\mathbf{y}_1 \in \mathcal{R}(\tilde{\mathbf{G}}_{M_1})$ (Definition~\ref{def:matrix-range} in Appendix~\ref{appendix:definitions}), there exists another vector $\mathbf{v}_1 \in \mathbb{C}^{M_1 \times 1}$ such that $\mathbf{y}_1 = \tilde{\mathbf{G}}_{M_1} \mathbf{v}_1$. Constructing the augmented vector $\mathbf{v}_2 = \begin{bmatrix} \mathbf{v}_1^{\top} & \mathbf{0}^{\top} \end{bmatrix}^{\top} \in \mathbb{C}^{M_2 \times 1}$ yields:
\begin{equation*}
    \tilde{\mathbf{G}}_{M_2} \mathbf{v}_2 = \begin{bmatrix} \tilde{\mathbf{G}}_{M_1} & \mathbf{G} \Delta\mathbf{A}^{\rm H} \end{bmatrix} \begin{bmatrix} \mathbf{v}_1 \\ \mathbf{0} \end{bmatrix} = \tilde{\mathbf{G}}_{M_1} \mathbf{v}_1 = \mathbf{y}_1 \in \mathcal{R}(\tilde{\mathbf{G}}_{M_2}).
\end{equation*}
This establishes the following subspace containment chain:
\begin{equation}
    \mathcal{R}(\tilde{\mathbf{G}}_{M_1}) \subseteq \mathcal{R}(\tilde{\mathbf{G}}_{M_2}) \subseteq \mathcal{R}(\mathbf{G}).
\end{equation}

By Proposition~\ref{prop:pseudoinverse-properties} in Appendix~\ref{appendix:definitions}, the sketched prediction vector $\hat{\mathbf{z}}_{\text{sketch}}(M_k) = \tilde{\mathbf{G}}_{M_k} \tilde{\mathbf{G}}_{M_k}^\dagger \mathbf{z}$ from~\eqref{eq:sketch-output} represents the orthogonal projection of the target vector $\mathbf{z}$ onto $\mathcal{R}(\tilde{\mathbf{G}}_{M_k})$ for $k \in \{1, 2\}$.
Since $\mathcal{R}(\tilde{\mathbf{G}}_{M_k}) \subseteq \mathcal{R}(\mathbf{G})$ and the unsketched prediction vector $\hat{\mathbf{z}}_{\text{LS}} = \mathbf{G} \mathbf{G}^\dagger \mathbf{z}$ is the orthogonal projection of $\mathbf{z}$ onto $\mathcal{R}(\mathbf{G})$, projecting $\hat{\mathbf{z}}_{\text{LS}}$ onto $\mathcal{R}(\tilde{\mathbf{G}}_{M_k})$ yields $\hat{\mathbf{z}}_{\text{sketch}}(M_k)$.
Applying the best approximation property of orthogonal projections (Proposition~\ref{prop:best-approximation} in Appendix~\ref{appendix:definitions}), $\hat{\mathbf{z}}_{\text{sketch}}(M_k)$ minimizes the Euclidean distance from $\hat{\mathbf{z}}_{\text{LS}}$ to $\mathcal{R}(\tilde{\mathbf{G}}_{M_k})$, i.e.:
\begin{equation} \label{eq:M1_M2_min}
    \left\| \hat{\mathbf{z}}_{\text{LS}} - \hat{\mathbf{z}}_{\text{sketch}}(M_k) \right\|_2 = \min_{\mathbf{y} \in \mathcal{R}(\tilde{\mathbf{G}}_{M_k})} \left\| \hat{\mathbf{z}}_{\text{LS}} - \mathbf{y} \right\|_2.
\end{equation}

Owing to the subspace inclusion $\mathcal{R}(\tilde{\mathbf{G}}_{M_1}) \subseteq \mathcal{R}(\tilde{\mathbf{G}}_{M_2})$, minimizing over the superset $\mathcal{R}(\tilde{\mathbf{G}}_{M_2})$ cannot increase the achieved minimum distance, i.e., it holds:
\begin{equation}\label{eq:proj-min-distance-inneq}
    \min_{\mathbf{y} \in \mathcal{R}(\tilde{\mathbf{G}}_{M_2})} \left\| \hat{\mathbf{z}}_{\text{LS}} - \mathbf{y} \right\|_2 \le \min_{\mathbf{y} \in \mathcal{R}(\tilde{\mathbf{G}}_{M_1})} \left\| \hat{\mathbf{z}}_{\text{LS}} - \mathbf{y} \right\|_2.
\end{equation}
Substituting \eqref{eq:M1_M2_min} into \eqref{eq:proj-min-distance-inneq} directly yields:
\begin{equation}
    \left\| \hat{\mathbf{z}}_{\text{LS}} - \hat{\mathbf{z}}_{\text{sketch}}(M_2) \right\|_2 \le \left\| \hat{\mathbf{z}}_{\text{LS}} - \hat{\mathbf{z}}_{\text{sketch}}(M_1) \right\|_2,
\end{equation}
which completes the proof.

\section{Proof of Proposition \ref{prop:closed-form-MS}}\label{appendix:closed-form-MS-proof}
Let $Q(\boldsymbol{\phi}_L)\in\mathbb{C}^{1\times N_{\rm r}}$ represent the transfer function between the front-end \gls{NL} \gls{MS} layer, indexed with $l=0$, and the single antenna at the RX side of the proposed XL-MIMO-ELM system, which can be expressed as a function of the elements responses of the final $L$-indexed \gls{MS} layer of the linear \gls{SIM} as follows: 
\begin{equation} \label{eq:sim-response}
    Q(\boldsymbol{\phi}_L) \triangleq \boldsymbol{\omega} \mathbf{\Phi}_L \mathbf{\Omega}.
\end{equation}
Suppose also that the response of each $n$-th element ($n=1,\ldots,N_{\rm r}$) of the final \gls{MS} layer is constrained as $|\alpha^L_n| \le 1$ (phase-only, passive response). Following Section~\ref{sec:NL_SIM}, the output of the \gls{NL} \gls{MS} layer $g(\mathbf{y})$ 
propagates through the linear \gls{SIM}, as described in \eqref{eq:sim-response}, reaching the single RX antenna; this operation that can be mathematically expressed as $Q(\boldsymbol{\phi}_L)g(\mathbf{y})$.

Our objective is to realize an arbitrary linear functional of $g(\mathbf{y})$ via the \gls{SIM}, defined by the combining vector $\mathbf{w}$, such that:
\begin{equation}\label{eq:z_target}
    z_{\text{target}} \triangleq \sum_{n=1}^{N_{\rm r}} [\mathbf{w}]^\ast_n [g(\mathbf{y})]_n = \mathbf{w}^{\rm H} g(\mathbf{y}).
\end{equation}
Allowing for an overall real-valued scaling factor $\gamma > 0$, we specifically seek the conditions under which $Q(\boldsymbol{\phi}_L) = \gamma \mathbf{w}^{\rm H}$. Expanding this equality using~\eqref{eq:sim-response} yields:
\begin{align} \label{eq:linear_function1}
    \boldsymbol{\omega} \mathbf{\Phi}_L \mathbf{\Omega} = \gamma \mathbf{w}^{\rm H} \!\!\!\implies\!\!\! \boldsymbol{\omega} \mathbf{\Phi}_L \mathbf{\Omega} g(\mathbf{y})
    \!\overset{(b)}{=} \! \sum_{n=1}^{N_{\rm r}} [\boldsymbol{\omega}]_n [\boldsymbol{\phi}_L]_n [\mathbf{s}]_n,
\end{align}
where (b) uses the definition $\mathbf{s} \triangleq \mathbf{\Omega} g(\mathbf{y}) \in \mathbb{C}^{N_{\rm r} \times 1}$. Since $\mathbf{\Omega}$ is an invertible full-rank matrix (Proposition~\ref{prop:sim-full-rank} and its proof in the following Appendix~\ref{appendix:sim-full-rank-proof}), there exists an one-to-one correspondence between $g(\mathbf{y})$ and $\mathbf{s}$. The target functional in~\eqref{eq:z_target} can thus be equivalently expressed in the $\mathbf{s}$-coordinate domain as follows:
\begin{equation} \label{eq:linear_target}
    z_{\text{target}} = \mathbf{w}^{\rm H} \mathbf{\Omega}^{-1} \mathbf{s} \overset{(c)}{=} \sum_{n=1}^{N_{\rm r}} [\tilde{\mathbf{w}}]^\ast_n [\mathbf{s}]_n,
\end{equation}
where (c) makes use of the definition $\tilde{\mathbf{w}}^{\rm H}\triangleq \mathbf{w}^{\rm H} \mathbf{\Omega}^{-1}$ with $\tilde{\mathbf{w}}^{\rm H} \in \mathbb{C}^{1 \times N_{\rm r}}$ acting as the transformed target weight vector. Equating the term-by-term summations of the physical system and the target functional in the $\mathbf{s}$-domain, we require the optimal configuration of the final $L$-indexed \gls{MS} layer, $\boldsymbol{\phi}_L^{\mathbf{w}}$, to satisfy:
\begin{equation}
    [\boldsymbol{\omega}]_n [\boldsymbol{\phi}_L^{\mathbf{w}}]_n = \gamma [\tilde{\mathbf{w}}^{\rm H}]_n\,\,\forall n=1, \dots, N_{\rm r}.
\end{equation}
Since the physical channel from the \gls{SIM} to the single antenna contains no perfect nulls (i.e, $[\boldsymbol{\omega}]_n \neq 0$~$\forall n$), these scalar equations can be solved independently for each element, yielding:
\begin{equation}
    [\boldsymbol{\phi}_L^{\mathbf{w}}]_n = \frac{\gamma [\tilde{\mathbf{w}}]_n}{[\boldsymbol{\omega}]_n} \overset{(c)}{=} \frac{\gamma}{[\boldsymbol{\omega}]_n} \left[ \mathbf{w}^{\rm H} \mathbf{\Omega}^{-1} \right]_n.
\end{equation}

Note that to ensure the solution is physically realizable with passive elements, the constraint $|[\boldsymbol{\phi}_L^*]_n| \le 1$ must hold $\forall n$. Let $\zeta \triangleq \max_n \left( \left| \frac{1}{[\boldsymbol{\omega}]_n} \left[ \mathbf{w}^{\rm H} \mathbf{\Omega}^{-1} \right]_n \right| \right)$. The scaling factor $\gamma$ must, therefore, be chosen such that $\gamma \le \frac{1}{\zeta}$. Under this condition, for any desired coefficient vector $\mathbf{w}^{\rm H}$, the phase settings of the final $L$-indexed linear MS layer at the proposed RX architecture are given analytically by the configuration stated in the proposition, confirming that the physical layer can exactly compute the desired ELM functional in the wave domain, scaled by $\gamma$, successfully bridging the theoretical target with the actual system output given by~\eqref{eq:output}.

\section{Proof of Proposition \ref{prop:sim-full-rank}}\label{appendix:sim-full-rank-proof}
The $N_{\rm r} \times N_{\rm r}$ propagation matrix $\mathbf{\Omega}$ is full rank if and only if its determinant is non-zero. Using the multiplicativity property of the determinant, $\mathbf{\Omega}$'s determinant can be expressed as follows:
\begin{equation}
    \det(\mathbf{\Omega}) = \det(\mathbf{\Psi}_L) \prod_{l=1}^{L-1} \det(\mathbf{\Phi}_l) \det(\mathbf{\Psi}_l).
\end{equation}
Consequently, $\mathbf{\Omega}$ is non-singular if and only if $\det(\mathbf{\Phi}_l) \neq 0$ and $\det(\mathbf{\Psi}_l) \neq 0$ $\forall$~$l=1,\dots,L-1$. Let us first consider the configuration matrix $\mathbf{\Phi}_l$ at the $l$-th \gls{MS} layer, defined as:
\begin{equation}
    \mathbf{\Phi}_l\!=\!\operatorname{diag}\left(\boldsymbol{\phi}_l\right)\!=\! 
    \operatorname{diag}\left(\left[|\alpha_1^l|e^{-j\vartheta_1^l}, \dots, |\alpha_{N_{\rm r}}^l|e^{-j\vartheta_{N_{\rm r}}^l}\right]\right)\!.
\end{equation}
Its determinant is easily computed as follows:
\begin{equation}
    \det(\mathbf{\Phi}_l) = \prod_{n=1}^{N_{\rm r}} |\alpha_n^l| e^{-j\vartheta_n^l}.
\end{equation}
Given that the elements responses $\{\boldsymbol{\phi}_l\}_{l=1}^{L-1}$ are arbitrarily sampled from an absolutely continuous distribution whose support strictly excludes zero for the amplitudes ($|\alpha_n^l| > 0$), it follows that $|\det(\mathbf{\Phi}_l)| > 0$ almost surely.

Next, let $d_M > 0$ denote the strictly positive inter-\gls{MS}-layer distance, $\mathbf{r}_{n'}^{(l-1)} \triangleq [x'_{n'}, y'_{n'}]^\top$ the fixed spatial positions on layer $l-1$, and $\mathbf{p} \triangleq [x_1, y_1, \dots, x_{N_{\rm r}}, y_{N_{\rm r}}]^\top \in \mathbb{R}^{2N_{\rm r}}$ include the coordinates of all elements on layer $l$. The Euclidean distance between element $n'$ on layer $l-1$ and element $n$ on layer $l$ is:
\begin{equation}
    d_{n,n'}(\mathbf{p}) \triangleq \sqrt{\|\mathbf{r}_n^{(l)} - \mathbf{r}_{n'}^{(l-1)}\|^2 + d_M^2}.
\end{equation}
which can be used in~\eqref{eq:rayleigh-somerfeld} to explicitly express the propagation matrix at this layer as a function of $\mathbf{p}$, $[\mathbf{\Psi}_l]_{n,n'}(\mathbf{p})$. Note that $d_{n,n'}(\mathbf{p})$ is a real-analytic function of $\mathbf{p}$ on the open, connected domain $\mathbb{R}^{2N_{\rm r}}$, and that, since $d_M > 0$, the radicand is strictly positive $\forall \mathbf{p} \in \mathbb{R}^{2N_{\rm r}}$. 
Furthermore, using Lemma~\ref{lem:analytic-composition} and Proposition~\ref{prop:elementary-analytic}, the composition of this distance function with the rational, algebraic, and complex exponential terms preserves analyticity. Therefore, the entry $[\mathbf{\Psi}_l]_{n,n'}(\mathbf{p})$ is real-analytic on $\mathbb{R}^{2N_{\rm r}}$. By Proposition~\ref{prop:determinant-analyticity}, the determinant $h(\mathbf{p}) \triangleq \det(\mathbf{\Psi}_l(\mathbf{p}))$ is a multivariate polynomial in its entries, implying that $h: \mathbb{R}^{2N_{\rm r}} \to \mathbb{C}$ is real-analytic on $\mathbb{R}^{2N_{\rm r}}$.

To verify that $h(\mathbf{p})$ is not identically zero, we analyze the asymptotic behavior of a parametrized coordinate family $\mathbf{p}(\delta)$ as the \gls{MS} element's pitch $\delta \to \infty$. To this end, let $\mathbf{r}_n^{(l)}(\delta) \triangleq \mathbf{r}_n^{(l-1)} + \delta \mathbf{v}_n$, where $\mathbf{v}_n$ is a distinct two-dimensional unit directional vector. For $\mathbf{\Psi}_l$'s diagonal entries where $n' = n$, we have $\mathbf{v}_n - \mathbf{v}_{n'} = 0$, which yields purely longitudinal propagation such that $d_{n,n}(\mathbf{p}(\delta)) = d_M$. Consequently, the diagonal entries converge to a non-zero constant $C$, which is given by:
\begin{equation}
    C\triangleq [\mathbf{\Psi}_l]_{n,n} = \frac{S_M}{d_M} \left(\frac{1}{2\pi d_M} - \frac{\jmath}{\lambda}\right) e^{\frac{\jmath2\pi}{\lambda} d_M} \neq 0.
\end{equation}
Conversely, for $\mathbf{\Psi}_l$'s off-diagonal entries where $n' \neq n$, the distance scales as $d_{n,n'}(\mathbf{p}(\delta)) \ge \delta \|\mathbf{v}_n - \mathbf{v}_{n'}\|$, which approaches infinity as $\delta \to \infty$. Thus, $\lim_{\delta \to \infty} [\mathbf{\Psi}_l]_{n,n'}(\delta) = 0$ $\forall n' \neq n$. Hence, evaluating the limit of the determinant yields:
\begin{equation}
    \lim_{\delta \to \infty} \det(\mathbf{\Psi}_l(\mathbf{p}(\delta))) = \det\left( C \mathbf{I}_{N_{\rm r}} \right) = C^{N_{\rm r}} \neq 0.
\end{equation}
Since the limit is non-zero, the continuous function $h(\mathbf{p})$ cannot be identically zero on $\mathbb{R}^{2N_{\rm r}}$. 

Having established that $h(\mathbf{p})$ is real analytic on the open connected domain $\mathbb{R}^{2N_{\rm r}}$ and that is not identically zero, we invoke Proposition~\ref{prop:analytic-zero-set}. The zero set, defined as $\mathcal{Z} = \{\mathbf{p} \in \mathbb{R}^{2N_{\rm r}} \mid \det(\mathbf{\Psi}_l(\mathbf{p})) = 0\}$, has Lebesgue measure zero. Consequently, for almost all physical configurations of the spatial coordinates, $\det(\mathbf{\Psi}_l) \neq 0$ holds. Combining this with the almost sure non-singularity of $\mathbf{\Phi}_l$, it follows that $\det(\mathbf{\Omega}) \neq 0$ almost surely, proving that $\operatorname{rank}(\mathbf{\Omega}) = N_{\rm r}$.

\end{appendices}

\balance

\bibliographystyle{IEEEtran}
\bibliography{references}

@ARTICLE{Abb:26,
  author={Abbas, Omran and others},
  journal={IEEE Wireless Commun.}, 
  title={Nonlinear Stacked Intelligent Surfaces for Wireless Systems}, 
  year={early access, 2026}
  }

@ARTICLE{9475155,
  author={Alexandropoulos, George C. and others},
  journal={IEEE Commun. Mag.}, 
  title={Reconfigurable Intelligent Surfaces for Rich Scattering Wireless Communications: Recent Experiments, Challenges, and Opportunities}, 
  year={2021},
  volume={59},
  number={6},
  pages={28-34}
  }

@book{AlexandropoulosRIS,
  author    = {Alexandropoulos, G. C. and others},
  title     = {Reconfigurable Intelligent Surfaces for Wireless Communications: Modeling, Architectures, and Applications},
  publisher = {Springer Singapore},
  year      = {2026}
}

@ARTICLE{Nin:25,
  author={Ning, Yu Ming and others},
  journal={Nat. Commun.}, 
  title={Multilayer nonlinear diffraction neural networks with programmable and fast {ReLU} activation function}, 
  year={2025},
  volume={16},
  number={10332},
  pages={1-13},
  doi={https://doi.org/10.1038/s41467-025-65275-0}
  }

@INPROCEEDINGS{Abr:26,
  author={Abrardo, Andrea and Toccafondi, Alberto},
  booktitle={Proc. IEEE Int. Mediterranean Conf. Commun. Netw.}, 
  title={Stacked Intelligent Metasurfaces {(SIM)} in the Nonlinear Regime: A Multiport Network Model Approach}, 
  year={2026},
  address={Cagliari, Italy}
  }

@ARTICLE{DarTorPasDec:J26,
  author={Dardari, Davide and others},
  journal={IEEE Signal Process. Mag.}, 
  title={Over-the-Air Electromagnetic Signal Processing: From fundamentals to emerging paradigms}, 
  year={2026},
  volume={43},
  number={1},
  pages={6-28},
  doi={10.1109/MSP.2025.3622001},
  ISSN={1558-0792}
  }

@book{Maas2003,
  author    = {Stephen A. Maas},
  title     = {Nonlinear Microwave and RF Circuits},
  edition   = {2nd},
  publisher = {Artech House},
  address   = {Boston, MA, USA},
  year      = {2003}
}

@INPROCEEDINGS{FabTorDar:25,
  author={M. Fabiani and others},
  booktitle={Proc.  Asilomar Conf. Signals, Sys., and Comput.}, 
  title={Nonlinear {EM-based} Signal Processing}, 
  year={2025},
  address={Pacific Grove, USA}
}

@article{Styl:25_Mag,
  title={Metasurfaces-integrated wireless neural networks for lightweight over-the-air edge inference},
  author={K. Stylianopoulos and others},
  journal={IEEE Wireless Commun.},
  volume={3},
  number={4},
  pages={82-90},
  year={2026},
}

@article{Omr:25,
  title={Nonlinear Stacked Intelligent Surfaces for Wireless Systems},
    author={Abbas, Omran and others},
  journal={arXiv preprint arXiv:2510.23780},
  year={2025}
  }

@article{SIM_survey,
  title={A survey on stacked intelligent metasurfaces: Fundamentals, recent advances, and challenges},
    author={C. K. Sheemar and others},
  journal={arXiv preprint arXiv:2603.05633},
  year={2026}
  }

@article{Cyb:89,
  author = {G. Cybenko},
title = {Approximation by superpositions of a sigmoidal function},
journal = {Math. Control Signal Sys.},
volume = {2},
pages = {303–314},
year = {1989}
}

@article{HuaLiCheSie:08,
  author = {Guang-Bin Huang and others},
title = {Incremental extreme learning machine with fully complex hidden nodes},
journal = {Neurocomput.},
volume = {71},
number = {4},
pages = {576-583},
year = {2008},
issn = {0925-2312},
}

@article{LiHuaSarSun:05,
 author = {Ming-Bin Li and others},
title = {Fully complex extreme learning machine},
journal = {Neurocomput.},
volume = {68},
pages = {306-314},
year = {2005},
}

@article{HuaZhuSie:06,
author = {Guang-Bin Huang and others},
title = {Extreme learning machine: {T}heory and applications},
journal = {Neurocomput.},
volume = {70},
number = {1},
pages = {489-501},
year = {2006}
}

@article{Stylianopoulos_MIMO_ELM,
    author = {Stylianopoulos, K. and Alexandropoulos, G. C.},
    title = {Universal Approximation with {XL} {MIMO} Systems: {OTA} Classification via Trainable Analog Combining},
    journal = {arXiv preprint:2504.12758},
    year = {2025}
}

@INPROCEEDINGS{Rapp_model,
       author = {{Rapp}, Christoph},
        title = "{Effects of {HPA}-nonlinearity on a {4-DPSK}/{OFDM}-signal for a digital sound broadcasting signal}",
    booktitle = {ESA Special Publications Series},
         year = 1991,
       editor = {{Kaldeich}, Brigitte},
       volume = {332},
        pages = {179-184},
}

@article{XYN18_Diffractive_DNN,
author = {Xing Lin and others},
title = {All-optical machine learning using diffractive deep neural networks},
journal = {Science},
volume = {361},
number = {6406},
pages = {1004-1008},
year = {2018},
}

@ARTICLE{Stylianopoulos_GO,
  author={Stylianopoulos, Kyriakos and others},
  journal={IEEE Trans. Wireless Commun.}, 
  title={Over-the-Air Edge Inference via End-to-End Metasurfaces-Integrated Artificial Neural Networks}, 
  year={2026},
  volume={25},
  number={},
  pages={13818-13834}
  }

@ARTICLE{AXN23_SIM,
  author={An, Jiancheng and others},
  journal={IEEE J. Sel. Areas Commun.}, 
  title={Stacked Intelligent Metasurfaces for Efficient Holographic {MIMO} Communications in {6G}}, 
  year={2023},
  volume={41},
  number={8},
  pages={2380-2396},
}

@misc{breast_cancer_dataset,
  author={William Nick Street and others},
  title        = {Diagnostic {Wisconsin} Breast Cancer Database},
  howpublished = {{UC} {Irvine} Machine Learning Repository},
  note         = {doi: 10.24432/C5DW2B},
  year={2008}
}

@article{MNIST,
  title={The {MNIST} database of handwritten digit images for machine learning research},
  author={Deng, Li},
  journal={IEEE Signal Process. Mag.},
  volume={29},
  number={6},
  pages={141--142},
  year={2012},
  publisher={IEEE}
}

@ARTICLE{Parkinsons_dataset,
  author={M. A. Little and others},
  journal={IEEE Trans. Biomed. Eng.}, 
  title={Suitability of Dysphonia Measurements for Telemonitoring of {P}arkinson's Disease}, 
  year={2009},
  volume={56},
  number={4},
  pages={1015-1022},
}

@article{GJZ24_SIM_TOC,
      title={Stacked Intelligent Metasurfaces for Task-Oriented Semantic Communications}, 
      author={Guojun Huang and others},
      journal={IEEE Wireless Commun. Lett.},
      year={2025},
      volume={14},
      number={2},
      pages={2310-314}
}

@Article{Momeni2022_wave_DL_computing,
author={Momeni, Ali
and Fleury, Romain},
title={Electromagnetic wave-based extreme deep learning with nonlinear time-Floquet entanglement},
journal={Nature Commun.},
year={2022},
month={May},
day={12},
volume={13},
number={1},
pages={2651},
}

@ARTICLE{6184250,
  author={Matthaiou, Michail and others},
  journal={IEEE Trans. Commun.}, 
  title={Analytic Framework for the Effective Rate of {MISO} Fading Channels}, 
  year={2012},
  volume={60},
  number={6},
  pages={1741-1751}
}

@ARTICLE{GO_review2,
  author={Li, Aimin and others},
  journal={IEEE Wireless Commun.}, 
  title={Toward Goal-Oriented Semantic Communications: {N}ew Metrics, Framework, and Open Challenges}, 
  year={2024},
  volume={31},
  number={5},
  pages={238-245},
  }

@ARTICLE{DeepSC,
  author={Xie, Huiqiang and Qin, Zhijin and Li, Geoffrey Ye and Juang, Biing-Hwang},
  journal={IEEE Trans. Signal Process.}, 
  title={Deep Learning Enabled Semantic Communication Systems}, 
  year={2021},
  volume={69},
  number={},
  pages={2663-2675},
}

@ARTICLE{OTA_review,
  author={Şahin, Alphan and Yang, Rui},
  journal={IEEE Commun. Surveys \& Tuts.}, 
  title={A Survey on Over-the-Air Computation}, 
  year={2023},
  volume={25},
  number={3},
  pages={1877-1908},
}

@article{Gunduz_Layer_Approximation,
      title={Implementing Neural Networks Over-the-Air via Reconfigurable Intelligent Surfaces}, 
      author={Meng Hua and others},
      journal={IEEE Trans. Wireless Commun.},
      year={2026},
      volume={25},
      pages={11562-11576}
}

@INPROCEEDINGS{Pandolfo_SIM_semantic_alignment,
    author = {M. E. Pandolfo and others},
    title = {Over-the-air semantic alignment with stacked intelligent metasurfaces},
    booktitle = {Proc. European Signal Process. Conf.},
    address={Bruges, Belgium},
    year = {2026}
}

@INPROCEEDINGS{AYG24_SIM_DFT,
  author={An, Jiancheng and others},
  booktitle={Proc. IEEE Int. Commun. Conf.}, 
  title={Stacked Intelligent Metasurface Performs a {2D} {DFT} in the Wave Domain for {DOA} Estimation}, 
  year={2024},
  address={Denver, USA},
}

@Article{Mityagin2020,
author={Mityagin, B. S.},
title={The Zero Set of a Real Analytic Function},
journal={Math. Notes},
year={2020},
month={Mar},
day={01},
volume={107},
number={3},
pages={529-530},
issn={1573-8876},
}

@Article{Wiemann2024_softplus,
author={Wiemann, Paul F. V. and others},
title={Using the softplus function to construct alternative link functions in generalized linear models and beyond},
journal={Stat. Papers},
year={2024},
volume={65},
number={5},
pages={3155-3180}
}

@inproceedings{reus2023airfc,
  title={Air{FC}: Designing fully connected layers for neural networks with wireless signals},
  author={Reus-Muns, Guillem and others},
  booktitle={Proc. ACM MobiHoc},
  pages={71--80},
  year={2023}
}

@article{yang2023over,
  title={Over-the-air split machine learning in wireless {MIMO} networks},
  author={Yang, Yuzhi and others},
  volume={41},
  number={4},
  pages={1007--1022},
  journal={IEEE J. Sel. Areas Commun.},
  year={2022}
}

@article{sanchez2022airnn,
  title={Air{NN}: {O}ver-the-air computation for neural networks via reconfigurable intelligent surfaces},
  author={Sanchez, Sara Garcia and others},
  journal={IEEE Trans. Netw.},
  volume={31},
  number={6},
  pages={2470--2482},
  year={2022},
  publisher={IEEE}
}

@INPROCEEDINGS{zhang2024radio,
  author={Zhang, Jingyuan and others},
  booktitle={Proc. IEEE Veh. Technol. Conf.}, 
  title={A radio-frequency-based 2-{D} convolutional layer using transmissive intelligent surfaces},
  year={2024},
  address={Washington, DC, USA},
}

@article{hua2026cnns,
  title={{CNNs} in the air via reconfigurable intelligent surfaces},
  author={Hua, Meng and others},
  journal={IEEE Wireless Commun. Lett.},
  volume={15}, 
  pages={2124--2128},
  year={2026},
}

@article{liu2025over,
  title={Over-the-air {ODE}-inspired neural network for dual task-oriented semantic communications},
  author={Liu, Mengbing and others},
  journal={IEEE Trans. Cogn. Commun. Netw.}, 
  volume={12},
  pages={805--819},
  year={2025},
}

@article{hammami2026expressivity,
  title={Expressivity of Programmable-Metasurface-Based Physical Neural Networks: {E}ncoding Non-Linearity, Structural Non-Linearity, and Depth},
  author={Cheima Hammami and others},
  journal={arXiv preprint arXiv:2603.13602},
  year={2026}
}

@article{bergel2024non,
  title={Non-linear relay optimization using deep-learning tools},
  author={Bergel, Itsik},
  journal={IEEE Commun. Mag.},
  volume={23},
  number={12},
  pages={19289--19301},
  year={2024},
  publisher={IEEE}
}

@article{hua2026multi,
  title={Multi-layer {MIMO} Relay as Deep Physical Neural Networks: Power Amplifiers as Activation Functions},
  author={Meng Hua and others},
  journal={arXiv preprint arXiv:2607.18354},
  year={2026}
}

@book{KrantzParks2002,
  author    = {Steven G. Krantz and Harold R. Parks},
  title     = {A Primer of Real Analytic Functions},
  edition   = {2},
  series    = {Birkh{\"a}user Advanced Texts},
  publisher = {Birkh{\"a}user},
  address   = {Boston, MA},
  year      = {2002},
  doi       = {10.1007/978-0-8176-8134-0}
}

@book{Munkres2000,
  author    = {James R. Munkres},
  title     = {Topology},
  edition   = {2},
  publisher = {Prentice Hall},
  address   = {Upper Saddle River, NJ},
  year      = {2000},
  isbn      = {978-0-13-181629-9}
}

@book{AxlerLADR2024,
  author    = {Sheldon Axler},
  title     = {Linear Algebra Done Right},
  edition   = {4},
  series    = {Undergraduate Texts in Mathematics},
  publisher = {Springer},
  address   = {Cham, Switzerland},
  year      = {2024},
  doi       = {10.1007/978-3-031-41026-0}
}

@book{AxlerMIRA2020,
  author    = {Sheldon Axler},
  title     = {Measure, Integration \& Real Analysis},
  series    = {Graduate Texts in Mathematics},
  volume    = {282},
  publisher = {Springer},
  address   = {Cham, Switzerland},
  year      = {2020},
  doi       = {10.1007/978-3-030-33143-6}
}

@misc{ConradHyperplaneComplements,
  author       = {Brian Conrad},
  title        = {Connectedness of Hyperplane Complements},
  howpublished = {Math 396 course handout, Stanford University},
  note         = {Available online},
  url          = {https://math.stanford.edu/~conrad/diffgeomPage/handouts/hyperplaneconnd.pdf},
  urldate      = {2026-09-30}
}

@article{Qian2025Nanosecond,
  author  = {Qian, Boyang and others},
  title   = {Nanosecond-Level Time-Domain Coding Metasurface for Radar Signal Generation},
  journal = {Nat. Commun.},
  year    = {2025},
  volume  = {16},
  number  = {1},
  pages   = {10651},
}

@article{Li2017Electromagnetic,
  author  = {Li, Lianlin and others},
  title   = {Electromagnetic Reprogrammable Coding-Metasurface Holograms},
  journal = {Nat. Commun.},
  year    = {2017},
  volume  = {8},
  number  = {1},
  pages   = {197},
}

@INPROCEEDINGS{6G-DISAC-Architecture,
  author={Costanzo, Francesca and others},
  booktitle={Proc. Joint European Conf. Netw. Commun. \& 6G Summit}, 
  title={{DISAC} Architecture: {F}unctions, Semantics, Protocols, Orchestration, and Demonstrator}, 
  year={2026},
  address = {Malaga, Spain}
}

@inproceedings{Our_SPAWC_paper,
  author = {K. Stylianopoulos and others},
  title = {Over-the-Air Extreme Learning Machines with Nonlinear Stacked Intelligent Metasurfaces},
  booktitle = {Proc. IEEE Int. Workshop Signal Process. Adv. Wireless Commun.},
  year = {2026},
  address = {Athens, Greece}
}

@incollection{Alexandropoulos2026ReceivingRIS,
  author    = {Alexandropoulos, George C. and others},
  title     = {Receiving {RISs} for Channel Estimation and Autonomous Configuration},
  booktitle = {Reconfigurable Intelligent Surfaces for Wireless Communications: Modeling, Architectures, and Applications},
  editor    = {Alexandropoulos, George C. and others},
  publisher = {Springer Singapore},
  year      = {2026}
}

@article{Yuan2023DFT,
  author  = {Yuan, Minghao and others},
  title   = {Alternating Optimization Based Hybrid Transceiver Designs for Wideband Millimeter-Wave Massive Multiuser {MIMO-OFDM} Systems},
  journal = {IEEE Trans. Wireless Commun.},
  year    = {2023},
  volume  = {22},
  number  = {12},
  pages   = {9201--9217},
  doi     = {10.1109/TWC.2023.3269056}
}

\end{document}